\documentclass[11pt]{article}
\usepackage[abbrvbib,hyperref]{arxiv-template}

\usepackage[T1]{fontenc}
\usepackage[utf8]{inputenc}
\usepackage[english]{babel}

\usepackage{lineno}

\usepackage{subcaption}

\newdimen\figwidth

\usepackage[thmmarks]{ntheorem}

\theoremheaderfont{\normalfont\bfseries}
\theorembodyfont{\normalfont}
\theoremseparator{.}
\theoremstyle{plain}

\newtheorem{theorem}{Theorem}[section]
\newtheorem{definition}{Definition}

\newtheorem{assumption}{Assumption}

\newtheorem{example}{Example}

\makeatletter

\newcommand{\sharetheoremcounter}[1]{%
  \expandafter\let\csname c@#1\endcsname\c@theorem
  \expandafter\let\csname the#1\endcsname\thetheorem
  \expandafter\let\csname p@#1\endcsname\p@theorem
}

\sharetheoremcounter{definition}
\sharetheoremcounter{lemma}
\sharetheoremcounter{setting}
\sharetheoremcounter{remark}
\sharetheoremcounter{assumption}
\sharetheoremcounter{proposition}
\sharetheoremcounter{corollary}
\sharetheoremcounter{example}

\makeatother

\theoremstyle{nonumberplain}
\theoremheaderfont{\normalfont\bfseries}
\theorembodyfont{\normalfont}
\theoremsymbol{\BlackBox}
\theoremseparator{.}

\makeatletter
\renewtheoremstyle{nonumberplain}%
  {%
    \item[
      \theorem@headerfont
      \hskip\labelsep
      ##1%
      \theorem@separator
    ]%
  }%
  {%
    \item[
      \theorem@headerfont
      \hskip\labelsep
      ##1\ ##3%
      \theorem@separator
    ]%
  }
\makeatother

\newtheorem{proof}{Proof}

\usepackage{thmtools}
\usepackage{thm-restate}

\newif\ifarxiv
\arxivtrue

\newif\ifsupplementary
\supplementaryfalse

\usepackage{tocloft}

\newcommand{\appendixtocname}{%
  Contents of the Appendix%
}

\newlistof{appendices}{app}{\appendixtocname}

\newcommand{\appsection}[1]{%
  \section{#1}%
  \addcontentsline{app}{section}{%
    \protect\numberline{\thesection}#1%
  }%
}

\usepackage[hyperpageref]{backref}

\renewcommand*{\backref}[1]{}

\renewcommand*{\backrefalt}[4]{%
  {\small
    \ifcase #1
      (Not cited.)%
    \or
      (Cited on page~#2.)%
    \else
      (Cited on pages~#2.)%
    \fi
  }%
}
\usepackage{amsmath}
\usepackage{mathtools}
\mathtoolsset{showonlyrefs}

\usepackage{newtxtext}
\usepackage[subscriptcorrection]{newtxmath}

\usepackage{booktabs}
\usepackage{array}
\usepackage{tabularx}
\usepackage{enumitem}

\usepackage{tikz}
\usetikzlibrary{arrows.meta}

\tikzset{
  every node/.style={
    circle,
    draw,
    minimum size=9mm,
    inner sep=0pt
  },
  >={To[length=4pt,width=6pt]}
}

\tikzset{
  highlight/.style={
    preaction={
      draw,
      gray!30,
      line width=12pt,
      line cap=round,
      line join=round,
      shorten <=0.5\pgflinewidth,
      shorten >=0.5\pgflinewidth,
      -
    }
  }
}

\usepackage{zref}
\usepackage{zref-clever}

\zcsetup{
  rangesep={--}
}

\zcRefTypeSetup{theorem}{
  Name-sg = Theorem,
  name-sg = theorem,
  Name-pl = Theorems,
  name-pl = theorems,
}

\zcRefTypeSetup{definition}{
  Name-sg = Definition,
  name-sg = definition,
  Name-pl = Definitions,
  name-pl = definitions,
}

\zcRefTypeSetup{lemma}{
  Name-sg = Lemma,
  name-sg = lemma,
  Name-pl = Lemmas,
  name-pl = lemmas,
}

\zcRefTypeSetup{setting}{
  Name-sg = Setting,
  name-sg = setting,
  Name-pl = Settings,
  name-pl = settings,
}

\zcRefTypeSetup{remark}{
  Name-sg = Remark,
  name-sg = remark,
  Name-pl = Remarks,
  name-pl = remarks,
}

\zcRefTypeSetup{assumption}{
  Name-sg = Assumption,
  name-sg = assumption,
  Name-pl = Assumptions,
  name-pl = assumptions,
}

\zcRefTypeSetup{proposition}{
  Name-sg = Proposition,
  name-sg = proposition,
  Name-pl = Propositions,
  name-pl = propositions,
}

\zcRefTypeSetup{corollary}{
  Name-sg = Corollary,
  name-sg = corollary,
  Name-pl = Corollaries,
  name-pl = corollaries,
}

\zcRefTypeSetup{example}{
  Name-sg = Example,
  name-sg = example,
  Name-pl = Examples,
  name-pl = examples,
}

\usepackage[plain,noend]{algorithm2e}

\makeatletter
\renewcommand{\algocf@captiontext}[2]{%
  #1\algocf@typo. \AlCapFnt{}#2%
}

\def\@algocf@capt@plain{top}

\renewcommand{\algocf@makecaption}[2]{%
  \addtolength{\hsize}{\algomargin}%
  \sbox\@tempboxa{\algocf@captiontext{#1}{#2}}%
  \ifdim\wd\@tempboxa>\hsize
    \hskip .5\algomargin%
    \parbox[t]{\hsize}{%
      \algocf@captiontext{#1}{#2}%
    }%
  \else
    \global\@minipagefalse%
    \hbox to\hsize{\box\@tempboxa}%
  \fi%
  \addtolength{\hsize}{-\algomargin}%
}
\makeatother

\newcounter{subfig}[figure]

\newcommand{\figsubref}[2]{%
  \hyperref[#2]{%
    Figure~\ref*{#1}\ref*{#2}%
  }%
}

\newcommand{\BlackBox}{%
  \begingroup
  \setlength{\fboxsep}{0pt}%
  \fbox{%
    \rule{0pt}{1.5ex}%
    \rule{1.5ex}{0pt}%
  }%
  \endgroup
}

\newcommand{\doop}{%
  \ensuremath{\operatorname{do}}%
}

\usepackage{silence}
\usepackage{microtype}

\newcommand{\stepstrut}{%
  \vphantom{\displaystyle\sum_{t'}}%
}

\title{%
Verifying formulas for interventional distributions
}

\author{
{\name Francesco Freni\textsuperscript{1}}
\and
{\name Leonard Henckel\textsuperscript{2,*}}
\and
{\name Sebastian Weichwald\textsuperscript{1,*}}
\\[0.5em]
\addr \textsuperscript{1}Department of Mathematical Sciences, University of Copenhagen, Denmark
\\
\addr \textsuperscript{2}School of Mathematics and Statistics, University College Dublin, Ireland
\\
\addr
\textsuperscript{*}Equal contribution.
}

\begin{document}
\maketitle

\begin{abstract}
We formalize \emph{verification}
in causal graphical models:
deciding whether a given observational formula identifies a target interventional distribution.
This opens a problem complementary to identification,
asking not whether any identifying formula exists,
but whether the given formula is identifying.
We show that even sound and complete solutions to identification do not solve verification.
We propose a falsifier
as a first practical route forward,
prove that it induces an almost-surely correct verifier
for regular exponential-family
models,
and use the resulting verifier
to develop the gateway test,
which finds
all sets admissible for use in a front-door formula.
\end{abstract}

\begin{keywords}
Causal graphical models; Falsification; Identification; Verification.
\end{keywords}

\section{Introduction}\label{sec:introduction}

We introduce and formalize the problem of \emph{verification} in causal graphical models \citep{Pearl2009}: 
given a graph, treatment and outcome variables, and a candidate observational formula,
decide whether that formula identifies the target interventional distribution.
This opens a problem complementary to identification,
which asks whether the target interventional distribution
is determined by the graph and
observational distribution, 
and, if so, how to express it as an observational formula \citep{Pearl1995}.
This new problem also requires making explicit what is often left
implicit in identification: specifying which observational formulas
are admissible in the first place.
There exists a rich literature on identification,
including graphical criteria 
\citep{Maathuis2015, Perkovic2018}, 
sound and complete algorithms using graphical decompositions and
do-calculus 
\citep{Tian2002, Huang2006, Shpitser2008, Jaber2022, Chen2026},
and extensions to surrogate experiments,
stochastic policies,
and statistical efficiency analysis 
\citep{Bareinboim2012, Correa2020, Witte2020, Henckel2022, Rotnitzky2020}.
Some existing results
can be repurposed to verify formulas in restricted classes,
for example linear instrumental-variable formulas \citep{Henckel2023}
or adjustment formulas \citep{Shpitser2010, Perkovic2018}.
But verification itself
has not been developed as a problem in its own right.

Verification matters for causal graphical modelling.
Conceptually,
proof assistants such as Lean
highlight the value of independently checking
that a proposed mathematical object
has the claimed meaning \citep{deMoura2021};
here, the object is an observational formula
claimed to be identifying for a target interventional distribution.
Practically,
candidate formulas need not be direct outputs of 
a single identification run:
they may be simplified expressions,
outputs of software or human derivations,
formulas transferred from related graphs,
or alternatives expected
to be easier to estimate efficiently \citep{Guo2023}.
This is important to enable
evolvable causal analysis,
where graphs, assumptions, measurements,
and formulas change over time.
The question is then not whether some identifying formula exists,
but whether this particular formula remains correct
for the causal model currently under consideration.
Methodologically,
verification can help
check derivations,
test graphical criteria and conjectures,
expose new graphical implications or identifying formulas,
and support downstream tasks such as comparing graphs
by the identification claims they share \citep{Henckel2024}.

We first show why verification is not a by-product of
the existing identification machinery.
Sound and complete identification algorithms such as ID \citep{Shpitser2006} 
return an identifying formula
when one exists,
but do not decide 
whether a given alternative formula is also identifying.
Direct proof search in the do-calculus proof system does not solve the verification problem either:
fair enumeration of do-calculus and probability-algebra derivations
only semi-decides derivability of
a candidate formula,
terminating with a certificate when a derivation exists
but potentially running forever otherwise (\zcref[S]{thm:semi-decidability}).
We then propose a falsifier
as a first practical route forward.
Rather than searching for a derivation of the candidate formula,
the falsifier searches for disagreement
between the candidate formula and
the target interventional distribution
in sampled graph-compatible models.
For regular conditional exponential-family 
models,
we prove that this induces an almost-surely correct verifier
relative to the chosen parametric family (\zcref[S]{thm:as-correct-verifier}).
Finally, we illustrate what verification enables
by developing the gateway test,
a sound and exhaustively complete procedure for finding all sets 
whose front-door formula identifies the target interventional distribution.
This shows that verification can also characterize identifying strategies. 
\ifarxiv
    Our code is available at 
    \href{https://github.com/francescofreni/hiprof}{github.com/francescofreni/hiprof}.
\fi
We close by outlining open directions
toward a broader study of verification,
including
strengthening falsification from parametric toward non-parametric 
guarantees,
clarifying the limits of do-calculus proof search,
and extending verification beyond equality of interventional formulas.

\section{Causal graphical models and identification}\label{sec:causal-model-identification}
We fix working notation and definitions here,
collecting graphical and causal background in \zcref[S]{app:preliminaries}.
Throughout, let $\mathcal{G}$ be a causal directed acyclic graph over $\mathbf{V}=\mathbf{O}\sqcup\mathbf{L}$,
with observed variables $\mathbf{O}$,
latent variables $\mathbf{L}$,
and disjoint
outcome and intervention node sets $\mathbf{Y},\mathbf{T}\subseteq{\mathbf{O}}$.
Let $\mathcal{G}^p$ be the latent projection of $\mathcal{G}$ onto $\mathbf{O}$ \citep{Verma1990,Richardson2003},
and let $\mathcal{A}(\mathbf{O})$ denote
the class of latent projections (or acyclic directed mixed graphs) over $\mathbf{O}$.

For $\mathbf{A}\subseteq\mathbf{V}$,
set the product sample space $\mathcal{X}_{\mathbf{A}} = \prod_{V\in\mathbf{A}}\mathcal{X}_V$
and let $\mu_{\mathbf{A}}$ be the corresponding product measure.
All distributions considered admit densities
with respect to the relevant $\mu_{\mathbf{A}}$,
and $\mathcal{P}(\mathcal{X}_{\mathbf{A}})$
denotes the class of such densities.
A density $q\in\mathcal{P}(\mathcal{X}_{\mathbf{V}})$
factorizes according to $\mathcal{G}$ if there exist conditional densities
$\{q_{V\mid\operatorname{pa}(V)}:V\in\mathbf{V}\}$,
such that
$q(\mathbf{v}) = \prod_{V\in\mathbf{V}} q_{V\mid\operatorname{pa}(V)}(\mathbf{v}_V\mid \mathbf{v}_{\operatorname{pa}(V)})$
for $\mu_{\mathbf{V}}$-almost every $\mathbf{v}$.
Let $\mathcal{P}_{\mathcal{G}}(\mathcal{X}_{\mathbf{V}})$
denote the class of such full densities.

The causal directed acyclic graph $\mathcal{G}$,
together with this factorization 
and truncated factorizations for interventions,
specifies a non-parametric causal graphical model:
each $q\in\mathcal{P}_{\mathcal{G}}(\mathcal{X}_{\mathbf{V}})$
induces a marginal observational density $q_{\mathbf{O}}$
and a family of interventional densities;
for $\mathbf{t}\in\mathcal{X}_{\mathbf{T}}$,
we write
$q_{\mathbf{Y}\mid\doop(\mathbf{T}=\mathbf{t})}$
for the induced density of
$\mathbf{Y}$ under the hard intervention $\doop(\mathbf{T}=\mathbf{t})$
(see \zcref[S]{app:causal-preliminaries}). 

\begin{definition}[Observational formula]
\label{def:obsformula}
An \emph{observational formula $\phi$ for $\mathbf{Y}$ under intervention on $\mathbf{T}$} is a well-typed, kernel-preserving symbolic expression
with no free variables other than $\mathbf{y}$ and $\mathbf{t}$, generated from
observational marginals and conditionals
by the grammar in 
\zcref[S]{app:grammar}.
The grammar restricts products and quotients
so that formulas have probabilistic
rather than merely algebraic
semantics.
We write
\[
\llbracket \phi \rrbracket:
\mathcal{P}(\mathcal X_{\mathbf O})\times\mathcal{X}_{\mathbf{T}}
\rightharpoonup
\mathcal P(\mathcal X_{\mathbf{Y}})
\]
for the partial functional 
induced by
an admissible formula $\phi$,
and
$\Phi_{\mathbf{Y},\mathbf{T}}$
for the class of such formulas.
\end{definition}

The grammar excludes arbitrary algebraic expressions that need 
not define densities (see \zcref[S]{app:grammar}), and covers
standard identifying formulas, including those derived via
do-calculus and those written using fixing notation
via
their underlying kernel-preserving operations \citep{Richardson2023}.

When unambiguous, we use the standard shorthand that the arguments of
a density symbol determine the corresponding marginal, conditional, or 
interventional density.  
Densities and conditional-density terms are understood up to the usual
almost-everywhere equivalence; pointwise evaluations are taken only where 
the chosen representatives are defined.
In observational formulas,
marginalization sums are shorthand 
for integration with respect to the relevant measures;
primed or subscripted bound variables
are dummy copies
of the corresponding base variable;
for example,
$t'$ and $t_1$ range over $\mathcal{X}_T$
and are integrated with respect to $\mu_T$.

\begin{definition}[Identifying formula]
\label{def:identifying-formula}
An observational formula $\phi$ is 
\emph{identifying for $\mathbf{Y}\mid \doop(\mathbf{T})$ in $\mathcal{G}$}
if,
for every $q\in\mathcal{P}_{\mathcal{G}}(\mathcal{X}_{\mathbf{V}})$
and every $\mathbf{t}\in\mathcal{X}_{\mathbf{T}}$
for which $\llbracket\phi\rrbracket(q_{\mathbf{O}},\mathbf{t})$ is defined,
\begin{equation}\label{eq:identifying-formula}
    \llbracket \phi \rrbracket (q_{\mathbf{O}},\mathbf{t})
    =
    q_{\mathbf{Y}\mid\doop(\mathbf{T}=\mathbf{t})}
    \quad \mu_{\mathbf{Y}}\text{-almost everywhere.}
\end{equation}
We say that $\phi$ is \emph{identifying for $\mathbf{Y}\mid \doop(\mathbf{T})$ 
in $\mathcal{G}$ relative to} $\widetilde{\mathcal{P}}_{\mathcal{G}}(\mathcal{X}_{\mathbf{V}})\subseteq\mathcal{P}_{\mathcal{G}}(\mathcal{X}_{\mathbf{V}})$ 
if the above equality holds for every $q\in\widetilde{\mathcal{P}}_{\mathcal{G}}(\mathcal{X}_{\mathbf{V}})$
and every $\mathbf{t}\in\mathcal{X}_{\mathbf{T}}$.
\end{definition}

\begin{example}[Interpreting an observational formula]
    Consider the graph $X \to T \to Y$ and $X\to Y$.
    The observational formula
    $\phi_{\mathrm{adj}} = \sum_x p(y \mid t, x) p(x)$
    with free variables $y$ and $t$ denotes a functional $\llbracket\phi_{\mathrm{adj}}\rrbracket$ that maps $(q_{\mathbf{O}},t)$ to the density
    $y\mapsto\int q_{Y\mid T,X}(y\mid t,x)q_{X}(x)\,
    d\mu_{X}(x)$
    on $\mathcal{X}_Y$.
    Since $\{X\}$ is a valid adjustment set in this graph, $\phi_{\mathrm{adj}}$ is identifying for $Y\mid \doop(T)$, while the observational formula $p(y\mid t)$ is not.
\end{example}

Graphical identifiability depends only on the latent projection:
$\phi$ is identifying for $\mathbf{Y}\mid\doop(\mathbf{T})$ in $\mathcal{G}^p$
if and only if it is identifying in any,
equivalently every,
full graph $\mathcal{G}$
whose latent projection onto $\mathbf{O}$ is $\mathcal{G}^p$
\citep[Corollary~49]{Richardson2023}.
For $\mathcal{H}\in\{\mathcal{G},\mathcal{G}^p\}$,
we say that $\mathbf{Y}\mid\doop(\mathbf{T})$ is identifiable in $\mathcal{H}$
if and only if there exists $\phi\in\Phi_{\mathbf{Y},\mathbf{T}}$
that is identifying for $\mathbf{Y}\mid\doop(\mathbf{T})$ in $\mathcal{H}$.

Identification asks whether $\mathbf{Y}\mid\doop(\mathbf{T})$ is identifiable and, when it is, provides an 
identifying formula for $\mathbf{Y}\mid\doop(\mathbf{T})$.
We view an identification procedure $\mathcal{I}$
with associated, possibly restricted, formula class $\Phi^{\mathcal{I}}_{\mathbf{Y},\mathbf{T}}\subseteq\Phi_{\mathbf{Y},\mathbf{T}}$
abstractly as assigning to each latent projection $\mathcal{G}^p\in\mathcal{A}(\mathbf{O})$ and disjoint node sets $\mathbf{Y},\mathbf{T}\subseteq\mathbf{O}$,
a set of formulas
$\mathcal{I}(\mathcal{G}^p,\mathbf{Y},\mathbf{T})\subseteq\Phi^{\mathcal{I}}_{\mathbf{Y},\mathbf{T}}$.
The procedure is \emph{sound} (for identification)
if each $\phi \in \mathcal{I}(\mathcal{G}^p,\mathbf{Y},\mathbf{T})$
is identifying for $\mathbf{Y}\mid\doop(\mathbf{T})$.
The procedure is \emph{complete} (for identification)
if it returns at least one identifying formula 
whenever $\mathbf{Y}\mid\doop(\mathbf{T})$ is identifiable in $\mathcal{G}^p$.
The procedure is
\emph{exhaustively complete relative to $\Phi^\mathcal{I}_{\mathbf{Y},\mathbf{T}}$}
if it is complete for finding all identifying formulas in $\Phi^\mathcal{I}_{\mathbf{Y},\mathbf{T}}$,
that is,
for every $\phi \in \Phi^\mathcal{I}_{\mathbf{Y},\mathbf{T}}$,
if $\phi$ is identifying for $\mathbf{Y}\mid\doop(\mathbf{T})$,
then $\phi\in\mathcal{I}(\mathcal{G}^p,\mathbf{Y},\mathbf{T})$.

\section{The verification problem}\label{sec:guarantees}

Identification asks for some identifying observational formula for a target $\mathbf{Y}\mid\doop(\mathbf{T})$.
Verification is the complementary decision problem:
given a graph, a target, and a proposed answer,
decide whether that answer is correct.
We formalize this decision task as follows.
\begin{definition}
[Verifier]
\label{def:verifier}
    A \emph{verifier} $\mathcal{V}$ is a decision procedure, that is,
    a Boolean-valued algorithm that halts on every valid input,
    computing the following map.
    It
    takes as input a latent projection $\mathcal{G}^p\in\mathcal{A}(\mathbf{O})$, disjoint node sets $\mathbf{Y},\mathbf{T}\subseteq\mathbf{O}$, and either an observational formula $\phi\in\Phi_{\mathbf{Y},\mathbf{T}}$ or the symbol $\texttt{none}$.
    Let $\mathcal{G}$ be any latent-variable causal directed acyclic graph whose latent projection onto $\mathbf O$ is $\mathcal G^p$.
    The verifier returns $\texttt{true}$ if and only if one of the following holds:
    \begin{enumerate}
        \item $\phi\in\Phi_{\mathbf{Y},\mathbf{T}}$ and $\phi$ is identifying for $\mathbf{Y}\mid\doop(\mathbf{T})$ in $\mathcal{G}$; or
        \item $\phi = \texttt{none}$ and $\mathbf{Y}\mid\doop(\mathbf{T})$ is not identifiable in $\mathcal{G}$.
    \end{enumerate}
    Otherwise,
    $\mathcal{V}$ returns 
    $\texttt{false}$.
\end{definition}

The verification task is not solved by identification alone.
A sound and complete identification procedure
need only return some identifying formula when one exists,
and therefore need not decide whether an arbitrary observational formula is identifying.
This applies, for instance,
to the ID algorithm,
which halts on every valid input and returns
an identifying formula when the target is identifiable,
and reports non-identifiability otherwise \citep{Shpitser2006}.
One can also obtain some more, but not necessarily all, 
identifying formulas using the approach of \citet{Yvernes2026}.
\zcref[S]{fig:id-different-formulas} illustrates
why this is not enough to decide whether an arbitrary observational formula is identifying.

\begin{figure}[t]
\centering
\figwidth=29pc
\begin{minipage}[t]{0.45\textwidth}
\centering
\begin{tikzpicture}[>=stealth, line width=1.8pt]
    \node[draw=none, anchor=north west, inner sep=0pt] at (-0.5,1.9) {$\mathcal{G}_1$};

    \node (T) at (0,0) {$T$};
    \node (M) at (2,0) {$M$};
    \node (Y) at (4,0) {$Y$};
    \node (C) at (2,1.5) {$C$};

    \draw[->] (T) -- (M);
    \draw[->] (M) -- (Y);
    \draw[->, highlight] (C) -- (T);
    \draw[->, highlight] (C) -- (Y);
\end{tikzpicture}
\par\medskip
\textbf{ID output:} $\phi_1=\sum_{c} p(y\mid t,c)\,p(c)$
\par\medskip
\end{minipage}\hspace{0.5em}%
\begin{minipage}[t]{0.45\textwidth}
\centering
\begin{tikzpicture}[>=stealth, line width=1.8pt]
    \node[draw=none, anchor=north west, inner sep=0pt] at (-0.5,1.9) {$\mathcal{G}_2$};

    \node (T) at (0,0) {$T$};
    \node (M) at (2,0) {$M$};
    \node (Y) at (4,0) {$Y$};
    \node (C) at (2,1.5) {$C$};

    \draw[->] (T) -- (M);
    \draw[->] (M) -- (Y);
    \draw[->, highlight] (T) -- (C);
    \draw[->, highlight] (Y) -- (C);
\end{tikzpicture}
\par\medskip
\textbf{ID output:} $\phi_2=p(y \mid t)$
\par\medskip
\end{minipage}
\par\medskip
\textbf{Identifying in both graphs:}
$\phi_3=\sum_{m} p(m \mid t)\sum_{t^\prime} p(y \mid t^\prime, m)\,p(t^\prime)$
\caption{
The ID algorithm returns
one identifying formula for $\mathbf{Y}\mid\doop(\mathbf{T})$
when one exists; here,
the adjustment formula $\phi_1$ in $\mathcal{G}_1$
and the conditional density $\phi_2$ in $\mathcal{G}_2$.
Each formula is specific to its graph and fails in the other: 
$\mathcal{V}(\mathcal{G}_1, Y, T, \phi_2)=\texttt{false}$ and $\mathcal{V}(\mathcal{G}_2, Y, T, \phi_1)=\texttt{false}$.
Conversely, non-return by ID
is not evidence of a formula being non-identifying:
the front-door formula $\phi_3$
is identifying in both graphs,
but is not the formula returned by ID in either.
Thus, completeness for identification is an existence guarantee,
insufficient to enable verification of arbitrary formulas.
}
\label{fig:id-different-formulas}
\end{figure}

While an identification procedure $\mathcal{I}$ that is sound and exhaustively complete
relative to a formula class $\Phi^\mathcal{I}_{\mathbf{Y},\mathbf{T}}$, such as adjustment formulas \ifsupplementary
    (see Section~S3 of the Supplementary Material),
\else
    (see \zcref[S]{app:adjustment}),
\fi
may be repurposed for verification
by checking whether a formula belongs to $\mathcal{I}(\mathcal{G}^p,\mathbf{Y},\mathbf{T})$,
it can at best verify formulas within that class,
not arbitrary formulas in $\Phi_{\mathbf{Y},\mathbf{T}}$.
One might therefore try to make the class as broad as possible,
but this does not remove the difficulty
and instead shifts it to deciding membership in
$\mathcal{I}(\mathcal{G}^p,\mathbf{Y},\mathbf{T})$.

The natural broad route is proof search:
try to verify a candidate observational formula
by searching for a derivation of that formula
in the
do-calculus proof system (\citealp[][Section~4]{Pearl1995}; see also \zcref[S]{app:do-calculus}).
Soundness of do-calculus guarantees that observational formulas derived by a sequence of derivation steps starting from the target are identifying for the target,
and completeness guarantees that
some identifying formula is derivable whenever the target is identifiable
\citep{Shpitser2006,Huang2006}.
Since our observational formulas are written in the same symbolic density language used in do-calculus derivations,
this is a meaningful route.
However, it turns verification into a derivability problem.
We next formalize this problem and show that direct proof search via fair enumeration of derivations only semi-decides it:
derivable formulas are eventually accepted,
with the derivation serving as a certificate,
whereas non-derivable formulas need not lead to termination.

\subsection{The limits of do-calculus proof search for verification}\label{subsec:limits-do-calc}

Let $\Sigma$ be a finite alphabet containing the symbols needed to write
the density expressions, interventions, algebraic operations, and marginalizations considered below.
Let $\Sigma^*$ denote the
set of finite strings over $\Sigma$
and let
$\mathcal{L}^{\mathrm{do\text{-}calc}}\subseteq\Sigma^*$
be the language of well-formed
symbolic density expressions used in do-calculus 
and probability-algebra derivations.
Since we fix this syntax throughout,
membership in the language $\mathcal{L}^{\mathrm{do\text{-}calc}}$
and in the subclass
$\Phi^{\mathrm{do\text{-}calc}}_{\mathbf{Y},\mathbf{T}}\subseteq\mathcal{L}^{\mathrm{do\text{-}calc}}$
of strings representing observational formulas
for $\mathbf{Y}$ under intervention on $\mathbf{T}$
is decidable, in the sense that there is an algorithm
that halts on every input and correctly determines membership.
For simplicity, we take $\mathcal{R}$ to be a finite set of
sound rule schemas for symbolic density expressions,
including the do-calculus and probability-algebra schemas;
finiteness simplifies the enumeration argument below,
but effective enumerability of rules
and decidable rule applicability would suffice.

\begin{definition}
[Derivation]
\label{def:derivation}
    A \emph{derivation} of length $m\in\mathbb{N}$
    from a starting query expression $\xi^0=Q_{\mathbf{Y},\mathbf{T}}$
    is a finite sequence $d_m=(j_1,\dots,j_m)$ that determines
    a sequence of expressions 
    $\xi^1,\ldots,\xi^m\in\mathcal{L}^\mathrm{do\text{-}calc}$
    by a computable transition function.
    For all $k\in\{1,\dots,m\}$, $j_k:a_k\rightsquigarrow b_k$
    is a local rewrite step, where $a_k$ is an occurrence of a 
    probability-kernel sub-expression of $\xi^{k-1}$, and $b_k$
    is the expression obtained from $a_k$ by one application of a 
    rule schema in $\mathcal{R}$.
    The next expression $\xi^k$ is obtained from $\xi^{k-1}$ by
    replacing the selected occurrence of $a_k$ with $b_k$.
    The derivation is valid relative to $\mathcal{G}^p$
    if all steps are well-formed and all graph-dependent side conditions
    of the applied rules
    hold in $\mathcal{G}^p$.
\end{definition}

Given a candidate observational formula $\phi\in\Phi_{\mathbf{Y},\mathbf{T}}^\mathrm{do\text{-}calc}$, a
derivation $d_m$ derives $\phi$ from $Q_{\mathbf{Y},\mathbf{T}}$ in $\mathcal{G}^p$ if the expression obtained
after applying all its rewrite steps, $\xi^m$, is 
syntactically equal to $\phi$; the following example illustrates this.

\begin{example}[Front-door formula derivation]
Consider the graph $\mathcal{G}_1$ in
\zcref[S]{fig:id-different-formulas} and the query
$\xi^0=Q_{Y,T}=p(y\mid\doop(t))$.
Using probability manipulations and the do-calculus rules
stated in \zcref[S]{app:do-calculus},
a possible derivation of the front-door formula $\phi_3$ 
is the finite sequence
$d_7=(j_1,\dots,j_7)$, where
\begin{alignat*}{2}
j_1:\quad
&\stepstrut
p(y\mid\doop(t))
\rightsquigarrow
\sum_m p(y\mid\doop(t),m)p(m\mid\doop(t)),
&\qquad&
\text{marginalizing over $M$};\\
j_2:\quad
&\stepstrut
p(m\mid\doop(t))
\rightsquigarrow
p(m\mid t),
&&
\text{by Rule 2 of do-calculus};\\
j_3:\quad
&\stepstrut
p(y\mid\doop(t),m)
\rightsquigarrow
p(y\mid\doop(t,m)),
&&
\text{by Rule 2 of do-calculus};\\
j_4:\quad
&\stepstrut
p(y\mid\doop(t,m))
\rightsquigarrow
p(y\mid\doop(m)),
&&
\text{by Rule 3 of do-calculus};\\
j_5:\quad
&\stepstrut
p(y\mid\doop(m))
\rightsquigarrow
\sum_{t'}p(y\mid\doop(m),t')p(t'\mid\doop(m)),
&&
\text{marginalizing over $T$};\\
j_6:\quad
&\stepstrut
p(t'\mid\doop(m))
\rightsquigarrow
p(t'),
&&
\text{by Rule 3 of do-calculus};\\
j_7:\quad
&\stepstrut
p(y\mid\doop(m),t')
\rightsquigarrow
p(y\mid m,t'),
&&
\text{by Rule 2 of do-calculus}.
\end{alignat*}
Applying these local rewrites successively to $\xi^0$ yields $\phi_3$.
Thus, we say that this particular derivation 
$d_7$ derives $\phi_3$ from $Q_{Y,T}$ in $\mathcal{G}_1$.
\end{example}

Let $D_m$ be the set of all length-$m$ derivations, 
and consider
$Q_{\mathbf{Y},\mathbf{T}}\coloneqq p(\mathbf{y}\mid \doop(\mathbf{t}))$
as starting query expression. Let $\textsc{Check}(G^p,Q_{\mathbf{Y},\mathbf{T}},\phi,d_m)$
be an algorithm taking as input 
the graph $\mathcal{G}^p$ and query $Q_{\mathbf{Y},\mathbf{T}}$,
an observational formula $\phi\in\Phi_{\mathbf{Y},\mathbf{T}}^\mathrm{do\text{-}calc}$,
and a derivation $d_m\in D_m$;
$\textsc{Check}$ returns $\texttt{true}$ if 
$d_m$
is a valid derivation relative to $\mathcal{G}^p$
that derives $\phi$,
and $\texttt{false}$ otherwise.

We consider the set of observational formulas
that are derivable
from $Q_{\mathbf{Y},\mathbf{T}}$
in finitely many steps,
and are hence identifying formulas, as
\[
\mathcal{I}^{\mathrm{do\text{-}calc}}(\mathcal{G}^p,\mathbf{Y},\mathbf{T})
=
\{\phi\in\Phi_{\mathbf{Y},\mathbf{T}}^{\mathrm{do\text{-}calc}}\mid\exists m\in\mathbb{N}, d_m\in D_m:\textsc{Check}(\mathcal{G}^p,Q_{\mathbf{Y},\mathbf{T}},\phi,d_m)=\texttt{true}\},
\]
which formally defines a language over $\Sigma$
as well as
a sound and complete identification procedure.
This do-calculus-based identification procedure
may even be exhaustively complete relative to
$\Phi^\mathrm{do\text{-}calc}_{\mathbf{Y},\mathbf{T}}$.
Nevertheless,
this does not by itself provide a verifier for arbitrary candidate formulas,
because membership in the derivable set $\mathcal{I}^\mathrm{do\text{-}calc}(\mathcal{G}^p,\mathbf{Y},\mathbf{T})$ is only semi-decidable.

Recall that a language $L\subseteq \Sigma^*$ is \emph{semi-decidable} 
if there is an algorithm
that halts and accepts on inputs in $L$,
while it may run forever on inputs outside $L$.

\begin{restatable}[Semi-decidability of derivation search]
    {theorem}{semidecidability}
\label{thm:semi-decidability}
$\mathcal{I}^{\mathrm{do\text{-}calc}}
(\mathcal{G}^p,\mathbf{Y},\mathbf{T})$
is semi-decidable.
\end{restatable}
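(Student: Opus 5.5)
The plan is the standard dovetailing argument: the membership condition has the form $\exists m\,\exists d_m$ of a decidable predicate, so enumerating all candidate certificates gives a semi-decision procedure. I will first check that $\textsc{Check}$ is total and computable, then check that the certificates can be enumerated effectively, and then combine the two.

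\emph{Step 1: $\textsc{Check}$ is a decision procedure.} Given $\mathcal{G}^p$, $Q_{\mathbf{Y},\mathbf{T}}$, $\phi$ and $d_m=(j_1,\dots,j_m)$, $\textsc{Check}$ computes $\xi^1,\dots,\xi^m$ with the computable transition function of \zcref[S]{def:derivation}. At each step it does three things:
\begin{enumerate}
\item it checks that $a_k$ is a probability-kernel subexpression occurrence of $\xi^{k-1}$;
\item it checks that $b_k$ arises from $a_k$ by one application of a schema in $\mathcal{R}$;
\item it checks the graph-dependent side conditions in $\mathcal{G}^p$.
\end{enumerate}
The first two are syntactic checks on finite strings, and well-formedness is decidable because membership in $\mathcal{L}^{\mathrm{do\text{-}calc}}$ is decidable. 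The side conditions of the do-calculus rules are d-separation (m-separation) statements in mutilated versions of the finite graph $\mathcal{G}^p$. These can be decided by finite reachability or Bayes-ball algorithms. Finally, $\textsc{Check}$ tests syntactic equality of $\xi^m$ with $\phi$. Each of these operations halts, so $\textsc{Check}$ halts on every input.

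\emph{Step 2: enumerating the certificates.} Each step $j_k$ is a finite object: an occurrence position in $\xi^{k-1}$ together with a rule index from the finite set $\mathcal{R}$ and any instantiation data, all encodable as a string over $\Sigma$. So each $D_m$ is a countable, effectively enumerable set, and $\bigcup_m D_m$ is effectively enumerable, for example in order of encoding length and then lexicographically. If $D_m$ is infinite because of fresh dummy bound variables, the enumeration by encoding length still lists every element at some finite stage. If instead we bound string lengths, each $D_m$ is finite, which is also fine.

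\emph{Step 3: the semi-decider.} On input $\phi$, the procedure first decides whether $\phi\in\Phi^{\mathrm{do\text{-}calc}}_{\mathbf{Y},\mathbf{T}}$ and rejects if not. Otherwise it runs $\textsc{Check}(\mathcal{G}^p,Q_{\mathbf{Y},\mathbf{T}},\phi,d)$ on the $n$-th enumerated derivation $d$ for $n=1,2,\dots$, and halts and accepts as soon as some call returns $\texttt{true}$. Since each call halts by Step 1, this loop is fair, and no dovetailing over nonterminating subcomputations is needed.
\begin{itemize}
\item If $\phi\in\mathcal{I}^{\mathrm{do\text{-}calc}}(\mathcal{G}^p,\mathbf{Y},\mathbf{T})$, a witnessing $d_m$ appears at some finite index, so the procedure accepts, and that $d_m$ is the certificate.
\item If $\phi$ is not in $\mathcal{I}^{\mathrm{do\text{-}calc}}(\mathcal{G}^p,\mathbf{Y},\mathbf{T})$, no call returns $\texttt{true}$. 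The procedure never accepts and may run forever, which is permitted.
\end{itemize}
This establishes semi-decidability.

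The main obstacle is Step 1, specifically making precise that the graph-dependent side conditions and the rule-applicability checks are decidable. For the side conditions I would reduce each do-calculus rule to a d-separation query in $\mathcal{G}^p$ with certain incoming or outgoing edges removed, which is a finite graph computation. For the probability-algebra schemas, which may introduce new bound variables or new marginalizations, I would require that each instantiation be recorded explicitly in $j_k$. That makes applicability a finite pattern-matching check, consistent with the paper's remark that decidable rule applicability suffices.
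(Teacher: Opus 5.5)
Your proposal is correct and follows the paper's proof: both show that $\textsc{Check}$ is total (finitely many string operations plus decidable d-separation side conditions on a finite graph), then run it over a fair enumeration of all derivations, accepting on the first success and otherwise running forever. The only difference is cosmetic: you enumerate derivations by encoding length, using that there are finitely many strings of each length over the finite $\Sigma$, whereas the paper dovetails over pairs $(m,i)$ via a computable bijection $f:\mathbb{N}\to\mathbb{N}\times\mathbb{N}$ with per-length enumerations $e_m$ of $D_m$; both ensure every derivation is reached at a finite stage.
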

We provide a proof in \zcref[S]{app:proof-semi-decidability}.
It constructs an algorithm that fairly enumerates all candidate derivations
and
halts once it finds one that derives a given $\phi\in\Phi^\mathrm{do\text{-}calc}_{\mathbf{Y},\mathbf{T}}$; 
if none exists,
the search continues forever.

\begin{example}
[Non-terminating derivation search]
\label{ex:semi-decidability}

    Consider the graph $\mathcal{G}_2$ in
    \zcref[S]{fig:id-different-formulas} and the query
    $Q_{Y,T}=p(y\mid\doop(t))$.
    To see that the adjustment formula $\phi_1$ 
    is not identifying for $Y\mid\doop(T)$ in $\mathcal{G}_2$,
    we provide a counterexample for which the formula and the 
    target disagree.
    Consider the Gaussian density $p$ that factorizes 
    according to $\mathcal{G}_2$ with 
    $T\sim\mathcal{N}(0,1)$, $M\mid T=t\sim\mathcal{N}(t,1)$,
    $Y\mid M=m\sim\mathcal{N}(m,1)$, and
    $C\mid T=t, Y=y\sim\mathcal{N}(t+y,1)$.
    Under $\doop(T=1)$, the interventional distribution is 
    $Y\mid\doop(T=1)\sim\mathcal{N}(1,2)$.
    On the other hand, $C\sim\mathcal{N}(0,7)$ and, for fixed $c$, 
    $Y|T=1,C=c\sim{N}(2c/3-1/3,2/3)$, and hence
    \[
    \llbracket\phi_1\rrbracket(p,1)(y)=
    \mathcal{N}\left(y;-\frac{1}{3},\frac{34}{9}\right)
    \ne\mathcal{N}(y;1,2)=p_{Y\mid\doop(T=1)}(y).
    \]

    We explain what happens if one tries to verify $\phi_1$ 
    by derivation search.
    For all $m\in\mathbb{N}$ and all $d_m\in D_m$, 
    $\textsc{Check}(\mathcal{G}_2,Q_{Y,T},\phi_1,d_m)=\texttt{false}$,
    because otherwise, by soundness of $\mathcal{R}$, $\phi_1$ 
    would be identifying, contradicting the counterexample above.
    However, this does not allow the procedure to halt and reject, 
    because the derivation space has no finite bound.
    Indeed, even from the target expression, one can insert
    probabilistic identities leaving the represented kernel unchanged.
    For example, we may rewrite
    \begin{equation*}
        p(y\mid\doop(t))\rightsquigarrow p(y\mid\doop(t))\sum_c p(c\mid y,\doop(t))\rightsquigarrow p(y\mid\doop(t)),
    \end{equation*}
    because $\sum_c p(c\mid y,\doop(t))=1$. This is a valid two-step
    derivation loop. For all $k\in\mathbb{N}$, one may insert this
    loop $k$ times before applying any other rewrite, which gives
    arbitrarily long valid candidate derivations. 
    After checking finitely many candidate derivations, the procedure
    has ruled out only finitely many candidates, but there remain longer
    derivations that have not yet been checked.
\end{example}

\zcref[S]{thm:semi-decidability} does not rule out the existence
of a terminating decision procedure for derivability.
Such a procedure would exist, for example, if one could
compute a finite bound $B$, possibly depending on the input, such that,
whenever $\phi$ is derivable, it has a valid derivation of length
at most $B$. One could then enumerate all derivations up to length $B$,
accept if one of them derives $\phi$, and reject otherwise.
Alternatively, a terminating procedure might search the derivation 
space while somehow ignoring redundant detours such as the one in 
\zcref[S]{ex:semi-decidability}.
We are not aware of such a terminating procedure for the 
derivability problem considered here, and this problem may even be undecidable, meaning that
no algorithm can correctly decide all instances while halting on 
every input. While we do not prove such a result, this
impossibility is in line with known undecidability
results for closely related probabilistic and causal
reasoning problems \citep{Ibeling2025}.
The verification task therefore remains open as a separate problem,
despite a rich identification literature.

\section{A falsification-based verifier}\label{sec:hiprof}

\subsection{Falsification procedure}\label{subsec:falsification-procedure}
Our strategy to verification is based on \emph{falsification}:
instead of attempting to prove that \zcref[S]{eq:identifying-formula} holds 
for all densities factorizing according to the graph, we search for a counterexample violating it.
This yields a
verification procedure for parametric submodels (\zcref[S]{thm:as-correct-verifier}).

We consider a parametric family 
$\{p_\theta\}_{\theta\in\Theta}\coloneqq\{\{p_{\theta_V}(v\mid \mathbf v_{\operatorname{pa}(V)})\}_{\theta_V\in\Theta_V}\}_{V\in\mathbf{V}}$ 
of densities factorizing according to $\mathcal{G}$, 
where, for all $V\in\mathbf{V}$, $\Theta_V\subseteq\mathbb{R}^{d_V}$, 
and $\Theta\coloneqq\prod_{V\in\mathbf{V}}\Theta_V\subseteq\mathbb{R}^d$. 
For all $V\in\mathbf{V}$, let $\pi_V$ be a distribution on $\Theta_V$
that is absolutely continuous with respect to the Lebesgue measure,
and let $\pi \coloneqq \bigotimes_{V\in\mathbf{V}}\pi_V$ be the joint distribution
on $\Theta$.
We write $\mathcal{P}_{\Theta}(\mathcal{X}_{\mathbf{V}})\subseteq\mathcal{P}_{\mathcal{G}}(\mathcal{X}_{\mathbf{V}})$ 
for the parametric submodel induced by this family.
Then, a falsifier is defined as follows.

\begin{definition}
[Falsifier]
\label{def:falsifier}
    A \emph{falsifier} $\mathcal{F}$ is a decision procedure, that is,
    a Boolean-valued algorithm that halts on every valid input,
    computing the following map.
    It takes as input a latent projection $\mathcal{G}^p\in\mathcal{A}(\mathbf{O})$,
    disjoint node sets $\mathbf{Y},\mathbf{T}\subseteq\mathbf{O}$, 
    and either an observational formula $\phi\in\Phi_{\mathbf{Y},\mathbf{T}}$ 
    or the symbol $\texttt{none}$.
    Let $\mathcal{G}$ be any latent-variable causal directed acyclic graph 
    whose latent projection onto $\mathbf O$ is $\mathcal G^p$,
    and consider a parametric family $\{p_\theta\}_{\theta\in\Theta}$
    of densities factorizing according to $\mathcal{G}$.
    Let $\theta_1,\dots,\theta_K$, with $K\in\mathbb{N}$, be 
    independent draws from $\pi$.
    Then, conditioned on $\{p_\theta\}_{\theta\in\Theta}$ and on the 
    realized parameter values,
    $\mathcal{F}$ returns $\texttt{true}$
    if and only if one of the following holds:
    \begin{enumerate}
        \item \label{itm:ae-equality-check}
        $\phi\in\Phi_{\mathbf{Y},\mathbf{T}}$, and,
        for all $i\in\{1,\dots,K\}$
        and all $\mathbf{t}\in\mathcal{X}_{\mathbf{T}}$,
        \begin{equation*}
            \llbracket \phi \rrbracket (p_{\theta_i,\mathbf{O}},\mathbf{t})=
            p_{\theta_i,\mathbf{Y}\mid\doop(\mathbf{T}=\mathbf{t})}
            \quad \mu_{\mathbf{Y}}\text{-almost everywhere; or}
        \end{equation*}
        \item $\phi=\texttt{none}$ and $\mathbf{Y}\mid\doop(\mathbf{T})$ is not identifiable in $\mathcal{G}$.
        \label{itm:id-check}
    \end{enumerate}
    Otherwise, $\mathcal{F}$ returns $\texttt{false}$.
\end{definition}
We say that a falsifier is an almost-surely correct verifier relative to
$\mathcal{P}_{\Theta}(\mathcal{X}_{\mathbf{V}})$
if, with probability one over the sampled parameters, 
it returns $\texttt{true}$ exactly when either 
$\phi\in\Phi_{\mathbf{Y},\mathbf{T}}$ is identifying for 
$\mathbf{Y}\mid\doop(\mathbf{T})$ in $\mathcal{G}$ relative to 
$\mathcal{P}_{\Theta}(\mathcal{X}_{\mathbf{V}})$,
or $\phi=\texttt{none}$ and $\mathbf{Y}\mid\doop(\mathbf T)$ 
is not identifiable in $\mathcal{G}$.
When $\phi$ is $\texttt{none}$ (Case~\ref{itm:id-check}), 
we use the ID algorithm to check identifiability,
which has been shown to be sound and complete (for identification) \citep{Shpitser2006}.
Case~\ref{itm:ae-equality-check}, instead, involves two nontrivial tasks:
(a) deciding whether two densities agree
$\mu_{\mathbf{Y}}$-almost everywhere, and
(b) checking this equality for all intervention values
$\mathbf t\in\mathcal X_{\mathbf T}$.
Both tasks can be difficult for general parametric families,
but in our implementation we use the canonical directed acyclic
graph associated with $\mathcal{G}^p$, where each bidirected edge $a \leftrightarrow b$ is represented by 
an additional variable $u$ with $a \leftarrow u \rightarrow b$ 
(an alternative implementation avoids specifying the latent structure;
we discuss its trade-offs in \zcref[S]{app:linear-sems-margs}),
and use a linear Gaussian parametrization, which makes
the above problems tractable. To obtain the
interventional density, we remove all incoming edges into the treatment
variables and set the treatment variables to their intervened values.
In this linear Gaussian setting, all relevant densities are Gaussian
(see the closure result in \zcref[S]{app:gaussian-closure}). 
Therefore, task~(a) reduces to comparing mean vectors and
covariance matrices. Moreover, by the same closure result,
admissible formula outputs have mean affine in $\mathbf t$ and covariance
independent of $\mathbf t$, and for fixed parameters,  
task~(b) reduces to comparing the covariance matrices
and comparing the mean functions at $|\mathbf T|+1$ affinely
independent intervention values.

\begin{example}
[Falsification in a Gaussian model]
\label{ex:falsification-example}
Consider the acyclic directed mixed graph
$\mathcal G^p: T \gets C \leftrightarrow Y$
and
the query $Q_{Y,T}=p(y\mid\doop(t))$.
For parametric falsification,
consider the canonical directed acyclic graph
$\mathcal{G}: T \gets C \gets L \to Y$
with the centred linear Gaussian model:
$L \sim \mathcal{N}(0,\sigma_L^2)$,
$C\mid L=l \sim \mathcal{N}(\lambda_{LC}l, \sigma_C^2)$,
$T\mid C=c \sim \mathcal{N}(\lambda_{CT}c, \sigma_T^2)$,
and
$Y\mid L=l \sim \mathcal{N}(\lambda_{LY}l, \sigma_Y^2)$,
where $\theta=(\lambda_{CT},\lambda_{LC},\lambda_{LY},\sigma_T^2,\sigma_C^2,\sigma_L^2,\sigma_Y^2)$
collects the parameters
of the induced joint.
The centering is only for exposition:
intercepts leave the covariance calculations unchanged
and add affine terms to the means.
For compactness, write
\[
\begin{aligned}
q       &\coloneqq \lambda_{LC}^2\sigma_L^2+\sigma_C^2,
&
v_T     &\coloneqq \lambda_{CT}^2q+\sigma_T^2,
&
v_Y     &\coloneqq \lambda_{LY}^2\sigma_L^2+\sigma_Y^2,
\\
s_{TC}  &\coloneqq \lambda_{CT}q,
&
s_{YT}  &\coloneqq \lambda_{CT}\lambda_{LC}\lambda_{LY}\sigma_L^2,
&
s_{YC}  &\coloneqq \lambda_{LC}\lambda_{LY}\sigma_L^2,
&
\Delta  &\coloneqq v_Tq-s_{TC}^2.
\end{aligned}
\]

The induced joint, observed joint,
and interventional distribution are Gaussian, and
by the closure result in \zcref[S]{app:gaussian-closure},
each admissible formula returns a Gaussian density.
Thus checking $\mu_Y$-almost everywhere equality 
reduces to comparing mean and covariance parameters.

Under the intervention $\doop(T=t)$, truncating the factor 
for childless $T$ yields
$p_{\theta,Y\mid\doop(T=t)}(y)=\mathcal{N}(y;\ 0,\ v_Y)$.
We now compare this target with the outputs of
two candidate observational formulas:
$\phi_1 \coloneqq p(y\mid t)$ and $\phi_2 \coloneqq \sum_c p(y\mid t,c)p(c)$.
The expressions below are obtained mechanically from the 
observed joint by Gaussian marginalization, conditioning, and
kernel composition. We intentionally leave the resulting
rational expressions unsimplified to reflect the form
manipulated by the implementation.
In principle, equality could be checked symbolically by
reducing the resulting rational polynomial identities
(showing it is decidable),
but in practice this becomes computationally expensive beyond
toy examples;
the two formulas below already illustrate
how quickly the expressions grow.

Consider $\phi_1$ first. Marginalizing the observed joint to 
$(T,Y)$ and conditioning on $T=t$ yields
\[
\llbracket\phi_1\rrbracket(p_\theta,t)(y) = 
\mathcal{N}\left(y;\
\frac{s_{YT}}{v_T}t
,\
v_Y - \frac{s_{YT}^2}{v_T}
\right).\]
These parameters generally differ from the target parameters.
Thus, a single sampled parameter value at which either the mean 
or the variance differs is enough to falsify $\phi_1$.
However, on the lower-dimensional subset $\{\theta:\lambda_{CT}=0\}$,
we have $s_{YT}=0$, and hence $\phi_1$ agrees with the target
for every $t$, even though it is not identifying in the graph.

Consider now $\phi_2$.
Starting from the observed joint, conditioning gives the first factor,
marginalization gives the second,
and marginalizing their product over $c$ gives
\begin{align}
&\llbracket\phi_2\rrbracket(p_\theta,t)(y) \\
&\quad= 
\mathcal{N}\left(y;\
\frac{s_{YT}q-s_{YC}s_{TC}}{\Delta}t
,\
v_Y
-
\frac{
s_{YT}^2q
-
2s_{YT}s_{YC}s_{TC}
+
s_{YC}^2v_T
}{
\Delta
}
+
\left(
\frac{-s_{YT}s_{TC}+s_{YC}v_T}{\Delta}
\right)^2q
\right).
\end{align}
Symbolic simplification reduces the displayed mean to $0$ and the 
displayed variance to $v_Y$, so $\phi_2$ agrees with the target.

In the falsification procedure, we instead compare the induced 
Gaussian parameters at sampled parameter values $\theta_1,\dots,\theta_K$.
The equality is required for all intervention values $t$, but in
the linear Gaussian setting the means are affine in $t$ and the 
variances are independent of $t$ (\zcref[S]{app:gaussian-closure}).
Since $|T|=1$ here, it is enough to compare the mean functions at two distinct intervention values $t_0, t_1$.
A disagreement at any sampled parameter value and 
intervention value falsifies the formula.
Agreement at finitely many 
sampled parameter values
does not prove identification: a non-identifying formula can
agree with the target accidentally on special parameter values,
as $\phi_1$ does when $\lambda_{CT}=0$.
Below, we show that such accidents
form measure-zero sets under the sampling distribution.

\end{example}

\subsection{Almost-sure correct verifier}
When $\phi\in\Phi_{\mathbf{Y},\mathbf{T}}$ is identifying, 
\zcref[S]{eq:identifying-formula} holds for all densities factorizing according to the graph whenever $\phi$ is well-defined;
hence, a counterexample found by the falsifier certifies that $\phi$ is non-identifying. 
If the falsifier does not find a counterexample,
two cases remain possible: (i) 
$\phi$ is non-identifying relative to 
$\mathcal{P}_{\Theta}(\mathcal{X}_{\mathbf{V}})$,
but the sampled parameter values happen to lie in a set
on which the formula agrees with the target 
(for instance,
in \zcref[S]{ex:falsification-example}, no witnessing counterexample is found
for $\phi_1$ if all sampled parameter values lie in
$\{\theta:\lambda_{CT}=0\}$);
(ii) $\phi$ is identifying relative to 
$\mathcal{P}_{\Theta}(\mathcal{X}_{\mathbf{V}})$.
We address the first case by restricting our focus on conditional exponential families and show that, under regularity assumptions, (i) happens only on a nowhere dense measure-zero subset of the
parameter space (\zcref[S]{prop:measure-zero}).
\begin{definition}
[Conditional exponential-family parametrization]
\label{def:exp_family}
    A parametric family $\{p_\theta\}_{\theta\in\Theta}$ of densities 
    factorizing according to $\mathcal{G}$ is said to admit a 
    \emph{conditional exponential-family parametrization}
    if, for all $V\in\mathbf{V}$ and $\theta_V\in\Theta_V$,
    \begin{equation*}
        p_{\theta_V}(v \mid  \mathbf v_{\operatorname{pa}(V)}) = b_V(v,  \mathbf v_{\operatorname{pa}(V)})\,e^{\eta_V(\theta_V)^\top s_V(v,  \mathbf v_{\operatorname{pa}(V)}) - A_V\left(\eta_V(\theta_V),  \mathbf v_{\operatorname{pa}(V)}\right)},
    \end{equation*}
    where $b_V:\mathcal{X}_V\times\mathcal{X}_{\operatorname{pa}(V)}\to[0,\infty)$ is a non-negative function, $\eta_V:\Theta_V\to\mathbb{R}^{k_V}$ is the natural parameter, and $s_V:\mathcal{X}_V\times\mathcal{X}_{\operatorname{pa}(V)}\to\mathbb{R}^{k_V}$ is the sufficient statistic, with $k_V\in\mathbb{N}$. For all $V\in\mathbf{V}$, $\theta_V\in\Theta_V$ and $\mathbf v_{\operatorname{pa}(V)}\in\mathcal{X}_{\operatorname{pa}(V)}$, $A_V(\eta_V(\theta_V),  \mathbf v_{\operatorname{pa}(V)})<\infty$, where $A_V(\eta_V(\theta_V),  \mathbf v_{\operatorname{pa}(V)})\coloneqq\log{\int_{\mathcal{X}_V} b_V(v, \mathbf v_{\operatorname{pa}(V)})e^{\eta_V(\theta_V)^\top s_V(v,  \mathbf v_{\operatorname{pa}(V)})}\mathrm{d}\mu_V(v)}$
    is the normalizer.
\end{definition}

Given a conditional exponential-family parametrization $\{p_\theta\}_{\theta\in\Theta}$ of densities factorizing 
according to $\mathcal{G}$ and an observational formula $\phi$, 
we introduce the following assumptions, which are satisfied for the linear Gaussian submodel in our implementation.
\begin{assumption}
[Open and connected parameter space]
\label{ass:open-connected-parspace}
    For all $V \in \mathbf{V}$, $\Theta_V$ is open and connected.
\end{assumption}
\begin{assumption}
[Analyticity of the natural parameters]
\label{ass:eta-analytic}
    For all $V\in\mathbf{V}$, the map 
    $\theta_V\mapsto\eta_V(\theta_V)$ is analytic with open image $\eta_V(\Theta_V)$.
\end{assumption}
\begin{assumption}
[Regularity]
\label{ass:regular}
    For all $\mathbf t\in\mathcal{X}_{\mathbf{T}}$,
    $\{p_{\theta,\mathbf{X}\mid\doop(\mathbf{T}=\mathbf{t})}(\mathbf{x})\}_{\theta\in\Theta}$,
    where $\mathbf{X}=\mathbf{V}\setminus\mathbf T$, form a regular exponential family.
\end{assumption}

Regularity implies that, for all $\theta\in\Theta$ and all $\mathbf t\in\mathcal{X}_{\mathbf{T}}$, 
$p_{\theta,\mathbf{X}\mid\doop(\mathbf{T}=\mathbf{t})}(\mathbf{x})\propto\widetilde{b}_{\mathbf{t}}(\mathbf{x})e^{\widetilde{\eta}_{\mathbf{t}}(\theta)^\top\widetilde{s}_{\mathbf{t}}(\mathbf{x})}$, 
where $\widetilde{b}_{\mathbf{t}}:\mathcal{X}_{\mathbf{X}}\to[0,\infty)$, 
$\widetilde{\eta}_{\mathbf{t}}:\Theta\to\mathbb{R}^{k_{\mathbf{t}}}$, and 
$\widetilde{s}_{\mathbf{t}}:\mathcal{X}_{\mathbf{X}}\to\mathbb{R}^{k_{\mathbf{t}}}$, with ${k_{\mathbf{t}}}\in\mathbb{N}$, 
are such that the parametrization is minimal, full, 
and $\widetilde{\eta}_{\mathbf{t}}(\Theta)$ is open \citep[][p. 116]{BarndorffNielsen2014}. 
Here, minimality refers to affine independence of the components of
$\widetilde{\eta}_{\mathbf{t}}$ and 
$\mu_{\mathbf{x}}$-almost sure affine independence of the components of $\widetilde{s}_{\mathbf{t}}$, 
while fullness means that 
$\widetilde{\eta}_{\mathbf{t}}(\Theta)=\{\gamma_{\mathbf{t}}\in\mathbb{R}^{k_{\mathbf{t}}}:\int\widetilde{b}_{\mathbf{t}}(\mathbf{x})e^{\gamma_{\mathbf{t}}^\top\widetilde{s}_{\mathbf{t}}(\mathbf{x})}\mathrm{d}\mu_{\mathbf{x}}(\mathbf{x})<\infty\}$.
This assumption 
holds, for example, for discrete and Gaussian but not arbitrary conditional exponential-family parametrizations \citep{Boeken2026}.

\begin{assumption}
[Analyticity of the candidate formula]
\label{ass:phi-analytic}
    There exists a measurable set $E\subseteq\mathcal{X}_{\mathbf Y}$ of full measure such that, for all $\mathbf y\in E$ and all $\mathbf t\in\mathcal{X}_{\mathbf{T}}$, the map $\theta\mapsto\llbracket \phi \rrbracket (p_{\theta,\mathbf{O}},\mathbf{t})(\mathbf y)$ is analytic on $\Theta$.
\end{assumption}
Under \zcref[S]{ass:open-connected-parspace, ass:eta-analytic, ass:regular}, the observational marginal densities that form the base terms of our grammar are analytic in the parameters
\citep[][Theorem~8]{Boeken2026}. In the non-degenerate linear Gaussian case considered in our implementation, the same holds for the conditional densities. Indeed, the mean and covariance parameters of the marginal and conditional Gaussian base terms are analytic in $\theta$: the Gaussian mean and covariance are analytic functions of the natural parameters, while marginalization and conditioning involve only block extraction and analytic operations. By the closure result in \zcref[S]{app:gaussian-closure}, every formula satisfying our grammar therefore yields a Gaussian density whose mean and covariance are obtained from those of the base terms through finitely many operations that preserve analyticity. Since a Gaussian density is obtained from
its mean and covariance through compositions of real-analytic functions, it is analytic in these parameters; therefore, every admissible
formula satisfies \zcref[S]{ass:phi-analytic}.

\begin{restatable}
[Generic failure of non-identifying formulas]
{proposition}{measurezero}
\label{prop:measure-zero}
    Define the set of parameters
    for which the target interventional density agrees with the observational formula for all $\mathbf{t}\in\mathcal{X}_{\mathbf{T}}$ as
    \[
    S\coloneqq\{\theta\in\Theta:\text{for all }\mathbf t\in\mathcal{X}_{\mathbf{T}},\,\llbracket \phi \rrbracket (p_{\theta,\mathbf{O}},\mathbf{t})=p_{\theta,\mathbf{Y}\mid\doop(\mathbf{T}=\mathbf{t})}\,\,\mu_{\mathbf Y}\text{-almost everywhere}\}.
    \]
    Let $E$ be the full-measure set from \zcref[S]{ass:phi-analytic}.
    Under \zcref[S]{ass:open-connected-parspace, ass:eta-analytic, ass:regular, ass:phi-analytic}, 
    if there exists $\theta^*\in\Theta$ and $\mathbf t^*\in\mathcal{X}_{\mathbf{T}}$ such that   
    $\mu_{\mathbf{Y}}(\{
    \mathbf{y}\in E:
    \llbracket \phi \rrbracket (p_{\theta^*,\mathbf{O}},\mathbf{t}^*)(\mathbf y)
    \ne 
    p_{\theta^*,\mathbf{Y}\mid\doop(\mathbf{T}=\mathbf{t}^*)}(\mathbf y)
    \})>0$,
    then $S$ has Lebesgue measure zero.
\end{restatable}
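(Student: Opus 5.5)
Fix the point $(\mathbf y^*,\mathbf t^*)$ at which the formula and the target can be made to disagree, and reduce to the standard fact that a real-analytic function on a connected open set that is not identically zero has a Lebesgue-null zero set.

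\textbf{Step 1: a difference function.} By hypothesis the set $\{\mathbf y\in E:\llbracket\phi\rrbracket(p_{\theta^*,\mathbf O},\mathbf t^*)(\mathbf y)\neq p_{\theta^*,\mathbf Y\mid\doop(\mathbf T=\mathbf t^*)}(\mathbf y)\}$ has positive $\mu_{\mathbf Y}$-measure. Pick $\mathbf y^*$ in it, and, if needed, also in a full-measure set $E'$ where the target is analytic in $\theta$ (Step 2). Define
\[
g(\theta)\coloneqq\llbracket\phi\rrbracket(p_{\theta,\mathbf O},\mathbf t^*)(\mathbf y^*)-p_{\theta,\mathbf Y\mid\doop(\mathbf T=\mathbf t^*)}(\mathbf y^*).
\]
By construction $g(\theta^*)\neq 0$.

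\textbf{Step 2: analyticity of $g$ on $\Theta$.} The first term is analytic by \zcref[S]{ass:phi-analytic}, since $\mathbf y^*\in E$.

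For the second term, use \zcref[S]{ass:regular}. The full interventional density $p_{\theta,\mathbf X\mid\doop(\mathbf T=\mathbf t^*)}$ is a regular exponential family in $\widetilde\eta_{\mathbf t^*}(\theta)$. Its natural parameter is built from the $\eta_V$ of the untruncated factors, which are analytic by \zcref[S]{ass:eta-analytic}. Marginalizing to $\mathbf Y$ preserves analyticity in the natural parameter for $\mu$-a.e.\ $\mathbf y$: Laplace-type integrals $\int\widetilde b\,e^{\gamma^\top\widetilde s}$ are analytic on the interior of the natural parameter space, and fullness plus openness place us there. This is the same mechanism as \citet[Theorem~8]{Boeken2026}, which I would invoke directly for the interventional model, treated as the exponential family of the mutilated graph.

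Composing the two, $g$ is real-analytic on $\Theta=\prod_V\Theta_V$. This product is open and connected by \zcref[S]{ass:open-connected-parspace}.

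\textbf{Step 3: conclude.} Since $g\not\equiv0$, the identity theorem together with the classical result that zero sets of nontrivial real-analytic functions on connected open domains are Lebesgue-null (e.g.\ Mityagin's lemma) gives $\lambda(\{g=0\})=0$.

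It remains to show $S\subseteq\{g=0\}$, which is the subtle point. Membership in $S$ gives equality only $\mu_{\mathbf Y}$-a.e.\ in $\mathbf y$, not at the specific $\mathbf y^*$. I would handle this by upgrading a.e.\ equality to pointwise equality, using the analytic structure in $\mathbf y$ as well. The cleaner route is to avoid fixing $\mathbf y^*$ at all:

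\begin{enumerate}
\item Consider $h(\theta,\mathbf y)=\llbracket\phi\rrbracket(p_{\theta,\mathbf O},\mathbf t^*)(\mathbf y)-p_{\theta,\mathbf Y\mid\doop(\mathbf T=\mathbf t^*)}(\mathbf y)$ on $\Theta\times E$.
\item For each $\mathbf y\in E$, the set $Z_{\mathbf y}=\{\theta:h(\theta,\mathbf y)=0\}$ is null whenever $h(\cdot,\mathbf y)\not\equiv0$.
\item The set $B$ of $\mathbf y$ for which $h(\cdot,\mathbf y)\not\equiv0$ contains the positive-measure set from the hypothesis.
\item By Fubini/Tonelli on $\Theta\times B$ (assuming joint measurability), $\{(\theta,\mathbf y):h=0\}\cap(\Theta\times B)$ is null.
\item Hence for a.e.\ $\theta$, the section $\{\mathbf y\in B:h(\theta,\mathbf y)=0\}$ is $\mu_{\mathbf Y}$-null. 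Such a $\theta$ has $h(\theta,\cdot)\neq0$ on a set of positive measure, so $\theta\notin S$.
\end{enumerate}

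This gives $\lambda(S)=0$. If $\mu_{\mathbf Y}$ is not $\sigma$-finite, one first restricts $B$ to a finite-measure subset of positive measure.

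\textbf{Main obstacles.} The main obstacle I expect is Step 2 for the target term: verifying that the marginal interventional density at fixed $\mathbf y$ is analytic in $\theta$ from regularity. This requires differentiating under the integral across latent variables. Joint measurability of $h$ in Step 3 is a secondary technical point.
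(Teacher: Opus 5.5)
Your proposal is correct and is essentially the paper's proof. Both arguments use analyticity in $\theta$ of the formula (\zcref[S]{ass:phi-analytic}) and of the interventional density (via the Laplace-transform argument adapted from \citet{Boeken2026}), the zero-set lemma for each fixed $\mathbf y$, and a Tonelli argument to pass from pointwise to $\mu_{\mathbf Y}$-a.e.\ agreement; the paper runs that last step by contradiction on $S_{\mathbf t^*}$, and you run it directly.

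The only loose point is your assertion that $\widetilde\eta_{\mathbf t}$ is ``built from the $\eta_V$''. This is not automatic, because the normalizers $A_V(\eta_V,\mathbf v_{\operatorname{pa}(V)})$ depend on parent values. The paper instead obtains analyticity of $\widetilde\eta_{\mathbf t}$ from minimality: it writes $\widetilde\eta_{\mathbf t}(\theta)$ as an invertible linear image of log-density differences at $k+1$ points with affinely independent sufficient statistics. That is exactly what your appeal to Boeken's Theorem~8 for the mutilated model has to supply.
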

We prove the result in \zcref[S]{app:proof-measure-zero}
by establishing analyticity of the interventional density,
and combining it with the analyticity of the candidate formula
ensured by our grammar. Their difference is therefore 
analytic, and the measure-zero conclusion follows from 
the identity theorem \citep{Mityagin2020} for real-analytic
functions: the zero set of a non-zero analytic function
on an open connected domain has Lebesgue measure zero.

The falsifier therefore never rejects an identifying 
formula and, for a non-identifying formula, returns a 
counterexample with probability one relative to the chosen 
parametric family. Combining these properties with the 
soundness and completeness of the ID algorithm yields the 
following result, proved in \zcref[S]{app:proof-as-correct-verifier}.

\begin{restatable}
[Almost-surely correct verifier]
{theorem}{ascorrectverifier}
\label{thm:as-correct-verifier}
    For conditional exponential-family parametrizations, under \zcref[S]{ass:open-connected-parspace, ass:eta-analytic, ass:regular, ass:phi-analytic}, the falsifier in \zcref[S]{def:falsifier} induces an almost-surely correct verifier relative to $\mathcal{P}_{\Theta}(\mathcal{X}_{\mathbf{V}})$.
\end{restatable}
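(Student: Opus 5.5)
The plan is a case split on the input, using \zcref[S]{prop:measure-zero} for the formula case and the soundness and completeness of ID for the \texttt{none} case. Fix a valid input $(\mathcal{G}^p,\mathbf{Y},\mathbf{T},\phi)$, a graph $\mathcal{G}$ with latent projection $\mathcal{G}^p$, and the parametric family $\{p_\theta\}_{\theta\in\Theta}$. Let $\theta_1,\dots,\theta_K$ be independent draws from $\pi$. We must show that, with $\pi^{\otimes K}$-probability one, the output of $\mathcal{F}$ matches the output required of an almost-surely correct verifier relative to $\mathcal{P}_\Theta(\mathcal{X}_{\mathbf V})$.

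If $\phi=\texttt{none}$, Case~\ref{itm:id-check} of \zcref[S]{def:falsifier} and the verifier condition are the same statement: non-identifiability of $\mathbf{Y}\mid\doop(\mathbf{T})$ in $\mathcal{G}$. This is decided by the ID algorithm, which halts on every input and is sound and complete \citep{Shpitser2006}. So the outputs agree deterministically, regardless of the sampled parameters.

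If $\phi\in\Phi_{\mathbf{Y},\mathbf{T}}$, let $S$ be the agreement set of \zcref[S]{prop:measure-zero}. By Case~\ref{itm:ae-equality-check}, $\mathcal{F}$ returns \texttt{true} exactly when $\theta_i\in S$ for all $i$. There are two subcases.
\begin{enumerate}
\item If $\phi$ is identifying relative to $\mathcal{P}_\Theta(\mathcal{X}_{\mathbf V})$, then by \zcref[S]{def:identifying-formula} we have $S=\Theta$. Hence $\mathcal{F}$ returns \texttt{true} surely, which is correct.
\item If $\phi$ is not identifying relative to $\mathcal{P}_\Theta$, then there exist $\theta^*$ and $\mathbf t^*$ at which the formula and the target differ on a set of positive $\mu_{\mathbf Y}$-measure. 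Intersecting that set with the full-measure set $E$ from \zcref[S]{ass:phi-analytic} leaves its measure positive. This is exactly the hypothesis of \zcref[S]{prop:measure-zero}, so $S$ has Lebesgue measure zero. Since each $\pi_V$ is absolutely continuous, $\pi$ is absolutely continuous with respect to Lebesgue measure on $\Theta$, so $\pi(S)=0$. Then $\Pr(\theta_1\in S)=0$, and hence $\Pr(\forall i:\theta_i\in S)=\pi(S)^K=0$ for every $K\ge1$. So $\mathcal{F}$ returns \texttt{false} almost surely, which is correct.
\end{enumerate}

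The main obstacle is definitional rather than technical, since \zcref[S]{prop:measure-zero} does the analytic work. First, I would check that ``not identifying relative to $\mathcal{P}_\Theta$'' really yields a disagreement on a set of positive measure. Here I would read \zcref[S]{def:identifying-formula} on the domain where $\llbracket\phi\rrbracket$ is defined; \zcref[S]{ass:phi-analytic} implicitly makes it defined on all of $\Theta$. Second, I would check measurability of $S$ so that $\pi(S)$ makes sense. If $S$ is not obviously measurable, I would instead bound it by a measurable null superset, namely the zero set of the analytic difference at $(\mathbf y,\mathbf t^*)$ for a suitable $\mathbf y\in E$, as constructed in the proof of the proposition. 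This gives the almost-sure statement via outer measure.

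Finally, I would note that $K\ge1$ is needed in the second subcase. For $K=0$ the falsifier vacuously returns \texttt{true}, so the theorem is read for $K\in\mathbb{N}$ with $K\geq 1$.
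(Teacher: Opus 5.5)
Your proposal is correct and follows the paper's proof essentially step for step. The \texttt{none} case is settled by soundness and completeness of ID; an identifying formula gives $S=\Theta$; and a non-identifying one gives $\pi(S)^K=0$ via \zcref[S]{prop:measure-zero}, absolute continuity of $\pi$, and independence of the draws. Your added checks (intersecting the disagreement set with $E$ to meet the proposition's hypothesis, and requiring $K\ge 1$) are details the paper leaves implicit.

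One correction to your measurability fallback: the single-$\mathbf y$ zero set built in part~(i) of the proposition's proof only contains $S_E$, not $S$. To contain $S$, choose $\mathbf y$ with $g_{\mathbf t^*}(\theta^*,\mathbf y)\neq 0$ lying outside the null union of the exceptional sets of a countable dense subset of $S$. Then use that the zero set of $g_{\mathbf t^*}(\cdot,\mathbf y)$ is closed in $\Theta$.
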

In light of \zcref[S]{thm:as-correct-verifier},
$K=1$ in 
\zcref[S]{def:falsifier} suffices for almost-sure correctness 
relative to $\mathcal{P}_{\Theta}(\mathcal{X}_{\mathbf{V}})$
under exact evaluation and absolutely continuous parameter sampling. 
These conditions are not met by pseudo-random floating-point implementations,
and comparisons up to a fixed tolerance do not inherit the
same guarantee (see \zcref[S]{app:exact} for an example).
In the linear Gaussian case, however, 
\zcref[S]{eq:identifying-formula} is a rational function of the mean 
and covariance parameters; after clearing denominators, verification 
therefore reduces to polynomial identity testing, which is decidable by
exact symbolic procedures \citep[][Chapter~4]{Shpilka2010}. 
Since symbolic procedures are often computationally expensive, 
and efficient deterministic procedures are not available in general, 
falsification remains justified as a practical randomized procedure: 
we sample parameters from a large finite integer set and evaluate the 
polynomial using exact arithmetic.
This addresses the gap
between the idealized assumptions and actual computational implementation:
it removes floating-point error
and 
the polynomial identity testing bound
controls the finite-sampling 
probability of falsely accepting a non-identifying formula. See \zcref[S]{app:exact} for details.

\section{Verification for front-door gateways}\label{sec:gateway}

Consider identification relative to the class of front-door formulas
$\Phi_{\mathbf{Y},\mathbf{T}}^\mathrm{fd}\coloneqq\{\phi_{\mathbf{Z}}^{\mathrm{fd}}\coloneqq\sum_{\mathbf{z}}p(\mathbf{z}\mid \mathbf t)\sum_{\mathbf t'}p(\mathbf y\mid \mathbf t', \mathbf{z})\,p(\mathbf t'):\mathbf{Z}\subseteq\mathbf{O}\setminus(\mathbf{T}\cup\mathbf{Y})\}$. 
The front-door criterion \citep[][Section~3.2]{Pearl1995} gives graphical conditions
on a candidate set $\mathbf{Z}$
sufficient for the corresponding front-door formula
to be identifying
for $\mathbf{Y}\mid\doop(\mathbf{T})$ 
(see \zcref[S]{app:frontdoor}).
These conditions have been described as overly restrictive
\citep[][Section~3.3.2]{Pearl2009}. 
Indeed, the front-door criterion
is sound but not exhaustively complete relative to
$\Phi_{\mathbf{Y},\mathbf{T}}^\mathrm{fd}$: 
there exist sets $\mathbf{Z}$ for which $\phi_{\mathbf{Z}}^{\mathrm{fd}}$
is identifying for $\mathbf{Y}\mid\doop(\mathbf{T})$
even though $\mathbf{Z}$ does not satisfy the criterion. 
Consider for instance \zcref[S]{fig:front-door}, which shows 
an acyclic directed mixed graph 
with unobserved confounding between $T$ and $Y$. 
Here, neither $\{M,A\}$ nor $\{M,C\}$ satisfies the
front-door criterion, since there is a back-door path from these 
sets to $Y$ through $B$ that is not blocked by $T$. 
Nevertheless, our falsifier certifies the front-door
formula for either set as identifying relative to the
parametric submodel it considers. In this example, we can
in fact prove that the formulas are identifying
in the full non-parametric model; see \zcref[S]{app:proof-front-door}.
A successful do-calculus derivation would also certify 
such formulas, but derivation search is not a general 
halting verifier of candidate formulas (\zcref[S]{thm:semi-decidability}).
Therefore, the front-door criterion is not exhaustively complete
relative to $\Phi_{\mathbf{Y},\mathbf{T}}^\mathrm{fd}$:
it can fail to certify front-door formulas that are 
identifying nonetheless.

\begin{figure}[t]
    \centering
    \figwidth=24pc
    \begin{tikzpicture}[>=stealth, line width=1.8pt]
        \node (T) at (0,0) {$T$};
        \node (M) at (2,0) {$M$};
        \node (Y) at (4,0) {$Y$};
        \node (C) at (6,0) {$C$};
        \node (B) at (8,0) {$B$};

        \node (A) at (6,1.5) {$A$};

        \draw[->] (T) -- (M);
        \draw[->] (M) -- (Y);
        \draw[->] (C) -- (Y);
        \draw[->] (B) -- (C);

        \draw[->] (B) -- (A);
        \draw[->] (A) -- (Y);

        \draw[<->, bend left=65] (T) to (Y);
    \end{tikzpicture}
    \caption{Acyclic directed mixed graph with unobserved confounding between $T$ 
    and $Y$. Although $\{M,A\}$ and $\{M,C\}$ do not satisfy the
    front-door criterion, the front-door adjustment formula can 
    still be certified as identifying. 
    Example based on \citet[][Figure~1]{Wienobst2024}.}
    \label{fig:front-door}
\end{figure}

Verification allows us to develop the \emph{gateway test}, 
a procedure that is sound and exhaustively complete relative to $\Phi_{\mathbf{Y},\mathbf{T}}^\mathrm{fd}$.
The gateway test enumerates all candidate sets $\mathbf{Z}\subseteq\mathbf{O}\setminus(\mathbf{T}\cup\mathbf{Y})$,
constructs
the corresponding front-door formula $\phi_{\mathbf{Z}}^\mathrm{fd}\in\Phi_{\mathbf{Y},\mathbf{T}}^\mathrm{fd}$, 
and applies the verifier to it. 
A candidate set $\mathbf{Z}$ passes the test if and only if
$\phi_{\mathbf{Z}}^\mathrm{fd}$ is verified as identifying.
Thus, with an exact verifier, the gateway test returns
exactly those candidate sets
whose front-door formulas are identifying.
In practice, if one uses the falsifier from \zcref[S]{sec:hiprof},
the procedure is almost-surely sound and exhaustively complete
relative to $\Phi_{\mathbf{Y},\mathbf{T}}^\mathrm{fd}$
for the conditional exponential family chosen in the falsification routine.
The same idea applies to any finite class of candidate formulas.
When a formula class is indexed by candidate sets,
a graphical criterion
can replace the verifier in the inner loop,
but only if it is exhaustively complete relative to that class
(see \zcref[S]{app:sound-and-complete}):
for each candidate set,
it must decide whether the associated formula is identifying,
rather than merely 
guarantee that some valid set is found whenever one exists
(the latter suffices for completeness for identification).

\section{Discussion and outlook}\label{sec:outlook}

Our strategy to verification restricts the model class, which 
makes verification decidable and tractable. However, restricting the model 
class is not a general solution, since other parametric families
can involve non-algebraic expressions leading to an undecidable
decision problem \citep{Richardson1968}.

The falsification strategy comes with guarantees that are relative 
to the parametric submodel $\mathcal{P}_{\Theta}(\mathcal{X}_{\mathbf V})$
used by the falsifier 
(\zcref[S]{thm:as-correct-verifier}): if a candidate formula 
is certified by the falsifier, it is identifying 
relative to 
$\mathcal{P}_\Theta(\mathcal{X}_{\mathbf V})$ almost surely,
but it need not be valid in the full
non-parametric graphical model
$\mathcal{P}_{\mathcal{G}}(\mathcal{X}_{\mathbf V})$.
Characterizing when correctness relative to 
$\mathcal{P}_\Theta(\mathcal{X}_{\mathbf V})$ transfers to 
correctness in $\mathcal{P}_{\mathcal{G}}(\mathcal{X}_{\mathbf V})$
remains open.
Possible directions include using increasingly rich parametric
families or relaxing \zcref[S]{def:obsformula} to focus on mean effects
rather than interventional distributions.
Finally, verification can be extended
beyond unconditional 
interventional targets 
by allowing both sides of \zcref[S]{eq:identifying-formula} to be functionals in observational and interventional densities,
including (in)equality constraints implied by the graph \citep{Sachs2026},
as well as beyond our formula grammar in \zcref[S]{def:obsformula},
for example,
to include formulas
in linear instrumental-variable models
identifying mean effects instead of
full interventional distributions.

\bibliography{paper-ref}
\newpage

\appendix
\clearpage
\addcontentsline{toc}{section}{Appendix}
\listofappendices
\vskip 0.75in
\renewcommand{\theequation}{\thesection.\arabic{equation}}
\ifarxiv
    \appsection{Preliminaries}\label{app:preliminaries}
\else
    \section{Preliminaries}\label{app:preliminaries}
\fi

\subsection{Graphical preliminaries}

A graph $\mathcal{G}=(\mathbf{V},\mathbf{E})$ consists of a finite non-empty
node set $\mathbf{V}$, whose nodes represent random variables, 
and an edge set $\mathbf{E}$. 
Nodes joined by an edge are called \emph{adjacent},
and an edge joining two nodes is \emph{incident} to those nodes.
If the edge set contains only 
ordered pairs of distinct vertices, that is, if 
$\mathbf{E}\subseteq \{(V_i,V_j)\in \mathbf{V}\times \mathbf{V}:V_i\neq V_j\}$,
then $\mathcal{G}$ is a \emph{directed graph}; for 
$(V_i,V_j)\in \mathbf{E}$, we write $V_i\to V_j$. 
Directed graphs contain at most one edge between any pair
of distinct nodes.
A \emph{directed mixed graph} may instead contain two types of edges:
directed ($\to$) and bidirected ($\leftrightarrow$), with at most 
one edge of each type between any pair of distinct nodes.

A \emph{walk} $w$ in $\mathcal{G}$ is a sequence of nodes
$V_1,\dots,V_j\in\mathbf{V}$ and a corresponding sequence of edges
$e_1,\dots,e_{j-1}$ such that, for all $i\in\{1,\dots,j-1\}$,
$e_i$ is an edge between $V_i$ and $V_{i+1}$.
The first node $V_1$ and the last node $V_j$ are called the
\emph{endpoints} of $w$. A \emph{path} is a walk whose nodes are
distinct. A walk, or path, 
from a set $\mathbf{X}\subseteq\mathbf{V}$ to a disjoint set
$\mathbf{Y}\subseteq\mathbf{V}$ is a walk, or path, 
from some $X\in\mathbf{X}$ to some $Y\in\mathbf{Y}$. 
Such a walk, or path, is \emph{proper} if only its
first node belongs to $\mathbf{X}$.
A \emph{back-door path} from a set $\mathbf{X}\subseteq\mathbf{V}$
to a disjoint set $\mathbf{Y}\subseteq\mathbf{V}$ is a proper path
whose first edge has an arrowhead into a node in $\mathbf{X}$.
A \emph{directed walk}, or path, from $V_1$ to $V_j$ is a walk, or path, whose edges are
all directed and point from $V_1$ towards $V_j$. 
A \emph{directed cycle} is a directed walk
$V_1,\dots,V_j,V_1$ with $V_1,\dots,V_j$ distinct.
A directed graph without directed cycles is a \emph{directed acyclic
graph}, whereas a directed mixed graph without directed cycles is an
\emph{acyclic directed mixed graph}. If $V_i\to V_j$, then $V_i$ is a
\emph{parent} of $V_j$. For all $V\in\mathbf{V}$, we write
$\operatorname{pa}(V)$ for the set of parents of $V$ in $\mathcal{G}$.

In the latent-variable setting, we write 
$\mathbf{V}=\mathbf{O}\sqcup\mathbf{L}$, where $\mathbf{O}$
and $\mathbf{L}$ denote the observed and latent variables, 
respectively. We represent a latent-variable directed acyclic
graph $\mathcal{G}$ by an acyclic directed mixed graph 
$\mathcal{G}^p$ obtained via \emph{latent projection} 
\citep{Verma1990,Richardson2003}, defined as follows.
\begin{definition}[\citealp{Richardson2023}, Definition~A.2]
    Let $\mathcal{G}$ be a latent-variable directed acyclic graph
    with node set $\mathbf{V}=\mathbf{O}\sqcup\mathbf{L}$,
    where nodes in $\mathbf{O}$ are observed and nodes in 
    $\mathbf{L}$ are unobserved. The \emph{latent projection} of 
    $\mathcal{G}$ onto $\mathbf{O}$ is an acyclic directed mixed
    graph $\mathcal{G}^p$ where, for all distinct 
    $O_i,O_j\in\mathbf{O}$:
    \begin{enumerate}[label=(\roman*)]
        \item $\mathcal{G}^p$ contains a directed edge 
        $O_i\to O_j$ if $\mathcal{G}$ has a directed path from 
        $O_i$ to $O_j$ whose non-endpoint nodes, if any, all 
        belong to $\mathbf{L}$.
        \item $\mathcal{G}^p$ contains a bidirected edge 
        $O_i\leftrightarrow O_j$ if $\mathcal{G}$ has a path 
        between $O_i$ and $O_j$ such that 
        $O_i\leftarrow L_1\leftarrow\cdots\leftarrow L_h\to\cdots\to L_m\to O_j$, with $L_k\in\mathbf{L}$
        for all $k\in\{1,\dots,m\}$.
    \end{enumerate}
\end{definition}

Given a directed acyclic graph $\mathcal{G}$ over nodes 
$\mathbf{V}$, let $w$
be a walk with node sequence 
$V_1,\dots,V_j\in\mathbf{V}$. A non-endpoint node $V_i$, with
$i\in\{2,\dots,j-1\}$, is a \emph{collider} on $w$ if 
the two edges on $w$ incident to $V_i$ both have arrowheads
into $V_i$, that is, if $w$ contains the subwalk
$V_{i-1}\rightarrow V_i \leftarrow V_{i+1}$.
A non-endpoint node on $w$ that is not a collider is called a
\emph{non-collider}.
These notions lead to the definition of $d$-separation.
\begin{definition}
[$d$-connection and separation]
    Given a directed acyclic graph $\mathcal{G}$ over nodes 
    $\mathbf{V}$, a walk $w$ in $\mathcal{G}$ is
    \emph{open} given a set 
    $\mathbf{S}\subseteq\mathbf{V}\setminus\{V_1,V_m\}$ if
    \begin{enumerate}
        \item every collider on $w$ belongs to $\mathbf{S}$; and
        \item every non-collider on $w$ does not belong to $\mathbf{S}$.
    \end{enumerate}
    A walk that is not open given $\mathbf{S}$ is \emph{blocked}
    given $\mathbf{S}$. For all disjoint subsets $\mathbf{A}, \mathbf{B}, \mathbf{S}\subseteq \mathbf{V}$, 
    we say that $\mathbf{A}$ 
    and $\mathbf{B}$ are \emph{$d$-connected} by $\mathbf{S}$
    if there exist $A\in\mathbf{A}$, $B\in\mathbf{B}$, and
    an open walk from $A$ to $B$ given $\mathbf{S}$. Otherwise,
    $\mathbf{A}$ and $\mathbf{B}$ are \emph{$d$-separated}
    by $\mathbf{S}$.
\end{definition}
This definition of $d$-separation is equivalent to
the path-based definition, such as those of
\citet[][Definition~1.2.3]{Pearl2009} and \citet[][Definition~6.1]{Peters2017}.

\subsection{Causal preliminaries}\label{app:causal-preliminaries}
Consider a latent-variable directed acyclic graph 
$\mathcal{G}$ over $\mathbf{V}=\mathbf{O}\sqcup\mathbf{L}$ and 
suppose that all directed edges in $\mathcal{G}$ represent causal
relationships. Under this interpretation, for all 
$V_1,V_2\in\mathbf{V}$, a directed edge $V_1\to V_2$ indicates 
that $V_1$ is a direct cause of of $V_2$; a directed path 
$V_1\to\cdots\to V_2$ indicates that $V_1$ is a cause of $V_2$; 
and a bidirected edge $V_1\leftrightarrow V_2$ indicates that 
$V_1$ and $V_2$ share an unobserved common cause.
For all $V\in\mathbf{V}$, let $\mathcal{X}_V$ be the sample space 
of $V$, and let $\mu_V$ be a $\sigma$-finite measure on $\mathcal{X}_V$.
Throughout, we consider densities in $\mathcal{P}_{\mathcal{G}}(\mathcal{X}_{\mathbf{V}})$,
that is, densities that factorize according to $\mathcal{G}$ 
(see \zcref[S]{sec:causal-model-identification}). 
Every $q\in\mathcal{P}_{\mathcal{G}}(\mathcal{X}_{\mathbf{V}})$ 
satisfies the \emph{global Markov property} with respect
to $\mathcal G$: for all disjoint
$\mathbf A,\mathbf B,\mathbf S\subseteq\mathbf V$, if
$\mathbf A$ and $\mathbf B$ are $d$-separated by $\mathbf S$ in
$\mathcal G$, then $\mathbf{A}$ and $\mathbf{B}$ are
conditionally independent given $\mathbf{S}$ under $q$.

Consider the intervention node set $\mathbf T\subseteq \mathbf O$,
and intervention values $\mathbf t\in \mathcal X_{\mathbf T}$. 
Under the intervention $\operatorname{do}(\mathbf T=\mathbf t)$ 
(or shorthand $\operatorname{do}(\mathbf{t})$), 
the variables in $\mathbf T$ are set to $\mathbf t$. 
Let $\mathbf X\coloneqq\mathbf O\setminus \mathbf T$. 
The resulting interventional density is given by the
truncated factorization formula \citep[][Section~1.3]{Pearl2009}
(interventions may change the reference measure;
we nevertheless adopt the standard notation for simplicity):
\begin{align*}
    p_{\mathbf{X}\mid\doop(\mathbf{T}=\mathbf{t})}(\mathbf{x})
    =\int_{\mathcal{X}_{\mathbf{L}}}p(\mathbf{x},\boldsymbol{\ell})\,\mathrm{d}\mu_{\mathbf{L}}(\boldsymbol{\ell}),
\end{align*}
where 
\begin{align*}
    p(\mathbf x,\boldsymbol \ell)=\prod_{V\in \mathbf V\setminus \mathbf T} p(v\mid  \mathbf v_{\operatorname{pa}(V)})\big|_{\,\mathbf T=\mathbf t},
\end{align*}
that is, the conditional densities corresponding to the
intervened variables are removed from the factorization and
the remaining factors are evaluated after substituting $\mathbf T=\mathbf t$. 
The above expression, also known as the
g-formula \citep{Robins1996}, or the manipulated density 
\citep{Spirtes2000}, uses the factorization over the 
full latent-variable causal directed acyclic graph;
as only the observational marginal $p_{\mathbf{O}}$
is observed, the resulting interventional density need
not be identifiable from it. Identification concerns
precisely when the interventional can nevertheless be expressed as a 
functional of $p_{\mathbf{O}}$ satisfying constraints encoded by the latent projection $\mathcal{G}^p$ (see \zcref[S]{sec:causal-model-identification}).

\appsection{A typed, kernel-preserving grammar for observational formulas}\label{app:grammar}
We introduce a well-defined grammar for observational formulas,
designed to generate expressions that denote probability kernels
rather than arbitrary algebraic combinations of densities.

For disjoint sets
$\mathbf{A},\mathbf{B}\subseteq\mathbf{O}$, write
\[
E:\mathbf{A}\mid \mathbf B
\]
to mean that the expression \(E\) denotes a conditional density over \(\mathbf{A}\) given \(\mathbf{B}\),
or equivalently a density representation of a Markov kernel from
\(\mathcal X_{\mathbf B}\) to \(\mathcal X_{\mathbf A}\). This notation is a typing device:
it records the output variables of an expression and its conditioning
arguments.

\paragraph{Base terms.}
For disjoint sets $\mathbf{A},\mathbf{B}\subseteq\mathbf{O}$, observational marginals and conditionals
are admissible base terms:
\[
p(\mathbf a):\mathbf A\mid \varnothing,
\qquad
p(\mathbf a\mid \mathbf b):\mathbf A\mid \mathbf B.
\]
where \(p(\mathbf a\mid \mathbf b)\) is understood only on the part of $\mathcal{X}_{\mathbf B}$ on which $p(\mathbf b)>0$.
Although marginals and conditionals, where defined,
can be derived from the observed joint by
the rules below,
we take them as base terms to match the usual notation for identifying formulas.
Quotient notation such as $p(\mathbf a,\mathbf b)/p(\mathbf b)$
is understood only as shorthand for the corresponding conditional density,
not as an arbitrary division operation.

\paragraph{Well-formed products.}
Products are admissible only when they can be interpreted as
sequential compositions of kernels. Each factor must then introduce
a new set of output variables, so that every output variable
is assigned exactly to one factor. Moreover, every conditioning
variable must already be available when the factor is evaluated:
it must either be conditioned on by the product as a whole, or 
have been introduced by an earlier factor.

More formally, 
let \(\mathbf A_1,\ldots,\mathbf A_k,\mathbf C\subseteq \mathbf O\),
where \(\mathbf A_1,\dots,\mathbf A_k\) are pairwise disjoint and
disjoint from \(\mathbf C\). 
Suppose the factors can be ordered so that, for all $i\in\{1,\dots,k\}$,
\[
E_i : \mathbf A_i \mid \mathbf D_i,
\qquad
\mathbf D_i \subseteq \mathbf C\cup \mathbf A_1\cup\cdots\cup \mathbf A_{i-1}.
\]
A factor need not depend on all variables that are already available;
the requirement is only that it does not condition on variables that
have not yet been introduced.
Then, the product of the factors is 
\[
\prod_{i=1}^k E_i
:
\left(\bigsqcup_{i=1}^k\mathbf A_i\right)\mid \mathbf C.
\]
Thus, the product denotes a kernel over all variables introduced
by the factors, conditional on $\mathbf{C}$.

This rule includes ordinary chain-rule products and more general
chain-like constructions.
For example, $p(x)p(y\mid t,x)$ is admissible as a kernel over $(X,Y)$ conditional on $T$: the first factor 
introduces $X$, and the second factor introduces $Y$ while conditioning
only on $T$ and the already introduced variable $X$. The rule also
allows products such as $p(x)p(y)$, which
denotes a density over $(X,Y)$, although not necessarily the 
observational joint $p(x,y)$. By contrast, products such as $p(y)p(y)$
are not admissible, since the same output variable is introduced 
twice. Similarly, products such as $p(a\mid b)p(b\mid a)$ are not 
admissible either, since no ordering makes the conditioning variables 
available before the corresponding outputs are introduced.

\paragraph{Marginalization.}
Marginalization is admissible when it removes output variables
from a well-typed kernel. If an expression denotes a joint kernel
over disjoint sets of variables $\mathbf{A}$ and $\mathbf{B}$
conditional on $\mathbf{C}$, that is,
\[
E:\mathbf A,\mathbf B\mid \mathbf C,
\]
then, integrating out $\mathbf{B}$ leaves a kernel over $\mathbf{A}$,
conditional on the same variables $\mathbf{C}$:
\[
\sum_{\mathbf{b}} E(\mathbf a,\mathbf b\mid \mathbf c):\mathbf A\mid \mathbf C.
\]

\paragraph{Internal conditional division.}
The grammar allows variables to be moved from the output
side of an intermediate kernel to its conditioning side.
For pairwise disjoint sets $\mathbf{A}, \mathbf{B}, \mathbf{R}\subseteq\mathbf{O}$, if
\[
E:\mathbf A,\mathbf R\mid \mathbf B,
\]
then, for every fixed $\mathbf{b}$, 
$E(\mathbf{a},\mathbf{r}\mid\mathbf{b})$
denotes a joint density over $(\mathbf{A},\mathbf{R})$.
Let $\mathbf{M}\subseteq\mathbf{A}$. From the same kernel $E$,
form the conditional kernel of $\mathbf{R}$ given $(\mathbf{M},\mathbf{B})$,
denoted by $E(\mathbf{r}\mid\mathbf{m},\mathbf{b})$.
Where this conditional is defined, the grammar admits
\[
\frac{E(\mathbf{a},\mathbf r\mid \mathbf b)}{E(\mathbf r\mid \mathbf m,\mathbf b)}:\mathbf A\mid \mathbf B,\mathbf R.
\]
Thus, internal conditional division turns a kernel over
$(\mathbf{A},\mathbf{R})$ given $\mathbf{B}$ into a kernel over $\mathbf{A}$
given $(\mathbf{B},\mathbf{R})$. 
Ordinary conditioning of an intermediate kernel is the special
case with $\mathbf{M}=\emptyset$.

The denominator must be derived from the same kernel $E$ as the numerator.
A quotient with the same formal variables but with an independently 
supplied denominator is not generally kernel-preserving.
This is why the rule is an internal conditional division, rather than
an arbitrary algebraic quotient. Quotients are admissible
only when they are parsed as conditionals or as internal conditional
divisions, where kernel preservation follows from typing rule.

To see why this rule preserves kernel normalization, write 
$\mathbf{U}=\mathbf{A}\setminus\mathbf{M}$
such that $\mathbf{A} = (\mathbf{U},\mathbf{M})$
and recall that $E(\mathbf a,\mathbf r\mid \mathbf b)$
is, for each fixed $\mathbf b$, a joint density over $(\mathbf A,\mathbf R)$.
For every $\mathbf{b}$,
by the product rule,
\[
E(\mathbf{u},\mathbf{m},\mathbf{r}\mid\mathbf{b})
=
E(\mathbf u\mid \mathbf m,\mathbf r,\mathbf b)
E(\mathbf r\mid \mathbf m,\mathbf b)
E(\mathbf m\mid \mathbf b).
\]
After division by the internally derived conditional
\(E(\mathbf r\mid \mathbf m,\mathbf b)\),
the resulting expression is
\[
E(\mathbf u\mid \mathbf m,\mathbf r,\mathbf b)
E(\mathbf m\mid \mathbf b),
\]
which, for every \((\mathbf r,\mathbf b)\),
integrates to one over \((\mathbf U,\mathbf M)=\mathbf A\).

Internal conditional division is purely a typing rule: it is stated in terms of
kernels and does not refer to a graph. Graphical fixability is different:
it is a graph-based condition used to determine when a corresponding
fixing operation is valid \citep{Richardson2023}. When such a fixing
operation is written at the level of kernels, its division step is an
instance of the internal conditional division rule above: the denominator
is a conditional kernel derived from the same intermediate kernel as the
numerator. Thus, the expressions at the kernel level obtained from
fixing notation are covered by the grammar after expansion,
while fixability itself is not an additional typing rule.

\paragraph{Admissible formulas.}
An observational formula \(\phi\) for \(\mathbf Y\) under intervention
on \(\mathbf T\)
is an expression generated recursively by the preceding rules,
with no free variables other
than \(\mathbf y\) and \(\mathbf t\), and of type
$\mathbf Y\mid \mathbf T$,
which induces the partial functional
\[
\llbracket \phi \rrbracket:
\mathcal P(\mathcal X_{\mathbf O})
\times \mathcal X_{\mathbf T}
\rightharpoonup
\mathcal P(\mathcal X_{\mathbf Y}).
\]

This formalizes a convention often left implicit in identification
formulas:
products of density terms must be well-formed products,
sums or integrals must marginalize variables from a well-typed kernel,
and quotients must be conditionals or internal conditional divisions rather than arbitrary algebraic ratios.
For example,
$\sum_c p(y\mid t,c)p(c)$
is admissible and
has type \(Y\mid T\), 
whereas \(p(y)p(y)\) is not admissible merely as
an algebraic product of density symbols.

\subsection{Gaussian closure}
\label{app:gaussian-closure}
In the Gaussian case,
we restrict attention to non-degenerate Gaussian kernels,
that is, Gaussian kernels with strictly positive-definite
covariance matrices.
Marginalizations and well-formed products preserve
non-degeneracy, which ensures that all conditionals and
internal conditional divisions are defined everywhere.

Suppose the observational law over \(\mathbf O\) is multivariate 
Gaussian, $\mathbf O\sim \mathcal N(\mu,\Sigma)$, with $\Sigma\succ 0$. 
Then every admissible observational formula \(\phi\) of type 
\(\mathbf Y\mid \mathbf T\)
denotes, for all $\mathbf{t}\in\mathcal{X}_{\mathbf{T}}$,
a linear Gaussian density over \(\mathbf Y\): 
wherever
\(\llbracket\phi\rrbracket(p_{\mathbf{O}},\mathbf t)\) is defined, there exist
\(a_\phi\in\mathbb{R}^{|\mathbf{Y}|},B_\phi\in\mathbb{R}^{|\mathbf{Y}|\times|\mathbf{T}|},\Omega_\phi\in\mathbb{R}^{|\mathbf{Y}|\times|\mathbf{Y}|}\),
depending on \(p\) and \(\phi\) but not on
the intervention value \(\mathbf t\), such that
\[
\llbracket\phi\rrbracket(p_{\mathbf{O}},\mathbf t)(\mathbf y)
=
\mathcal N\left(\mathbf y;\ a_\phi+B_\phi \mathbf t,\ \Omega_\phi\right).
\]
Thus, the mean parameter is affine in \(\mathbf t\), while the covariance
parameter is independent of \(\mathbf t\).

The claim follows by induction over the grammar:
Gaussian marginals and conditionals are linear Gaussian kernels;
well-formed products of
compatible linear Gaussian kernels define joint linear Gaussian kernels;
marginalizing a proper Gaussian kernel again yields a Gaussian kernel
over the remaining variables;
and internal conditional division preserves linear Gaussianity
because the denominator is the conditional kernel computed from the same joint Gaussian kernel as the numerator, and the quotient
is therefore another conditional Gaussian kernel of that joint law.
Hence, every admissible formula remains linear Gaussian.
\ifarxiv
    \appsection{Verifying adjustment formulas with graphical criteria}\label{app:adjustment}
\else
    \section{Verifying adjustment formulas with graphical criteria}\label{app:adjustment}
\fi

Instead of asking whether the target interventional density is 
identifiable (by any observational formula),
one may ask whether it is identifiable by 
some member of a restricted class of formulas. 
Consider the class of adjustment formulas
\begin{equation}\label{eq:adjustment-class}
\Phi_{\mathbf{Y},\mathbf{T}}^{\mathrm{adj}}\coloneqq
\left\{
\phi_{\mathbf{Z}}^{\mathrm{adj}}\coloneqq
\sum_{\mathbf{z}} p(\mathbf y\mid \mathbf t,\mathbf z)\,p(\mathbf z)
:
\mathbf{Z}\subseteq\mathbf{O}\setminus(\mathbf T\cup\mathbf Y)
\right\}.
\end{equation}
Covariate adjustment asks whether there exists a set $\mathbf{Z}$
yielding an identifying formula for 
$\mathbf{Y}\mid\doop(\mathbf{T})$,
and graphical criteria for adjustment are conditions
on candidate sets $\mathbf{Z}$ that certify when
the corresponding adjustment formula 
$\phi_{\mathbf{Z}}^{\mathrm{adj}}$
is identifying for $\mathbf{Y}\mid\doop(\mathbf{T})$.
Different criteria give different guarantees.
The back-door criterion \citep[][Section~3.1]{Pearl1995}, 
for instance, is sound but not exhaustively complete 
relative to $\Phi_{\mathbf{Y},\mathbf{T}}^{\mathrm{adj}}$.
The adjustment criterion \citep[][Definition~5]{Shpitser2010} 
is instead sound and exhaustively complete 
relative to $\Phi_{\mathbf{Y},\mathbf{T}}^{\mathrm{adj}}$.
Both criteria are formulated for directed acyclic graphs, 
while others extend adjustment to richer graph classes.
In particular, the generalized adjustment criterion \citep[][Definition~4]{Perkovic2018} 
is sound and exhaustively complete 
relative to $\Phi_{\mathbf{Y},\mathbf{T}}^{\mathrm{adj}}$
in directed acyclic graphs, maximal ancestral graphs \citep{Richardson2002}, 
and their respective equivalence classes, 
thus allowing for the presence of unobserved variables. 
These criteria do not, however, apply to acyclic directed 
mixed graphs. 

In graph classes for which a sound and exhaustively complete 
graphical adjustment criterion is available, 
checking whether a candidate set $\mathbf{Z}$
satisfies the criterion is equivalent to verifying whether the corresponding adjustment formula $\phi_{\mathbf{Z}}^{\mathrm{adj}}$
is identifying for $\mathbf{Y}\mid\doop(\mathbf{T})$.

Such criteria can therefore be viewed as non-parametric graphical
shortcuts to verification,
but only in the graph class for which the criteria are valid.
When no such shortcuts are available, verifiers still apply.
For instance, even though no sound and exhaustively complete 
adjustment criterion is currently known for acyclic directed mixed graphs,
one can still enumerate the finite class 
$\Phi_{\mathbf{Y},\mathbf{T}}^{\mathrm{adj}}$
and verify each candidate adjustment formula directly
(see \zcref[S]{app:sound-and-complete}).

Moreover, adjustment criteria can be repurposed as verifiers
only for formulas in the restricted class 
$\Phi_{\mathbf{Y},\mathbf{T}}^{\mathrm{adj}}$.
When no candidate adjustment set satisfies
a sound and exhaustively complete graphical criterion
for the relevant graph class,
the target is not identifiable by any adjustment formula in
$\Phi_{\mathbf{Y},\mathbf{T}}^{\mathrm{adj}}$,
but it may still be identifiable.
In fully observed directed acyclic graphs, this
distinction is less visible for single-node interventions,
since adjustment suffices to determine identifiability in that special case.
Beyond this setting, however, the target need not be identifiable
by adjustment formulas, but it may still be identifiable
by other formulas in $\Phi_{\mathbf{Y},\mathbf{T}}$,
such as front-door formulas (\citealp[][Section~3.2]{Pearl1995}; see also 
\zcref[S]{sec:gateway} 
and \zcref[S]{app:frontdoor}), or by formulas returned 
by more general identification procedures,
such as the ID algorithm.
These more general routes rely on do-calculus
(\citealp[][Section~4]{Pearl1995}; see also \zcref[S]{app:do-calculus}),
but, as discussed in \zcref[S]{subsec:limits-do-calc}, 
proof search with do-calculus does not by itself yield a verifier for 
arbitrary formulas in $\Phi_{\mathbf{Y},\mathbf{T}}$.
Verifiers are therefore needed to check formulas that fall outside the 
scope of adjustment criteria.

\ifarxiv
    \appsection{Do-calculus}\label{app:do-calculus}
\else
    \section{Do-calculus}\label{app:do-calculus}
\fi

The interventional density can sometimes be identified even
when no set yields an identifying adjustment or front-door formula.
\citet[][Section~4]{Pearl1995} introduced a collection of three rules,
known as \emph{do-calculus}, that can be applied sequentially to 
rewrite interventional quantities
and may eventually lead to an expression in observational quantities
alone.

Let $\mathcal{G}$ denote the underlying latent-variable 
directed acyclic graph over observed nodes $\mathbf{O}$, and consider disjoint node sets $\mathbf{Y}, \mathbf{T}, 
\mathbf{Z}, \mathbf{W}\subseteq\mathbf{O}$. 
Do-calculus consists of the following rules:
\begin{enumerate}[label=(\roman*)]
    \item ``Insertion/deletion of observations'': $$p(\mathbf{y}\mid\operatorname{do}(\mathbf{t}),\mathbf{z},\mathbf{w})=p(\mathbf{y}\mid\operatorname{do}(\mathbf{t}),\mathbf{w}),$$ 
    if $\mathbf{Y}$ and $\mathbf{Z}$ are $d$-separated by $\mathbf{T}$, 
    $\mathbf{W}$ in the graph obtained by removing incoming edges into $\mathbf{T}$;
    \item ``Action/observation exchange'': $$p(\mathbf{y}\mid\operatorname{do}(\mathbf{t},\mathbf{z}),\mathbf{w})=p(\mathbf{y}\mid\operatorname{do}(\mathbf{t}),\mathbf{z},\mathbf{w}),$$
    if $\mathbf{Y}$ and $\mathbf{Z}$ are $d$-separated by $\mathbf{T}$, 
    $\mathbf{W}$ in the graph obtained by removing incoming edges
    into $\mathbf{T}$ and outgoing edges from $\mathbf{Z}$;
    \item ``Insertion/deletion of actions'':
    $$p(\mathbf{y}\mid\operatorname{do}(\mathbf{t}, \mathbf{z}), \mathbf{w})=p(\mathbf{y}\mid\operatorname{do}(\mathbf{t}),\mathbf{w}),$$
    if $\mathbf{Y}$ and $\mathbf{Z}$ are $d$-separated by 
    $\mathbf{T},\mathbf{W}$
    in the graph obtained 
    as follows: first remove all incoming
    edges into nodes in $\mathbf{T}$; then, in the
    resulting graph, identify those nodes of $\mathbf{Z}$ that are not
    ancestors of any node in $\mathbf{W}$;
    finally, remove all incoming edges into these nodes of $\mathbf{Z}$.
\end{enumerate}

These rules have been shown to be complete for identification \citep{Shpitser2006}.
For instance, do-calculus can be used to derive the front-door formula in \zcref[noname]{eq:front-door-formula}.
The ID algorithm \citep{Shpitser2006} allows one to derive one identifying formula, if one exists.
However,
as we discuss in \zcref[S]{sec:guarantees},
finding one formula with one derivation for it is not enough to determine for any proposed formula whether it can be derived using do-calculus.
\ifarxiv
    \appsection{Proofs}\label{app:proofs}
\else
    \section{Proofs}\label{app:proofs}
\fi

\subsection{Proof of \zcref[S]{thm:semi-decidability}}\label{app:proof-semi-decidability}

\semidecidability*

\begin{proof}
Showing that $\mathcal{I}^{\mathrm{do\text{-}calc}}(\mathcal{G}^p,\mathbf{Y},\mathbf{T})$ 
is semi-decidable is equivalent to showing that there exists 
an algorithm $M$ that semi-decides 
$\mathcal{I}^{\mathrm{do\text{-}calc}}(\mathcal{G}^p,\mathbf{Y},\mathbf{T})$, 
that is, an $M$ that halts and accepts for all inputs that are elements of 
$\mathcal{I}^{\mathrm{do\text{-}calc}}(\mathcal{G}^p,\mathbf{Y},\mathbf{T})$, 
but need not terminate otherwise.

The key point is to enumerate derivations fairly.
Indeed, for a fixed derivation length, there may be infinitely 
many derivations, so an enumeration that first exhausts all
candidates of length $1$, then all candidates of length $2$, and 
so on, may never reach longer derivations.
Instead, we enumerate pairs $(m,i)\in\mathbb{N}\times\mathbb{N}$,
where $m$ is the derivation length and $i$ is the index
within the enumeration of candidates of that length;
this ensures that $M$ can visit all derivations.

For all $m\in\mathbb{N}$, fix a computable enumeration 
$e_m:\mathbb{N}\to D_m$ of $D_m$, the set of derivations of length $m$.
Such an enumeration exists because each derivation has a finite 
encoding over $\Sigma$.
Let $f:\mathbb{N}\to\mathbb{N}\times\mathbb{N}$ be a computable
bijection that maps $n\in\mathbb{N}$ to a tuple 
$(m,i)\in\mathbb{N}\times\mathbb{N}$. 
The algorithm $M$ proceeds as follows.
For $n=0,1,2,\dots$, algorithm $M$ computes $f(n)=(m,i)$, sets
the derivation $d_{m,i}=e_m(i)$, and then runs
$\textsc{Check}(\mathcal{G}^p,Q_{\mathbf{Y},\mathbf{T}},\phi,d_{m,i})$
to check whether $d_{m,i}$ is a valid derivation relative to $\mathcal{G}^p$ that derives $\phi$ from $Q_{\mathbf{Y},\mathbf{T}}$. 
$\textsc{Check}$ halts for all inputs because it performs finitely
many operations on finite strings and evaluates only decidable 
graphical side-conditions (such as d-separation) on a finite graph.

If $\phi\in\mathcal{I}^{\mathrm{do\text{-}calc}}(\mathcal{G}^p,\mathbf{Y},\mathbf{T})$, 
then by definition there exists at least one derivation $d^*$ such 
that $\textsc{Check}(\mathcal{G}^p,Q_{\mathbf{Y},\mathbf{T}},\phi,d^*)=\texttt{true}$.
Therefore, there exists $n^*\in\mathbb{N}$ such that 
$f(n^*)=(m^*,i^*)$ and $d_{m^*,i^*}=e_{m^*}(i^*)=d^*$. 
When $M$ reaches $n^*$, it runs $\textsc{Check}$, accepts and halts.
If $\phi\notin\mathcal{I}^{\mathrm{do\text{-}calc}}(\mathcal{G}^p,\mathbf{Y},\mathbf{T})$, 
there exists no derivation for which $\textsc{Check}$
returns $\texttt{true}$. 
In particular, since $\textsc{Check}$ halts on every input, 
$M$ never encounters a derivation accepted by $\textsc{Check}$
and therefore fails to terminate.

This shows that $\mathcal{I}^{\mathrm{do\text{-}calc}}(\mathcal{G}^p,\mathbf{Y},\mathbf{T})$ semi-decides $\mathcal{L}$ and concludes the proof of 
\zcref[S]{thm:semi-decidability}.    
\end{proof}

\subsection{Proof of \zcref[S]{prop:measure-zero}}\label{app:proof-measure-zero}

\measurezero*

\begin{proof}
We follow the proof strategy used to show that, for exponential-family
parametrizations of densities factorizing according to the graph, 
under regularity assumptions, the set of parameter values for which the distribution is not faithful to the graph has 
Lebesgue measure zero \citep[][Theorems~7 and~8]{Boeken2026}.

Define the set of parameters for which the target interventional
density agrees with the observational formula on $E$
for all $\mathbf{t}\in\mathcal{X}_{\mathbf{T}}$ as
\begin{equation*}
    S_E\coloneqq\{\theta\in\Theta:\text{for all }\mathbf y\in E\text{ and all }\mathbf t\in\mathcal{X}_{\mathbf{T}},\,\llbracket \phi \rrbracket (p_{\theta,\mathbf{O}},\mathbf{t})(\mathbf y)=p_{\theta,\mathbf{Y}\mid\doop(\mathbf{T}=\mathbf{t})}(\mathbf y)\}.
\end{equation*}
We divide the proof in three steps:
\begin{enumerate}[label=(\roman*)]
    \item\label{itm:measure-zero} Using 
    \ifsupplementary
        Assumptions~1 and 4,
    \else
        \zcref[S]{ass:open-connected-parspace, ass:phi-analytic}, 
    \fi
    if, for all $\mathbf y\in E$ and all $\mathbf{t}\in\mathcal{X}_{\mathbf{T}}$, the real-valued map $\theta\mapsto p_{\theta,\mathbf{Y}\mid\doop(\mathbf{T}=\mathbf{t})}(\mathbf y)$ 
    is analytic on $\Theta$, 
    we show that $S_E$ has Lebesgue measure zero;

    \item\label{itm:measure-zero-ae} Using \ref{itm:measure-zero}, we show that $S$ has Lebesgue measure zero;
    
    \item\label{itm:analytic-density} Using 
    \ifsupplementary
        Assumptions~1--3,
    \else
        \zcref[S]{ass:open-connected-parspace, ass:eta-analytic, ass:regular}, 
    \fi
    we show that, for all $\mathbf y\in\mathcal{X}_{\mathbf Y}$ and all $\mathbf{t}\in\mathcal{X}_{\mathbf{T}}$, 
    the real-valued map $\theta\mapsto p_{\theta,\mathbf{Y}\mid\doop(\mathbf{T}=\mathbf{t})}(\mathbf y)$ is analytic on $\Theta$.
\end{enumerate}

Throughout, we use that sums, products, quotients
with non-zero denominator,
and compositions of analytic functions are analytic 
\citep[see][Chapter~3.2]{Conway1978}.

\paragraph{Proof of~\ref{itm:measure-zero}.}
    By the premises, 
    there exists $\theta^*\in\Theta$, $\mathbf{t}^*\in\mathcal{X}_{\mathbf{T}}$ and $\mathbf y^*\in E$ such that
    \begin{equation*}
        \llbracket \phi \rrbracket (p_{\theta^*,\mathbf{O}},\mathbf{t}^*)(\mathbf y^*)\ne p_{\theta^*,\mathbf{Y}\mid\doop(\mathbf{T}=\mathbf{t}^*)}(\mathbf y^*).
    \end{equation*}
    Define, for all $\theta\in\Theta$,
    \begin{equation*}
        g_{\mathbf{t}^*}(\theta)\coloneqq\llbracket \phi \rrbracket (p_{\theta,\mathbf{O}},\mathbf{t}^*)(\mathbf y^*) - p_{\theta,\mathbf{Y}\mid\doop(\mathbf{T}=\mathbf{t}^*)}(\mathbf y^*).
    \end{equation*}
    Then $g_{\mathbf{t}^*}(\theta^*)\ne0$ and hence $g_{\mathbf{t}^*}$ is not identically zero on $\Theta$. By 
    \ifsupplementary
        Assumption~4,
    \else
        \zcref[S]{ass:phi-analytic}, 
    \fi
    the real-valued map
    $\theta \mapsto \llbracket \phi \rrbracket (p_{\theta,\mathbf{O}},\mathbf{t}^*)(\mathbf y^*)$ 
    is analytic on $\Theta$.
    If the real-valued map
    $\theta \mapsto p_{\theta,\mathbf{Y}\mid\doop(\mathbf{T}=\mathbf{t}^*)}(\mathbf y^*)$ is analytic on $\Theta$, which we show in the proof of~\ref{itm:analytic-density},
    since the difference of analytic functions is analytic, 
    the real-valued map $\theta\mapsto g_{\mathbf{t}^*}(\theta)$ is analytic on $\Theta$ as well.
    
    Define $S_E^{\mathbf{t}^*,\mathbf y^*} \coloneqq \{\theta \in \Theta : g_{\mathbf{t}^*}(\theta) = 0\}$.
    Then, $S_E \subseteq S_E^{\mathbf{t}^*,\mathbf y^*}$.
    By the identity theorem \citep[][Proposition~1]{Mityagin2020},
    the zero set of a real analytic function that is not identically
    zero on an open and connected domain has 
    Lebesgue measure zero. Since, by
    \ifsupplementary
        Assumption~1,
    \else
        \zcref[S]{ass:open-connected-parspace}, 
    \fi
    for all $V\in\mathbf{V}$, $\Theta_V$ is open and connected, 
    $\Theta=\prod_{V\in\mathbf{V}}\Theta_V$ is open and connected
    as well. Therefore, $S_E^{\mathbf{t}^*,\mathbf y^*}$, and hence $S_E$, 
    has Lebesgue measure zero.
    This completes the proof of~\ref{itm:measure-zero}.
\paragraph{Proof of~\ref{itm:measure-zero-ae}.}
    Let $\theta^*\in\Theta$ and 
    $\mathbf{t}^*\in\mathcal{X}_{\mathbf{T}}$ be the witnesses 
    from the premise. 
    For all $\theta\in\Theta$ and $\mathbf{y}\in E$, define
    \[
    g_{\mathbf t^*}(\theta,\mathbf y)
    \coloneqq
    \llbracket \phi \rrbracket(p_{\theta,\mathbf O},\mathbf t^*)(\mathbf y)
    -
    p_{\theta,\mathbf{Y}\mid\doop(\mathbf{T}=\mathbf{t}^*)}(\mathbf y).
    \]
    Then the premise can be rewritten as 
    $\mu_{\mathbf Y}(\{\mathbf y\in E:g_{\mathbf t^*}(\theta^*,\mathbf y)\neq 0\})>0$.
    For all $\mathbf y\in E$, by \zcref[S]{ass:phi-analytic}, and if the 
    real-valued map
    $\theta \mapsto p_{\theta,\mathbf{Y}\mid\doop(\mathbf{T}=\mathbf{t}^*)}(\mathbf y)$ 
    is analytic on $\Theta$ (shown in the proof of~\ref{itm:analytic-density}), the real-valued map $\theta\mapsto g_{\mathbf t^*}(\theta,\mathbf y)$ is analytic on $\Theta$.

    Define
    \[
    S_{\mathbf t^*}
    \coloneqq
    \left\{
    \theta\in\Theta:
    \llbracket \phi \rrbracket(p_{\theta,\mathbf O},\mathbf t^*)(\mathbf y)
    =
    p_{\theta,\mathbf{Y}\mid\doop(\mathbf{T}=\mathbf{t}^*)}(\mathbf y)
    \,\,\mu_{\mathbf Y}\text{-almost everywhere}
    \right\}
    \]
    and, for all $\theta\in\Theta$,
    \[
    D_\theta\coloneqq
    \left\{
    \mathbf{y}\in\mathcal{X}_{\mathbf{Y}}:
    \llbracket \phi \rrbracket(p_{\theta,\mathbf O},\mathbf t^*)(\mathbf y)
    \neq
    p_{\theta,\mathbf{Y}\mid\doop(\mathbf{T}=\mathbf{t}^*)}(\mathbf y)
    \right\}
    =
    \left\{
    \mathbf{y}\in\mathcal{X}_{\mathbf{Y}}:
    g_{\mathbf{t}^*}(\theta,\mathbf{y})\ne0
    \right\}.
    \]
    By definition, for all $\theta\in\Theta$, 
    $\theta\in S_{\mathbf{t}^*}$ if 
    and only if $\mu_{\mathbf{Y}}(D_\theta)=0$. Since $E$ has
    full measure, $\mu_{\mathbf{Y}}(E^c)=0$. Moreover,
    for all $\theta\in\Theta$,
    $D_\theta=(D_\theta\cap E)\cup(D_\theta\cap E^c)$, and 
    $D_\theta\cap E^c\subseteq E^c$, which implies that
    $\mu_{\mathbf{Y}}(D_\theta)=0$ if and only if 
    $\mu_{\mathbf{Y}}(D_\theta\cap E)=0$.
    By definition of $g_{\mathbf{t}^*}$,
    $D_\theta\cap E=\{\mathbf{y}\in E:g_{\mathbf{t}^*}(\theta,\mathbf{y})\neq 0\}$. 
    Therefore, for all $\theta\in\Theta$, 
    \[
    \theta\in S_{\mathbf{t}^*}
    \quad\iff\quad
    \mu_{\mathbf Y}(\{\mathbf y\in E:g_{\mathbf t^*}(\theta,\mathbf y)\neq 0\})=0.
    \]

    Since $S$ requires the $\mu_{\mathbf{Y}}$-almost everywhere equality for all 
    $\mathbf t\in\mathcal X_{\mathbf T}$, $S\subseteq S_{\mathbf{t}^*}$.
    It is therefore enough to show that $S_{\mathbf{t}^*}$
    has Lebesgue measure zero, which implies that $S$ has Lebesgue measure zero.
    
    Suppose, by contradiction, that $S_{\mathbf{t}^*}$ has positive Lebesgue measure: $\lambda(S_{\mathbf{t}^*})>0$.
    For all $\theta\in S_{\mathbf{t}^*}$, by the equivalence established above,
    $\mu_{\mathbf Y}(\{\mathbf y\in E: g_{\mathbf t^*}(\theta,\mathbf y)\neq 0\})=0$.
    Therefore, by Tonelli's theorem:
    \begin{align*}
        0=\int_{S_{\mathbf{t}^*}}\mu_{\mathbf Y}\left(\left\{\mathbf y\in E:g_{\mathbf t^*}(\theta,\mathbf y)\neq 0\right\}\right)\mathrm d\lambda(\theta)
        &=
        \int_{S_{\mathbf{t}^*}}\int_E\mathbf 1_{\{g_{\mathbf t^*}(\theta,\mathbf y)\neq 0\}}\mathrm d\mu_{\mathbf Y}(\mathbf y)\mathrm d\lambda(\theta)\\
        &=
        \int_E\int_{S_{\mathbf{t}^*}}\mathbf 1_{\{g_{\mathbf t^*}(\theta,\mathbf y)\neq 0\}}\mathrm d\lambda(\theta)\mathrm d\mu_{\mathbf Y}(\mathbf y).
    \end{align*}
    Since the integrand is nonnegative, it follows that,
    for $\mu_{\mathbf{Y}}$-almost every $\mathbf{y}\in E$, $\lambda(\{\theta\in S_{\mathbf{t}^*}:g_{\mathbf t^*}(\theta,\mathbf y)\neq 0\})=0$. Therefore,
    \[
    \lambda(\{\theta\in S_{\mathbf{t}^*}:g_{\mathbf t^*}(\theta,\mathbf y)= 0\})=\lambda(S_{\mathbf{t}^*})>0,
    \]
    which means that, for $\mu_{\mathbf{Y}}$-almost every $\mathbf{y}\in E$,
    the zero set of the analytic map
    $\theta\mapsto g_{\mathbf t^*}(\theta,\mathbf y)$ has positive Lebesgue measure.
    By the identity theorem for real analytic functions, 
    a real analytic function on an open and connected domain whose zero 
    set has positive Lebesgue measure must be identically zero.
    Therefore, for $\mu_{\mathbf Y}$-almost every $\mathbf y\in E$,
    $g_{\mathbf t^*}(\cdot,\mathbf y)\equiv 0$ on $\Theta$, and, in particular, $g_{\mathbf t^*}(\theta^*,\mathbf y)=0$, $\mu_{\mathbf Y}$-almost everywhere on $E$.
    
    This contradicts the premise that 
    $\mu_{\mathbf Y}(\{\mathbf y\in E:g_{\mathbf t^*}(\theta^*,\mathbf y)\neq 0\})>0$.
    Hence the assumption that $S_{\mathbf t^*}$ has positive Lebesgue measure was false.
    It follows that $S_{\mathbf t^*}$ has Lebesgue measure zero. Since $S\subseteq S_{\mathbf t^*}$, we conclude that $S$ has Lebesgue measure zero.
    This completes the proof of~\ref{itm:measure-zero-ae}.
\paragraph{Proof of~\ref{itm:analytic-density}.}
    We now show that, under 
    \ifsupplementary
        Assumptions~1--3,
    \else
        \zcref[S]{ass:open-connected-parspace, ass:eta-analytic, ass:regular}, 
    \fi
    for all $\mathbf y\in\mathcal{X}_{\mathbf{Y}}$ and all $\mathbf{t}\in\mathcal{X}_{\mathbf{T}}$, 
    the real-valued map $\theta\mapsto p_{\theta,\mathbf{Y}\mid\doop(\mathbf{T}=\mathbf{t})}(\mathbf y)$ is analytic on $\Theta$, 
    which yields the premise of~\ref{itm:measure-zero} and,
    via \ref{itm:measure-zero} and \ref{itm:measure-zero-ae}, concludes the proof of
    \ifsupplementary
        Proposition~1.
    \else
        \zcref[S]{prop:measure-zero}. 
    \fi

    Fix an arbitrary $\mathbf{t}\in\mathcal{X}_{\mathbf{T}}$.
    By the truncated factorization formula 
    \citep[][Section~1.3]{Pearl2009}, 
    for all $\theta\in\Theta$ and  
    $\mathbf y\in\mathcal{X}_{\mathbf Y}$, 
    the interventional density is 
    $p_{\theta,\mathbf{Y}\mid\doop(\mathbf{T}=\mathbf{t})}(\mathbf y)=\int_{\mathcal{X}_{\mathbf{W}}}p_{\theta,(\mathbf{Y},\mathbf{W})\mid\doop(\mathbf{T}=\mathbf{t})}(\mathbf y,\mathbf{w})\mathrm{d}\mu_{\mathbf{W}}(\mathbf{w})$, with
    \begin{equation}\label{eq:truncated-factorization}
        p_{\theta,(\mathbf{Y},\mathbf{W})\mid\doop(\mathbf{T}=\mathbf{t})}(\mathbf y,\mathbf{w})=\prod_{V\in\mathbf{V}\setminus\mathbf T}p_{\theta_V}(v\mid \mathbf v_{\operatorname{pa}(V)})\big|_{\,\mathbf T=\mathbf t},
    \end{equation}
    where $\mathbf{W}\coloneqq\mathbf{V}\setminus(\mathbf T\sqcup\mathbf Y)$. For $V\in\mathbf{V}$, $\theta_V\in\Theta_V$, 
    $v\in\mathcal{X}_V$, and $\mathbf v_{\operatorname{pa}(V)\setminus\mathbf T}\in\mathcal{X}_{\operatorname{pa}(V)\setminus\mathbf T}$, 
    we define 
    \begin{equation*}
        b_V^*(v,\mathbf v_{\operatorname{pa}(V)\setminus\mathbf T},\mathbf t)\coloneqq b_V(v,\mathbf v_{\operatorname{pa}(V)})\big|_{\,\mathbf T=\mathbf t},
        \quad
        s_V^*(v,\mathbf v_{\operatorname{pa}(V)\setminus\mathbf T},\mathbf t)\coloneqq s_V(v,\mathbf v_{\operatorname{pa}(V)})\big|_{\,\mathbf T=\mathbf t},
    \end{equation*}
    and
    \begin{equation*}
        A_V^*(\eta_V(\theta_V),\mathbf v_{\operatorname{pa}(V)\setminus\mathbf T},\mathbf t)\coloneqq A_V(\eta_V(\theta_V),\mathbf v_{\operatorname{pa}(V)})\big|_{\,\mathbf T=\mathbf t},
    \end{equation*}
    which are obtained by substituting $\mathbf T=\mathbf t$ whenever $\mathbf T\subseteq\operatorname{pa}(V)$ and are
    such that
    \begin{align*}
        p_{\theta_V}(v\mid \mathbf v_{\operatorname{pa}(V)})\big|_{\,\mathbf T=\mathbf t}=b_V^*(v,\mathbf v_{\operatorname{pa}(V)\setminus\mathbf T},\mathbf t)e^{\eta_V(\theta_V)^\top s_V^*(v,\mathbf v_{\operatorname{pa}(V)\setminus\mathbf T},\mathbf t)-A_V^*(\eta_V(\theta_V),\mathbf v_{\operatorname{pa}(V)\setminus\mathbf T},\mathbf t)}.
    \end{align*}
    
    Recall that, by \zcref[S]{ass:regular}, if 
    $\mathbf{X}=\mathbf{V}\setminus\mathbf T$,
    for all $\theta\in\Theta$ and all $\mathbf t\in\mathcal{X}_{\mathbf{T}}$, 
    $p_{\theta,\mathbf{X}\mid\doop(\mathbf{T}=\mathbf{t})}(\mathbf{x})\propto\widetilde{b}_{\mathbf{t}}(\mathbf{x})e^{\widetilde{\eta}_{\mathbf{t}}(\theta)^\top\widetilde{s}_{\mathbf{t}}(\mathbf{x})}$, 
    where $\widetilde{b}_{\mathbf{t}}$, $\widetilde{\eta}_{\mathbf{t}}$,
    and $\widetilde{s}_{\mathbf{t}}$ are such that
    the parametrization is minimal, full, 
    and $\widetilde{\eta}_{\mathbf{t}}(\Theta)$ is open.
    We now adapt the argument used in the proof of Theorem~8 
    of \citet{Boeken2026} and proceed by showing the following steps. 
    \begin{enumerate}[label=(\alph*)]
        \item\label{itm:step1} For all $V\in\mathbf{V}$, 
        $v\in\mathcal{X}_V$, and 
        $\mathbf v_{\operatorname{pa}(V)\setminus\mathbf T}\in\mathcal{X}_{\operatorname{pa}(V)\setminus\mathbf T}$, 
        the real-valued map $\theta_V\mapsto p_{\theta_V}(v\mid \mathbf v_{\operatorname{pa}(V)})\big|_{\,\mathbf T=\mathbf t}$ 
        is analytic on $\Theta_V$.
        \item\label{itm:step2} For all $\mathbf y\in\mathcal{X}_{\mathbf Y}$
        and $\mathbf{w}\in\mathcal{X}_{\mathbf{W}}$, 
        the real-valued map $\theta\mapsto p_{\theta,(\mathbf{Y},\mathbf{W})\mid\doop(\mathbf{T}=\mathbf{t})}(\mathbf y,\mathbf{w})$ 
        is analytic on $\Theta$.
        \item\label{itm:step3} The $\mathbb{R}^{k_{\mathbf{t}}}$-valued natural-parameter map 
        $\theta\mapsto \widetilde{\eta}_{\mathbf{t}}(\theta)$ associated with the regular exponential family in \zcref[S]{ass:regular} is analytic on $\Theta$. 
        \item\label{itm:step4} For all $\mathbf y\in\mathcal{X}_{\mathbf Y}$, 
        the real-valued map $\theta\mapsto p_{\theta,\mathbf{Y}\mid\doop(\mathbf{T}=\mathbf{t})}(\mathbf y)$ is analytic on $\Theta$.
    \end{enumerate}
    After showing the above steps, since $\mathbf{t}\in\mathcal{X}_{\mathbf{T}}$ was arbitrary, the result follows.
    
    \paragraph{Step~\ref{itm:step1}.}
    Fix $V\in\mathbf{V}$, $v\in\mathcal{X}_V$, and 
    $\mathbf v_{\operatorname{pa}(V)\setminus{\mathbf T}}\in\mathcal{X}_{\operatorname{pa}(V)\setminus\mathbf T}$. 
    Let $\lambda_V\coloneqq b_V^*(\cdot,\mathbf v_{\operatorname{pa}(V)\setminus\mathbf T},\mathbf t)\mu_V$
    be a weighted measure on $\mathcal{X}_V$.
    Define the measure $\nu_V$ on $\mathbb{R}^{k_V}$ as the 
    pushforward of $\lambda_V$ under the map
    $s_V^*(\cdot,\mathbf v_{\operatorname{pa}(V)\setminus\mathbf T},\mathbf t):\mathcal{X}_V\to\mathbb{R}^{k_V}$, that is,
    $\nu_V\coloneqq (s_V^*(\cdot,\mathbf v_{\operatorname{pa}(V)\setminus\mathbf T},\mathbf t))_\sharp \lambda_V$.
    Then, for all $\theta_V\in\Theta_V$,
    \begin{align*}
        e^{A_V^*(\eta_V(\theta_V),\mathbf v_{\operatorname{pa}(V)\setminus\mathbf T},\mathbf t)} &= 
        \int_{\mathcal{X}_V} b_V^*(v,\mathbf v_{\operatorname{pa}(V)\setminus\mathbf T},\mathbf t)e^{\eta_V(\theta_V)^\top s_V^*(v,\mathbf v_{\operatorname{pa}(V)\setminus\mathbf T},\mathbf t)}\mathrm{d}\mu_V(v)\\
        &=\int_{\mathbb{R}^{k_V}}e^{\eta_V(\theta_V)^\top \mathbf{s}}\mathrm{d}\nu_V(\mathbf{s}).
    \end{align*}
    
    Let $\mathcal{N}_V\coloneqq\{\gamma_V\in\mathbb{R}^{k_V}:\int_{\mathbb{R}^{k_V}} e^{\gamma_V^\top\mathbf{s}}\mathrm{d}\nu_V(\mathbf{s})<\infty\}$
    and define $G_V:\mathcal{N}_V\to(0,\infty)$
    by $\gamma_V\mapsto\int_{\mathbb{R}^{k_V}}e^{\gamma_V^\top \mathbf{s}}\mathrm{d}\nu_V(\mathbf{s})$.
    In particular, for all $\theta_V\in\Theta_V$, 
    $G_V(\eta_V(\theta_V))=e^{A_V^*(\eta_V(\theta_V),\mathbf v_{\operatorname{pa}(V)\setminus\mathbf T},\mathbf t)}$.
    Now define $\mathcal{T}_V\coloneqq\{\gamma_V+i\zeta_V:\gamma_V\in\mathcal{N}_V,\zeta_V\in\mathbb{R}^{k_V}\}$ and define the complex extension 
    $\widetilde{G}_V:\mathcal{T}_V\to\mathbb{C}$
    by $z_V\mapsto\int_{\mathbb{R}^{k_V}}e^{z_V^\top \mathbf{s}}\mathrm{d}\nu_V(\mathbf{s})$.
    Since, 
    for all $\gamma_V\in\mathcal{N}_V$, $\zeta_V\in\mathbb{R}^{k_V}$, $z_V=\gamma_V+i\zeta_V$, 
    and $\mathbf{s}\in\mathbb{R}^{k_V}$,
    one has 
    $|e^{z_V^\top \mathbf{s}}|=e^{\gamma_V^\top \mathbf{s}}$, 
    $\widetilde{G}_V$ is the Fourier-Laplace transform of the pushforward 
    measure $\nu_V$. By Theorem~7.2 of \citet{BarndorffNielsen2014},
    the complex-valued map $z_V\mapsto\widetilde{G}_V(z_V)$ is analytic on the interior
    $\mathrm{int}(\mathcal{T}_V)$. Since 
    $\eta_V(\Theta_V)$ is open by 
    \ifsupplementary
        Assumption~2,
    \else
        \zcref[S]{ass:eta-analytic}, 
    \fi
    and $\eta_V(\Theta_V)\subseteq\mathcal N_V$, we have
    $\eta_V(\Theta_V)\subseteq\mathrm{int}(\mathcal N_V)$.
    Hence, for all $\theta_V\in\Theta_V$, 
    $\eta_V(\theta_V)+i0\in\mathrm{int}(\mathcal T_V)$.
    Moreover, the map $\theta_V\mapsto\eta_V(\theta_V)$ is 
    analytic on $\Theta_V$ by \zcref[S]{ass:eta-analytic},
    so the composition
    \begin{equation}\label{eq:analytic-second-term}
        \theta_V\mapsto\widetilde{G}_V(\eta_V(\theta_V))=e^{A_V^*(\eta_V(\theta_V),\mathbf v_{\operatorname{pa}(V)\setminus\mathbf T},\mathbf t)}
    \end{equation}
    is analytic on $\Theta_V$. 
    Since, for all $\theta_V\in\Theta_V$, 
    $\widetilde{G}_V(\eta_V(\theta_V)) > 0$,
    its reciprocal is analytic as well.

    Moreover, since $s_V^*(v,\mathbf v_{\operatorname{pa}(V)\setminus\mathbf T},\mathbf t)$ 
    does not depend on $\theta_V$ and the exponential of an 
    analytic function is analytic, it follows that the real-valued map 
    \begin{equation}\label{eq:analytic-first-term}
        \theta_V\mapsto e^{\eta_V(\theta_V)^\top s_V^*(v,\mathbf v_{\operatorname{pa}(V)\setminus\mathbf T},\mathbf t)}
    \end{equation}
    is analytic on $\Theta_V$.
    
    Therefore, the map 
    $\theta_V\mapsto p_{\theta_V}(v\mid \mathbf v_{\operatorname{pa}(V)})\big|_{\,\mathbf T=\mathbf t}$
    is analytic on $\Theta_V$ since it is the product of the 
    $\theta_V$-independent factor 
    $b_V^*(v,\mathbf v_{\operatorname{pa}(V)\setminus\mathbf T},\mathbf t)$, 
    the analytic function in 
    \zcref[noname]{eq:analytic-first-term}, and the reciprocal of the analytic 
    function in \zcref[noname]{eq:analytic-second-term}. 
    Since $V\in\mathbf{V}$, $v\in\mathcal{X}_V$, 
    and $\mathbf v_{\operatorname{pa}(V)\setminus\mathbf T}\in\mathcal{X}_{\operatorname{pa}(V)\setminus\mathbf T}$ 
    were arbitrary, this holds for all such values. 
    This proves~\ref{itm:step1}.

    \paragraph{Step~\ref{itm:step2}.}
    By~\zcref[noname]{eq:truncated-factorization}, 
    for all $\mathbf y \in\mathcal{X}_{\mathbf Y}$
    and $\mathbf{w}\in\mathcal{X}_{\mathbf{W}}$, 
    the real-valued map $\theta\mapsto p_{\theta,(\mathbf{Y},\mathbf{W})\mid\doop(\mathbf{T}=\mathbf{t})}(\mathbf y,\mathbf{w})$ 
    is a finite product of analytic functions, hence analytic on $\Theta$. 
    Therefore, \ref{itm:step2} holds.

    \paragraph{Step~\ref{itm:step3}.}
    By 
    \ifsupplementary
        Assumption~3,
    \else
        \zcref[S]{ass:regular}, 
    \fi
    the post-intervention joint distribution lies in a regular exponential family. 
    Then, for all $\theta\in\Theta$ and $\mathbf{x}\in\mathcal{X}_{\mathbf{X}}$, 
    $$p_{\theta,\mathbf{X}\mid\doop(\mathbf{T}=\mathbf{t})}(\mathbf{x})=\widetilde{b}_{\mathbf{t}}(\mathbf{x})e^{\widetilde{\eta}_{\mathbf{t}}(\theta)^\top\widetilde{s}_{\mathbf{t}}(\mathbf{x})-\widetilde{A}_{\mathbf{t}}(\widetilde{\eta}_{\mathbf{t}}(\theta))}.$$ 
    Minimality guarantees the existence of $k+1$ points 
    $\mathbf{x}_0,\dots,\mathbf{x}_k\in\mathcal{X}_{\mathbf{X}}$
    such that the $k$ difference vectors
    $\mathbf{u}_1^{\mathbf{t}}:=\widetilde{s}_{\mathbf{t}}(\mathbf{x}_1)-\widetilde{s}_{\mathbf{t}}(\mathbf{x}_0),\ldots,\mathbf{u}_k^{\mathbf{t}}:=\widetilde{s}_{\mathbf{t}}(\mathbf{x}_k)-\widetilde{s}_{\mathbf{t}}(\mathbf{x}_0)$ are linearly independent.
    For all $\theta\in\Theta$ and $\mathbf{x}\in\mathcal{X}_{\mathbf{X}}$,
    taking the logarithm of the density yields 
    $\log{p_\theta(\mathbf{x}\mid\operatorname{do}(\mathbf t))}=\log{\widetilde{b}_{\mathbf{t}}(\mathbf{x})}+\widetilde{\eta}_{\mathbf{t}}(\theta)^\top\widetilde{s}_{\mathbf{t}}(\mathbf{x})-\widetilde{A}_{\mathbf{t}}(\widetilde{\eta}_{\mathbf{t}}(\theta))$. 
    Thus, for all $i\in\{1,\dots,k\}$,
    \begin{align*}
        \log{p_{\theta,\mathbf{X}\mid\doop(\mathbf{T}=\mathbf{t})}(\mathbf{x}_i)}-\log{p_{\theta,\mathbf{X}\mid\doop(\mathbf{T}=\mathbf{t})}(\mathbf{x}_0)}&=
        (\log{\widetilde{b}_{\mathbf{t}}(\mathbf{x}_i)}-\log{\widetilde{b}_{\mathbf{t}}(\mathbf{x}_0)})+\widetilde{\eta}_{\mathbf{t}}(\theta)^\top(\widetilde{s}_{\mathbf{t}}(\mathbf{x}_i)-\widetilde{s}_{\mathbf{t}}(\mathbf{x}_0))\\
        &=-C_i^{\mathbf{t}}+(\mathbf{u}_i^{\mathbf{t}})^\top\widetilde{\eta}_{\mathbf{t}}(\theta),
    \end{align*}
    where $C_i^{\mathbf{t}}\coloneqq-\log{\widetilde{b}_{\mathbf{t}}(\mathbf{x}_i)}+\log{\widetilde{b}_{\mathbf{t}}(\mathbf{x}_0)}$. 
    We rearrange this to define, for $i\in\{1,\dots,k\}$,
    \begin{equation*}
        h_i^{\mathbf{t}}(\theta)\coloneqq\log{p_{\theta,\mathbf{X}\mid\doop(\mathbf{T}=\mathbf{t})}(\mathbf{x}_i)}-\log{p_{\theta,\mathbf{X}\mid\doop(\mathbf{T}=\mathbf{t})}(\mathbf{x}_0)}+C_i^{\mathbf{t}}=(\mathbf{u}_i^{\mathbf{t}})^\top\widetilde{\eta}_{\mathbf{t}}(\theta).
    \end{equation*}
    Since in the previous steps we established that the joint density
    is analytic in $\theta$, for all $i\in\{1,\dots,k\}$, $h_i^{\mathbf{t}}$ is
    analytic as well as it is a difference of analytic functions
    plus a constant.

    Let $U_{\mathbf{t}} \coloneqq [\mathbf{u}_1^{\mathbf{t}}, \dots, \mathbf{u}_k^{\mathbf{t}}]$ be the 
    $k \times k$ matrix whose $i$-th column is $\mathbf{u}_i^{\mathbf{t}}$, 
    for all $i\in\{1,\dots,k\}$, and let 
    $h_{\mathbf{t}}(\theta) \coloneqq (h_1^{\mathbf{t}}(\theta), \dots, h_k^{\mathbf{t}}(\theta))^\top$
    be the $k$-dimensional vector whose $i$-th entry is $h_i^{\mathbf{t}}(\theta)$, for all $i\in\{1,\dots,k\}$. 
    Then, we can write $U_{\mathbf{t}}^\top\widetilde{\eta}_{\mathbf{t}}(\theta) = h_{\mathbf{t}}(\theta)$.
    Since the vectors $\mathbf{u}_1^{\mathbf{t}}, \dots, \mathbf{u}_k^{\mathbf{t}}$ are linearly
    independent, the matrix $U_{\mathbf{t}}$ (and therefore $U_{\mathbf{t}}^\top$) has full rank
    $k_{\mathbf{t}}$ and is invertible. Multiplying both sides by the inverse 
    yields $\widetilde{\eta}_{\mathbf{t}}(\theta) = (U_{\mathbf{t}}^\top)^{-1}h_{\mathbf{t}}(\theta)$. 
    As a linear combination of the analytic components of $h_{\mathbf{t}}(\theta)$,
    the map $\theta \mapsto \widetilde{\eta}_{\mathbf{t}}(\theta)$ is analytic on $\Theta$. 
    The claim in~\ref{itm:step3} then follows.

    \paragraph{Step~\ref{itm:step4}.}
    Let $\mathbf A,\mathbf B$ be such that $\mathbf A\sqcup \mathbf B=\mathbf V\setminus\mathbf T$.
    For all $\mathbf{b}\in\mathcal{X}_{\mathbf{B}}$, the integral
    \begin{equation*}
        \widetilde\eta_{\mathbf{t}} \mapsto \int_{\mathcal{X}_{\mathbf{A}}}\widetilde{b}_{\mathbf{t}}(\mathbf{a},\mathbf{b})e^{\widetilde\eta_{\mathbf{t}}^\top\widetilde{s}_{\mathbf{t}}(\mathbf{a},\mathbf{b})}\mathrm{d}\mu_{\mathbf{A}}(\mathbf{a})
    \end{equation*}
    can be viewed as the Fourier-Laplace transform of a 
    pushforward measure. By Theorem~7.2 of 
    \citet{BarndorffNielsen2014}, this map is analytic on a 
    complex domain which, 
    due to the openness of $\widetilde{\eta}_{\mathbf{t}}(\Theta)$ given by
    \ifsupplementary
        Assumption~3,
    \else
        \zcref[S]{ass:regular}, 
    \fi
    is open and contains $\widetilde{\eta}_{\mathbf{t}}(\Theta)$. 
    Consequently, its restriction to the real parameter space
    is analytic on $\widetilde{\eta}_{\mathbf{t}}(\Theta)$. 
    Because the map $\theta\mapsto\widetilde{\eta}_{\mathbf{t}}(\theta)$ is analytic
    on $\Theta$ by~\ref{itm:step3}, 
    for all $\mathbf{b}\in\mathcal{X}_{\mathbf{B}}$, 
    the composition
    \begin{equation*}
        \theta\mapsto\int_{\mathcal{X}_{\mathbf{A}}}\widetilde{b}_{\mathbf{t}}(\mathbf{a},\mathbf{b})e^{\widetilde\eta_{\mathbf{t}}(\theta)^\top\widetilde{s}_{\mathbf{t}}(\mathbf{a},\mathbf{b})}\mathrm{d}\mu_{\mathbf{A}}(\mathbf{a})
    \end{equation*}
    is analytic on $\Theta$.
    
    Taking $\mathbf{A}=\mathbf{W}=\mathbf{V}\setminus(\mathbf T\sqcup\mathbf Y)$,
    we obtain, for all $\theta\in\Theta$, $\mathbf{t}\in\mathcal{X}_{\mathbf{T}}$, and $\mathbf y\in\mathcal{X}_{\mathbf Y}$, 
    the unnormalized marginal
    \begin{equation*}
        p_{\theta,\mathbf{Y}\mid\doop(\mathbf{T}=\mathbf{t})}(\mathbf y)
        \propto
        \int_{\mathcal X_{\mathbf W}}
        \widetilde b_{\mathbf{t}}(\mathbf y,\mathbf w)
        e^{\widetilde\eta_{\mathbf{t}}(\theta)^\top\widetilde s_{\mathbf{t}}(\mathbf y,\mathbf w)}
        \mathrm d\mu_{\mathbf W}(\mathbf w).
    \end{equation*}
    Taking instead $\mathbf{A}=\mathbf{X}=\mathbf{V}\setminus\mathbf{T}$, we obtain, 
    for all $\theta\in\Theta$, the normalizing factor
    \begin{equation*}
        e^{\widetilde{A}_{\mathbf{t}}(\widetilde{\eta}_{\mathbf{t}}(\theta))}=\int_{\mathcal{X}_{\mathbf{X}}}\widetilde{b}_{\mathbf{t}}(\mathbf{x})e^{\widetilde\eta_{\mathbf{t}}(\theta)^\top\widetilde{s}_{\mathbf{t}}(\mathbf{x})}\mathrm{d}\mu_{\mathbf{X}}(\mathbf{x}).
    \end{equation*}
    Both the unnormalized marginal and the normalizing factor are therefore analytic in $\theta$. 
    Since, for all $\theta \in \Theta$, 
    $e^{\widetilde{A}_{\mathbf{t}}(\widetilde{\eta}_{\mathbf{t}}(\theta))} > 0$, 
    its reciprocal $e^{-\widetilde{A}_{\mathbf{t}}(\widetilde{\eta}_{\mathbf{t}}(\theta))}$
    is also analytic in $\theta$. 
    Therefore, for all $\mathbf y\in\mathcal{X}_{\mathbf Y}$ and all $\mathbf{t}\in\mathcal{X}_{\mathbf{T}}$, 
    the real-valued map $\theta\mapsto p_{\theta,\mathbf{Y}\mid\doop(\mathbf{T}=\mathbf{t})}(\mathbf y)$
    is analytic on $\Theta$ as a product of analytic functions. 
    This concludes the proof of 
    \ifsupplementary
        Proposition~1.
    \else
        \zcref[S]{prop:measure-zero}.
    \fi
\end{proof}

\subsection{Proof of 
\ifsupplementary
    Theorem~2
\else
    \zcref[S]{thm:as-correct-verifier}
\fi
}\label{app:proof-as-correct-verifier}

\ascorrectverifier*

\begin{proof}
If the input to the falsifier is $\phi=\texttt{none}$, 
the falsifier checks non-identifiability using the ID algorithm,
returning $\texttt{false}$ if the ID algorithm returns an identifying formula and $\texttt{true}$ else. 
Since the ID algorithm is sound and complete for identification, the falsifier's output for the input $\texttt{none}$ is correct.

If the input to the falsifier is a candidate observational formula
$\phi\in\Phi_{\mathbf Y,\mathbf{T}}$
there are two cases.
First, if $\phi$ is identifying relative to 
the chosen parametric family, then
\ifsupplementary
    Equation~(1)
\else
    \zcref[S]{eq:identifying-formula} 
\fi
holds for all $\mathbf{t}\in\mathcal{X}_{\mathbf{T}}$ and for all densities factorizing according to the graph whenever 
$\phi$ is well-defined. Hence, the falsifier never finds a 
counterexample and outputs $\texttt{true}$; in other words,
the falsifier never incorrectly rejects an observational formula that is identifying as $\texttt{false}$. 
Second, if
$\phi$ is not identifying
relative to the chosen 
parametric family, then, by 
\ifsupplementary
    Proposition~1,
\else
    \zcref[S]{prop:measure-zero},
\fi
the set $S$ of parameters where the candidate formula equals
the target interventional density has Lebesgue measure zero. 
By absolute continuity of $\pi$ with respect to the Lebesgue measure, this implies that $\pi(S)=0$. Then,
$$\mathbb{P}\left(\bigcap_{i=1}^K\{\theta_i\in S\}\right)=\prod_{i=1}^K\mathbb{P}(\theta_i\in S)=\left(\pi(S)\right)^K=0,$$
which implies that, if $\phi$ is not identifying relative to the chosen
parametric family, the falsifier incorrectly accepts it as $\texttt{true}$ with probability zero.

This completes the proof of \zcref[S]{thm:as-correct-verifier}, 
which establishes a one-sided guarantee: if the falsifier
rejects a candidate formula, it has found an actual counterexample
and the rejection is correct; if the falsifier accepts a 
candidate formula, its output is correct almost surely,
relative to the chosen parametric family.
\end{proof}

\subsection{Proof of \zcref[S]{thm:bound}}\label{app:proof-bound}

\begin{restatable*}
[Bound on false acceptance of non-identifying formulas]
{theorem}{falseacceptancebound}
\label{thm:bound}
    Suppose that the evaluation length of $\phi$ is at most $c$,
    and that $\phi$ is non-identifying relative to 
    $\mathcal{P}_{\Theta}(\mathcal{X}_{\mathbf{V}})$, that is,
    there exist $\theta^*\in\Theta$ and $\mathbf{t}^*\in\mathcal{X}_{\mathbf{T}}$ such that
    the equality
    $\llbracket \phi\rrbracket(p_{\theta^*,\mathbf O},\mathbf t^*)= p_{\theta^*,\mathbf Y\mid\doop(\mathbf T=\mathbf t^*)}$ 
    $\mu_{\mathbf{Y}}$-almost everywhere does not hold.
    Let $\theta^\prime$ be the vector that collects all sampled
    parameters.
    Then, the probability that the falsifier accepts $\phi$
    at the sampled parameter $\theta^\prime$ satisfies
    \begin{equation}\label{eq:sharper-bound}
        \mathbb P\left(
        \forall\mathbf t\in\mathcal X_{\mathbf T},
        \llbracket \phi\rrbracket(p_{\theta^\prime,\mathbf O},\mathbf t)
        =
        p_{\theta^\prime,\mathbf Y\mid \doop(\mathbf T=\mathbf t)}
        \ \ \mu_{\mathbf Y}\text{-a.e.}
        \right)
        \le
        \frac{
        (2n-1)\left(1+\prod_{l=1}^c\gamma_l\right)
        }{
        \min_{1\le i\le d}|A_i|
        },
    \end{equation}
    where, for all $l\in\{1,\dots,c\}$,
    \begin{equation}\label{eq:gamma}
        \gamma_l
        \coloneqq
        \begin{cases}
            1, &\text{if operation $l$ is Gaussian marginalization},\\
            2, &\text{if operation $l$ is scalar addition, subtraction, multiplication or division},\\
            n+1, &\text{if operation $l$ is Gaussian conditioning.}
            \end{cases}
    \end{equation}
\end{restatable*}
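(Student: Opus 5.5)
The plan is to reduce the acceptance event to the vanishing of a single nonzero multivariate polynomial in the sampled parameters, and then apply the Schwartz--Zippel lemma. Following the exact-arithmetic setup of \zcref[S]{app:exact}, I take the coordinates of $\theta'$ to be sampled independently and uniformly from finite sets $A_1,\dots,A_d$. I take $n$ to be the number of nodes of the canonical directed acyclic graph, and I write the linear Gaussian model as $\mathbf V=B^\top\mathbf V+\varepsilon$ with $\varepsilon\sim\mathcal N(\nu,\Omega)$.

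\emph{Step 1: the target as a polynomial.} First I show that the mean and covariance entries of $p_{\theta',\mathbf Y\mid\doop(\mathbf T=\mathbf t)}$ are polynomials in $\theta'$ of degree at most $2n-1$. Truncation replaces $B$ by $B_{\mathbf T}$, with the incoming edges into $\mathbf T$ removed. Because $B_{\mathbf T}$ is nilpotent, $(I-B_{\mathbf T})^{-1}=\sum_{k<n}B_{\mathbf T}^k$, whose entries are polynomials of degree at most $n-1$. The covariance $(I-B_{\mathbf T})^{-\top}\Omega(I-B_{\mathbf T})^{-1}$ then has entries of degree at most $2n-1$. The mean entries, which are affine in $\mathbf t$, have degree at most $n$. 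The same bound $2n-1$ applies to the observational mean and covariance, which are the inputs to $\phi$.

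\emph{Step 2: the formula as a rational function.} By induction along the $c$ operations in the evaluation of $\phi$, each intermediate scalar is a ratio $P/Q$ of polynomials with $\max(\deg P,\deg Q)\le(2n-1)\prod_{l'\le l}\gamma_{l'}$.
\begin{enumerate}
\item Gaussian marginalization only extracts blocks, so the degree is unchanged and $\gamma_l=1$.
\item For a scalar operation on $P_1/Q_1$ and $P_2/Q_2$, the cleared numerator and denominator, for instance $P_1Q_2\pm P_2Q_1$ and $Q_1Q_2$, at most double the running bound, so $\gamma_l=2$.
\item Gaussian conditioning uses a Schur complement $\Sigma_{AA}-\Sigma_{AB}\Sigma_{BB}^{-1}\Sigma_{BA}$ and a regression $\Sigma_{AB}\Sigma_{BB}^{-1}$. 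Writing $\Sigma_{BB}^{-1}=\operatorname{adj}(\Sigma_{BB})/\det\Sigma_{BB}$ with $|B|\le n$ and a common denominator, every entry is a ratio whose degree is at most $n+1$ times the input degree, so $\gamma_l=n+1$.
\end{enumerate}
This step needs care over whether a common denominator is shared across entries. I would carry one common denominator per kernel, so that the bounds compose multiplicatively rather than additively.

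\emph{Step 3: a nonzero polynomial and Schwartz--Zippel.} Both sides are Gaussian, by the closure result in \zcref[S]{app:gaussian-closure}. Non-identifiability at $(\theta^*,\mathbf t^*)$ therefore means some mean entry or covariance entry differs. For a mean entry, which is affine in $\mathbf t$, I fix $\mathbf t=\mathbf t^*$ so that it is an identity in $\theta$ alone. Write that entry as $P/Q$ on the formula side and $R$ on the target side. The polynomial $F\coloneqq P-RQ$ is then nonzero, since $F(\theta^*)\neq0$ and $Q(\theta^*)\ne0$ by non-degeneracy. Its degree is at most $(2n-1)\prod_l\gamma_l+(2n-1)+(2n-1)\prod_l\gamma_l$, and I would tighten this to $(2n-1)(1+\prod_l\gamma_l)$ by bounding $\deg P$ and $\deg Q$ jointly by $(2n-1)\prod_l\gamma_l$. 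Acceptance at $\theta'$ forces agreement for all $\mathbf t$, hence in particular this entry at $\mathbf t^*$, hence $F(\theta')=0$ wherever $Q(\theta')\neq0$. The Schwartz--Zippel lemma for independent uniform draws from sets of size at least $\min_i|A_i|$ then gives \eqref{eq:sharper-bound}.

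The main obstacle is Step 2. The delicate points are the conditioning factor $n+1$, which requires the adjugate/determinant representation with a shared denominator, and the requirement that denominators do not vanish at $\theta'$. If a denominator can vanish at $\theta'$, I would handle that event by absorbing $Q$ into the vanishing polynomial, or by arguing via non-degeneracy that $Q$ is bounded away from zero on the sampled set.
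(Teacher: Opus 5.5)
Your proposal is correct and follows the paper's proof essentially step for step. Nilpotency of the coefficient matrix gives the base degree $h_0=2n-1$, and a shared-denominator induction using the adjugate/determinant form of $\Sigma_{\mathbf B\mathbf B}^{-1}$ gives the factors $1$, $2$ and $n+1$; a single nonzero mean or covariance discrepancy at $\mathbf t^*$, cleared of denominators, then goes into Schwartz--Zippel with degree $H+h_0$. Two small tidy-ups: the tightening to $(2n-1)(1+\prod_l\gamma_l)$ follows from $\deg(P-RQ)\le\max\{\deg P,\deg R+\deg Q\}$, and you should settle the denominator worry by non-degeneracy, as the paper does (sampled variances are positive, so $Q(\theta')\neq 0$), because absorbing $Q$ into the vanishing polynomial would add another $\deg Q$ and miss the stated constant.
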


\begin{proof}
Since $\phi$ is non-identifying and since both the distribution
returned by $\phi$ and the target interventional distribution
are Gaussian, there exist 
$\theta^*\in\Theta$ and $\mathbf t^*\in\mathcal X_{\mathbf T}$
such that at least one scalar entry of either 
$\mu^\phi(\theta^*,\mathbf{t}^*)-\mu^{\doop}(\theta^*,\mathbf{t}^*)$
or $\Sigma^\phi(\theta^*)-\Sigma^{\doop}(\theta^*)$ 
is non-zero.
Let $\Delta(\theta,\mathbf{t}^*)$ denote such an entry,
viewed as a rational function of $\theta$ after fixing 
$\mathbf{t}^*$, and write 
$\Delta(\theta,\mathbf{t}^*)=p_{\Delta}(\theta)/q_{\Delta}(\theta)$.
For all $\theta\in\Theta$, the corresponding Gaussian model
is non-degenerate, and the admissible operations in $\phi$
preserve non-degeneracy (see \zcref[S]{app:gaussian-closure});
hence, $q_\Delta(\theta)\neq0$.
Since $\Delta(\theta^*,\mathbf t^*)\neq0$, we have
$p_{\Delta}(\theta^*)\neq0$, and hence $p_{\Delta}\not\equiv0$.
Let
\[
E\coloneqq
\left\{
\forall\mathbf t\in\mathcal X_{\mathbf T},\ 
\llbracket \phi\rrbracket(p_{\theta^\prime,\mathbf O},\mathbf t)
=
p_{\theta^\prime,\mathbf Y\mid \doop(\mathbf T=\mathbf t)}
\ \ \mu_{\mathbf Y}\text{-a.e.}
\right\}
\]
be the event that the falsifier accepts $\phi$ at the
sampled parameter $\theta^\prime$.
If $E$ occurs, then the falsifier accepts, meaning that
all scalar mean and covariance 
discrepancies vanish for all intervention values.
In particular, the specific witness discrepancy
$\Delta(\theta^\prime,\mathbf t^*)$ evaluated at the 
sampled parameter must vanish.
Therefore,
\begin{equation*}
    E\subseteq\{p_{\Delta}(\theta^\prime)=0\}
    \quad\text{ and }\quad
    \mathbb P(E)\le \mathbb P(p_{\Delta}(\theta^\prime)=0).
\end{equation*}
By \zcref[S]{thm:schwartz-zippel}, it remains to bound
the degree of the non-zero polynomial $p_{\Delta}$.

We first bound the degrees of the Gaussian quantities from 
which the evaluation of $\phi$ starts. 
Recall that $B$ is the $n\times n$ matrix collecting all edge
coefficients.
Since $\mathcal{G}$ is
acyclic, for all integer $m\ge n$, $(B^\top)^m=0$, and therefore
\begin{equation*}
    (I-B^\top)^{-1}=I+\sum_{m=1}^{n-1}(B^\top)^m.
\end{equation*}
Hence each entry of $(I-B^\top)^{-1}$ is a polynomial in the edge
coefficients of degree at most $n-1$. 
For all $i,j\in\{1,\dots,n\}$, the $(i,j)$-th entry of the covariance matrix in \zcref[S]{eq:mean-covariance} is
\begin{equation*}
    \Sigma^{\mathrm{full}}_{ij}(\theta)=
    \sum_{k=1}^n[(I-B^\top)^{-1}]_{ik}\sigma_k^2[(I-B)^{-1}]_{kj},
\end{equation*}
where, for all $k\in\{1,\dots,n\}$, $\sigma_k^2$ is the 
variance of the $k$-th variable conditional on its parents.
Therefore, the corresponding degree is 
such that
\begin{equation*}
    \deg(\Sigma^{\mathrm{full}}_{ij}(\theta))\le(n-1)+1+(n-1)=2n-1.
\end{equation*}
Similarly, for all $i\in\{1,\dots,n\}$,
the $i$-th entry of the mean is
$\mu^{\mathrm{full}}_i(\theta)=
\sum_{k=1}^n[(I-B^\top)^{-1}]_{ik}\alpha_k$,
where, for all $k\in\{1,\dots,n\}$, $\alpha_k$ is the mean of the
$k$-th noise term. The corresponding degree is such that 
\begin{equation*}
    \deg(\mu^{\mathrm{full}}_i(\theta))\le (n-1)+1=n\le 2n-1.
\end{equation*}
Set $h_0\coloneqq2n-1$.
Every scalar entry of the mean and covariance
has therefore degree at most $h_0$ as polynomial in the parameters.
Equivalently, the observational Gaussian quantities can be
represented in shared-denominator form as
$\mu^{\mathrm{full}}(\theta)=m^{\mathrm{full}}(\theta)/1$ and
$\Sigma^{\mathrm{full}}(\theta)=A^{\mathrm{full}}(\theta)/1$,
where all numerator entries have degree at most $h_0$.
The same bound applies to the entries of
the target interventional quantities 
$\mu^{\doop}(\theta,\mathbf t^*)$ and $\Sigma^{\doop}(\theta)$,
because after intervention these entries are again 
obtained from a linear Gaussian submodel on at most $n$ nodes.

We now bound how degrees can grow during the evaluation of the
candidate formula $\phi$. 
Suppose that, at some intermediate stage, scalar rational
expressions have numerator and denominator degree at most $h$,
and every intermediate Gaussian mean and covariance is represented
in shared-denominator form as
\[
    \mu(\theta)=\frac{m(\theta)}{e(\theta)},
    \qquad
    \Sigma(\theta)=\frac{A(\theta)}{d(\theta)},
\]
where $m(\theta)$ is a vector of polynomial numerators,
$A(\theta)$ is a matrix of polynomial numerators, and
$e(\theta)$ and $d(\theta)$ are scalar polynomial denominators.
Assume that every entry of $m(\theta)$ and $A(\theta)$, and the
denominators $e(\theta)$ and $d(\theta)$, have degree at most $h$.
We claim that one primitive operation increases this degree
bound by at most the factor $n+1$.

Marginalization only selects subvectors and submatrices, and 
therefore does not increase degrees. 
Scalar addition, subtraction,
multiplication, and division increase $h$ by at most a factor of $2$. 
For example,
$p_1(\theta)/q_1(\theta)+p_2(\theta)/q_2(\theta)
=(p_1(\theta)q_2(\theta)+p_2(\theta)q_1(\theta))/
(q_1(\theta)q_2(\theta))$, 
and both numerator and denominator degrees are at most $2h$.
The same holds for multiplication and division.

Consider an intermediate Gaussian distribution on disjoint
variable sets $(\mathbf A,\mathbf B)$, 
with $|\mathbf{A}|=q\le n$ and $|\mathbf{B}|=p\le n$.
Write its mean vector and covariance matrix in shared-denominator
form as
\[
\mu(\theta)
=
\frac{1}{e(\theta)}
\begin{pmatrix}
m_{\mathbf A}(\theta)\\
m_{\mathbf B}(\theta)
\end{pmatrix},
\qquad
\Sigma(\theta)
=
\frac{1}{d(\theta)}
\begin{pmatrix}
A_{\mathbf A\mathbf A}(\theta) &
A_{\mathbf A\mathbf B}(\theta)\\
A_{\mathbf B\mathbf A}(\theta) &
A_{\mathbf B\mathbf B}(\theta)
\end{pmatrix}.
\]
where $\Sigma(\theta),A_{\mathbf{AA}}(\theta),A_{\mathbf{BB}}(\theta)$ are symmetric
and
$A_{\mathbf B\mathbf A}(\theta)=A_{\mathbf A\mathbf B}(\theta)^\top$.
For all $\mathbf{b}\in\mathcal{X}_{\mathbf{B}}$,
conditioning gives
\begin{align*}
    \mu_{\mathbf A\mid \mathbf B=\mathbf b}(\theta)
    &=
    \mu_{\mathbf A}(\theta)
    +
    \Sigma_{\mathbf A\mathbf B}(\theta)
    \Sigma_{\mathbf B\mathbf B}(\theta)^{-1}
    (\mathbf b-\mu_{\mathbf B}(\theta)),\\
    \Sigma_{\mathbf A\mid \mathbf B}(\theta)
    &=
    \Sigma_{\mathbf A\mathbf A}(\theta)
    -
    \Sigma_{\mathbf A\mathbf B}(\theta)
    \Sigma_{\mathbf B\mathbf B}(\theta)^{-1}
    \Sigma_{\mathbf B\mathbf A}(\theta).
\end{align*}

First, we treat the covariance update. Since
$\Sigma_{\mathbf B\mathbf B}(\theta)=
A_{\mathbf B\mathbf B}(\theta)/d(\theta)$, we have
\[
    \Sigma_{\mathbf B\mathbf B}(\theta)^{-1}
    =
    d(\theta)
    \frac{\operatorname{adj}(A_{\mathbf B\mathbf B}(\theta))}
    {\det(A_{\mathbf B\mathbf B}(\theta))}.
\]
We bound the degree of the determinant and the adjugate
entries. By the Leibniz formula,
\begin{equation*}
    \det(A_{\mathbf B\mathbf B}(\theta))
    =
    \sum_{\lambda\in \Lambda_p}
    \operatorname{sign}(\lambda)
    \prod_{\ell=1}^p
    (A_{\mathbf B\mathbf B}(\theta))_{\ell,\lambda(\ell)},
\end{equation*}
where $\Lambda_p$ is the set of all permutations of $\{1,\dots,p\}$.
For all $\lambda\in \Lambda_p$, the product contains $p$ factors, each
of degree at most $h$. Hence, each product has degree at most
$ph$. Taking a sum cannot increase the degree beyond the maximum
degree of the summands, so
\[
    \deg(\det(A_{\mathbf B\mathbf B}(\theta)))\le ph.
\]
For all $i,j\in\{1,\dots,p\}$, the $(i,j)$-th entry of the
adjugate matrix is a cofactor:
\begin{equation*}
    \operatorname{adj}(A_{\mathbf B\mathbf B}(\theta))_{ij}
    =
    (-1)^{i+j}
    \det\!\left((A_{\mathbf B\mathbf B}(\theta))^{(j,i)}\right),
\end{equation*}
where $(A_{\mathbf B\mathbf B}(\theta))^{(j,i)}$ is obtained by
deleting row $j$ and column $i$. This minor has size 
$(p-1)\times(p-1)$. Applying the same determinant argument to this
minor, each determinant term is a product of $p-1$ entries, each
of degree at most $h$. Hence each such product has degree at most
$(p-1)h$, and taking the sum over permutations cannot increase
the degree. Therefore,
\[
    \deg(\operatorname{adj}(A_{\mathbf B\mathbf B}(\theta))_{ij})
    \le
    (p-1)h.
\]

Substituting these expressions into the conditional covariance
gives
\[
\Sigma_{\mathbf A\mid \mathbf B}(\theta)
=
\frac{
A_{\mathbf A\mathbf A}(\theta)\det(A_{\mathbf B\mathbf B}(\theta))
-
A_{\mathbf A\mathbf B}(\theta)
\operatorname{adj}(A_{\mathbf B\mathbf B}(\theta))
A_{\mathbf B\mathbf A}(\theta)
}{
d(\theta)\det(A_{\mathbf B\mathbf B}(\theta))
}.
\]
Each entry of the first numerator term has degree at most
$h+ph=(p+1)h$. 
For the second numerator term, for all $i,j\in\{1,\dots,q\}$,
the $(i,j)$-th entry is a sum of products of the form
\[
(A_{\mathbf A\mathbf B}(\theta))_{iu}
\operatorname{adj}(A_{\mathbf B\mathbf B}(\theta))_{uv}
(A_{\mathbf B\mathbf A}(\theta))_{vj},
\]
with $u,v\in\{1,\dots,p\}$. Each such product has degree at most
$h+(p-1)h+h=(p+1)h$.
Since taking sums cannot increase the degree beyond the maximum
degree of the summands, every numerator entry has degree at most
$(p+1)h$. The denominator
$d(\theta)\det(A_{\mathbf B\mathbf B}(\theta))$ also has degree
at most $h+ph=(p+1)h$. Since $p\le n$, every entry of
$\Sigma_{\mathbf A\mid \mathbf B}(\theta)$ can be represented with
numerator and denominator degree at most $(n+1)h$.

The conditional mean is controlled by the same inverse block.
Since $\mathbf b$ is fixed, each component of
$\mathbf b-\mu_{\mathbf B}(\theta)$ can be written as
$(\mathbf b e(\theta)-m_{\mathbf B}(\theta))/e(\theta)$,
with numerator and denominator degree at most $h$.
Using the expression for
$\Sigma_{\mathbf B\mathbf B}(\theta)^{-1}$ above, we obtain
\[
\mu_{\mathbf A\mid \mathbf B=\mathbf b}(\theta)
=
\frac{
m_{\mathbf A}(\theta)\det(A_{\mathbf B\mathbf B}(\theta))
+
A_{\mathbf A\mathbf B}(\theta)
\operatorname{adj}(A_{\mathbf B\mathbf B}(\theta))
(\mathbf b e(\theta)-m_{\mathbf B}(\theta))
}{
e(\theta)\det(A_{\mathbf B\mathbf B}(\theta))
}.
\]
The first numerator term has degree at most
$h+ph=(p+1)h$. Each entry of the second numerator term is a sum
of products with degree at most
$h+(p-1)h+h=(p+1)h$. The denominator has degree at most
$h+ph=(p+1)h$. Thus, every entry of the conditional mean also
has numerator and denominator degree at most $(n+1)h$.

We showed that every primitive operation increases the current
degree bound by at most the factor $n+1$. Since the evaluation
length of $\phi$ is at most $c$, if, for all
$l\in\{0,\dots,c-1\}$, $h_l$ denotes the degree bound after $l$
primitive operations, then $h_{l+1}\le \gamma_{l+1}h_l$,
with $\gamma_{l+1}$ defined in \zcref[S]{eq:gamma}. Starting from
$h_0=2n-1$, we obtain $h_c\le(2n-1)\prod_{l=1}^c\gamma_l.$
Hence, every scalar entry of 
$\mu^\phi(\theta,\mathbf t^*)$ and $\Sigma^\phi(\theta)$ 
can be written as a rational function whose numerator and 
denominator have total degree at most
\begin{equation*}
    H\coloneqq(2n-1)\prod_{l=1}^c\gamma_l.
\end{equation*}

We now return to the polynomial $p_{\Delta}$. The discrepancy
$\Delta(\theta,\mathbf t^*)$ is the difference between 
one scalar entry produced by $\phi$ and the corresponding
scalar entry of the target interventional distribution.
Write the entry produced by $\phi$ as
$p_\phi(\theta)/q_\phi(\theta)$, where 
$\deg(p_\phi(\theta)),\deg(q_\phi(\theta))\le H$.
Write the corresponding target entry as 
$p_{\doop}(\theta)/q_{\doop}(\theta)$. By the previous
bound on the target interventional mean and covariance,
$\deg(p_{\doop}(\theta)),\deg(q_{\doop}(\theta))\le h_0=2n-1$.
We have
\[
\Delta(\theta,\mathbf t^*)
=
\frac{p_\phi(\theta)}{q_\phi(\theta)}
-
\frac{p_{\doop}(\theta)}{q_{\doop}(\theta)}
=
\frac{
p_\phi(\theta)q_{\doop}(\theta)
-
p_{\doop}(\theta)q_\phi(\theta)
}{
q_\phi(\theta)q_{\doop}(\theta)
}.
\]
Thus, we may take 
\[
p_\Delta(\theta)
=
p_\phi(\theta)q_{\doop}(\theta)
-
p_{\doop}(\theta)q_\phi(\theta).
\]
Since $\Delta(\theta^*,\mathbf t^*)\neq0$, we have
$p_\Delta(\theta^*)\neq0$, and hence $p_\Delta\not\equiv0$.
Each product has degree at most $H+h_0$, and the difference
cannot increase the degree. Hence,
$\deg(p_\Delta(\theta))\le H+h_0$.

Applying \zcref[S]{thm:schwartz-zippel} to the non-zero
polynomial $P_\Delta$ gives
\begin{equation*}
    \mathbb{P}(p_\Delta(\theta^\prime)=0)
    \le
    \frac{H+h_0}{\min_{1\le i\le d}|A_i|}
    =
    \frac{
    (2n-1)\left(1+\prod_{l=1}^c\gamma_l\right)
    }{
    \min_{1\le i\le d}|A_i|
    }.
\end{equation*}
Since $E\subseteq\{P_\Delta(\theta^\prime)=0\}$, this
concludes the proof of \zcref[S]{thm:bound}.
\end{proof}

\ifarxiv
    \appsection{Alternative implementation via the nested Markov model}\label{app:linear-sems-margs}
\else
    \section{Alternative implementation via the nested Markov model}\label{app:linear-sems-margs}
\fi

In 
\ifsupplementary
    Section~4,
\else
    \zcref[S]{sec:hiprof}, 
\fi
we instantiate the falsifier by replacing the input acyclic
directed mixed graph $\mathcal{G}^p$ with its canonical 
directed acyclic graph.
There are, however, infinitely many latent-variable 
directed acyclic graphs whose latent projection is $\mathcal{G}^p$.
An alternative is to avoid specifying the latent structure and 
to work directly with $\mathcal{G}^p$ and 
its maximal arid projection 
$\widetilde{\mathcal{G}}^p$, that is, 
a maximal arid graph on the same observed set as $\mathcal{G}^p$.
Aridity excludes certain graphical structures that prevent 
identifiability of the associated linear structural equation 
model \citep{Drton2011}. 
We refer to \citep{Shpitser2018} for a formal definition of maximal
arid projection and for an algorithm for computing it.

Let $\mathbf{X}_{\mathbf{O}}=(X_{O_1},\dots,X_{O_n})^\top$ denote 
the random vector associated with the observed variables $\mathbf{O}$.
We consider the linear Gaussian structural equation model 
associated with $\widetilde{\mathcal{G}}^p$:
\begin{equation*}
    \mathbf{X}_{\mathbf{O}}=B^\top\mathbf{X}_{\mathbf{O}}+\epsilon,
\end{equation*}
where $B=(b_{ij})\in\mathbb{R}^{n\times n}$ is such that, 
for all $i,j\in\{1,\dots,n\}$, $b_{ij}=0$ whenever $O_i\rightarrow O_j$
is not an edge of $\widetilde{\mathcal{G}}^p$, and where 
$\epsilon\sim\mathcal{N}(\boldsymbol{0},\Omega)$,
with $\Omega=(\omega_{ij})\in\mathbb{R}^{n\times n}$ positive definite 
and such that, for all $i,j\in\{1,\dots,n\}$ with $i\ne j$, 
$\omega_{ij}=0$ whenever $O_i\leftrightarrow O_j$ is not an edge of 
$\widetilde{\mathcal{G}}^p$. The resulting covariance matrix is
$\Sigma=(I_n-B)^{-\top}\Omega(I_n-B)^{-1}$. Under this 
parametrization, 
the falsifier can sample
a masked matrix $B$ and 
positive definite matrix $\Omega$, 
and then proceed as in 
\ifsupplementary
    Definition~5
\else
    \zcref[S]{def:falsifier}, 
\fi
but without choosing a particular latent structure. 
Let $\mathcal{K}\subseteq\{1,\dots,n\}$ be the set of indices 
such that $\mathbf{T}=\{O_k\mid k\in\mathcal{K}\}$. We can compute
the interventional density by setting, for all $k\in\mathcal{K}$ 
and all $i\in\{1,\dots,n\}$, $b_{ik}=0$, and setting, 
for all $k\in\mathcal{K}$ and 
$i\in\{1,\dots,n\}\setminus\{k\}$, $\omega_{ik}=\omega_{ki}=0$.

This construction is justified by the relationship between
maximal arid projections and nested Markov models.
Nested Markov models are graphical models 
for acyclic directed mixed graphs that capture not only the conditional
independences 
but also generalized equality constraints implied by
latent-variable models \citep[see][for an introduction]{Shpitser2014}.
The maximal arid projection $\widetilde{\mathcal{G}}^p$
of an acyclic directed mixed graph $\mathcal{G}^p$
defines the same nested Markov model as $\mathcal{G}^p$ 
\citep[][Theorem~30]{Shpitser2018}.
Moreover,
\citet[][Theorem~35]{Shpitser2018} show that the above linear Gaussian 
structural equation model associated with the maximal arid projection
$\widetilde{\mathcal{G}}^p$ of $\mathcal{G}^p$ coincides with the Gaussian 
nested Markov model of $\mathcal{G}^p$. 
Thus, the maximal arid projection gives a linear Gaussian
parametrization of exactly the Gaussian nested Markov model
associated with $\mathcal{G}^p$, rather than a potentially
smaller model associated with the linear Gaussian structural
equation model on $\mathcal{G}^p$ itself \citep[][Theorem~34]{Shpitser2018}.
Both implementations
(via the canonical directed acyclic graph or the maximal arid projection)
should be contrasted to specifying a particular latent-variable directed acyclic graph and parametrizing that.
Margins of latent-variable models may satisfy constraints beyond 
conditional independence, including generalized equality constraints, 
such as the Verma constraint 
\citep[][Section~6.9]{Verma1990, Spirtes2000}, 
and inequality constraints, 
such as the instrumental inequalities of \citet{Pearl1995b}. 
The implementation based on the canonical directed acyclic graph 
fixes one particular latent structure whose latent projection is
$\mathcal{G}^p$, 
and may therefore fail to represent inequality
constraints implied by other latent-variable directed acyclic graphs
with the same latent projection.
Conversely, the implementation based on the maximal arid projection
works with the Gaussian nested Markov model, which 
does not, in general, capture inequality constraints
(in the discrete case, nested Markov models capture all
equality constraints though; \citealp{Evans2018}).
Thus, there may exist distributions in the Gaussian nested Markov model
of $\mathcal{G}^p$ that do not arise as observable margins of any such
latent-variable model \citep{Shpitser2014}.
This is not an obstacle for verifying candidate observational formulas 
for interventional distributions, since identification
of interventionals
in latent-variable 
causal directed acyclic graphs is characterized at the level 
of the latent projection \citep{Richardson2023}.
If the goal were instead
to verify (in-)equality constraints,
then the latent structure
would have to be specified explicitly.

\ifarxiv
    \appsection{Numerical evaluation and exact arithmetic}\label{app:exact}
\else
    \section{Numerical evaluation and exact arithmetic}\label{app:exact}
\fi

\zcref[S]{thm:as-correct-verifier} gives
an almost-sure guarantee for the falsification-based verifier under exact evaluation and 
absolutely continuous parameter sampling. 
This guarantee does not apply in floating-point implementations,
where exact equality is replaced by comparison 
up to a tolerance $\eta>0$, and a non-zero discrepancy may
be treated as zero.
To see this, consider the linear Gaussian
model on $T\to X_1\to\dots\to X_r\to Y$, with $r\in\mathbb{N}$,
and all pairwise conditionals being of the form $V\mid W=w \sim \mathcal{N}(w/2, 1)$ where $\operatorname{pa}(V) = \{W\}$.
Under $\operatorname{do}(T=t)$, the coefficient of $t$ in the 
interventional mean of $Y$ is $(1/2)^{r+1}$. 
Suppose that an incorrect candidate formula sets this coefficient 
to zero. Then, at $t=1$, the absolute discrepancy in the mean 
is $(1/2)^{r+1}$. With tolerance $\eta=10^{-6}$, 
this discrepancy is below the tolerance as soon as
$(1/2)^{r+1}<10^{-6}$, which first occurs at $r=19$. 
Thus, a tolerance-based falsifier may accept an invalid formula
simply because the discrepancy is numerically small.
Further decreasing the tolerance is not a principled solution, since
floating-point arithmetic cannot represent arbitrarily 
small positive numbers, and 
underflow and rounding limit what can be distinguished numerically.
For sufficiently large $r$, 
a non-zero discrepancy may therefore be treated as exactly zero.
Thus, a naive floating-point implementation 
does not inherit the almost-sure one-sided guarantee.
The measure-zero result rules out accidental exact agreement under 
exact evaluation and absolutely continuous parameter sampling,
but says nothing about non-zero discrepancies that are hidden by underflow or a fixed tolerance.
Treating this as a routine, numerical nuisance
would disconnect the implementation from the theorem.
A reliable falsifier must either decide the polynomial identity problem symbolically
or control and analyze the additional error 
of a deliberate numerical implementation.

In the linear Gaussian case, we consider an implementation based
on exact arithmetic
(and sampling parameters from a finite set of integers), which avoids floating-point error and allows 
us to bound the probability of this implementation falsely accepting a non-identifying
formula (\zcref[S]{thm:bound}).

Suppose that $\mathcal{G}$ has $n$ nodes.
We consider a parametric
family 
$\{\{p_{\theta_j}(v_j\mid\mathbf{v}_{\operatorname{pa}(V_j)})\}_{\theta_j\in\Theta_j}\}_{V_j\in\mathbf V}$
where, for all $j\in\{1,\dots,n\}$ and all 
$\theta_j=(\alpha_j,\{\beta_{ij}\}_{i:V_i\in\operatorname{pa}(V_j)},\sigma_j^2)\in\Theta_j\subseteq\mathbb{R}^{d_j}$, 
with $\sigma_j^2>0$,
\begin{equation*}
    p_{\theta_j}(v_j\mid\mathbf{v}_{\operatorname{pa}(V_j)})
    =
    \mathcal{N}\left(
    v_j;\
    \alpha_j+\sum_{i:V_i\in\operatorname{pa}(V_j)}\beta_{ij}v_i,\
    \sigma_j^2
    \right).
\end{equation*} 
The parameter vector
$\theta=(\theta_1,\dots,\theta_n)\in\Theta\subseteq\mathbb R^d$,
with $d=\sum_{j=1}^n d_j$,
then collects all intercepts, edge coefficients and conditional variances. 
Define $\alpha=(\alpha_1,\dots,\alpha_n)^\top$,
$\Omega=\operatorname{diag}(\sigma_1^2,\dots,\sigma_n^2)$,
and the $n\times n$ matrix $B$, where, for all $i,j\in\{1,\dots,n\}$,
$i\ne j$, $B_{ij}=\beta_{ij}$ if $V_i\in\operatorname{pa}(V_j)$, 
and $B_{ij}=0$ otherwise.
Then the product of all Gaussian conditionals induces 
a joint Gaussian distribution with mean vector and covariance matrix
\begin{equation}\label{eq:mean-covariance}
    \mu^\mathrm{full}(\theta)=(I-B^\top)^{-1}\alpha,
    \qquad
    \Sigma^\mathrm{full}(\theta)=(I-B^\top)^{-1}\Omega(I-B)^{-1}.
\end{equation}

For fixed $\theta\in\Theta$ and $\mathbf{t}\in\mathcal{X}_{\mathbf{T}}$,
both the distribution returned by the candidate observational
formula $\phi$ and the target interventional distribution are
Gaussian. Therefore, checking that, for all $\mathbf{t}\in\mathcal{X}_{\mathbf{T}}$,
$\llbracket \phi \rrbracket(p_{\theta,\mathbf{O}},\mathbf{t})=p_{\theta,\mathbf{Y}\mid\doop(\mathbf{T}=\mathbf{t})}$ $\mu_{\mathbf{Y}}$-almost everywhere is 
equivalent to checking equality of their mean vectors 
and covariance matrices; see also \zcref[S]{subsec:falsification-procedure}. 
For all $\theta\in\Theta$ and $\mathbf{t}\in\mathcal{X}_{\mathbf{T}}$,
let $\mu^\phi(\theta,\mathbf{t})$ and 
$\Sigma^\phi(\theta)$ denote the mean vector and covariance matrix 
obtained from the candidate formula $\phi$, and let 
$\mu^{\operatorname{do}}(\theta,\mathbf{t})$ and 
$\Sigma^{\operatorname{do}}(\theta)$ denote the corresponding
quantities for the target interventional distribution.
Since, for all $\theta\in\Theta$,
$\Sigma^\phi(\theta)$ and $\Sigma^{\operatorname{do}}(\theta)$ are
independent of $\mathbf{t}$,
and since, for all $\theta\in\Theta$, $\mu^\phi(\theta,\mathbf{t})$
and $\mu^{\operatorname{do}}(\theta,\mathbf{t})$ are affine functions
of $\mathbf{t}$, it is enough to compare the covariance matrices
and compare the mean vectors at $|\mathbf{T}|+1$ affinely independent
intervention values.

In a linear Gaussian model, the entries of these vectors 
and matrices are rational functions
of the model parameters. Therefore, after clearing denominators,
verification reduces to checking whether the resulting
polynomial differences are identically zero.
This problem is known as polynomial identity testing 
\citep{Shpilka2010}, defined as follows.
\begin{definition}[Polynomial identity testing]\label{def:PIT}
    Let $\mathbb{F}$ be a field, and let 
    $\mathbb{F}[x_1,\dots,x_s]$ denote the set of 
    polynomials in $x_1,\dots,x_s$ 
    with coefficients in $\mathbb{F}$.
    Given $g\in\mathbb{F}[x_1,\dots,x_s]$, the \emph{polynomial identity testing} problem is to decide whether $g$ is identically zero, that is, $g\equiv0$.
\end{definition}

Designing efficient deterministic algorithms for
polynomial identity testing is an open problem in algebraic 
complexity theory. 
A standard alternative to computationally expensive symbolic solutions is randomized evaluation:
choose a finite sampling set $A\subseteq\mathbb{F}$, sample 
$x^\prime=(x_1^\prime,\dots,x_s^\prime)\in A^s$, 
and evaluate $g(x^\prime)$ exactly.
If $g(x^\prime)\neq 0$, then necessarily $g\not\equiv0$.
If $g(x^\prime)=0$, then either $g\equiv 0$, 
or $g\not\equiv 0$ and $x^\prime$ lies in the zero set of $g$.
The following result, due to \citet{DeMillo1978}, 
\citet{Zippel1979}, and \citet{Schwartz1980}, 
and widely known as the Schwartz-Zippel lemma,
bounds the probability of this 
latter event.
\begin{theorem}[Schwartz-Zippel]\label{thm:schwartz-zippel}
    Let $A\subseteq\mathbb{F}$ be a non-empty finite set. 
    For every non-zero polynomial $g\in\mathbb{F}[x_1,\dots,x_s]$
    of total degree at most $D$, if 
    $x^\prime=(x_1^\prime,\dots,x_s^\prime)$
    is sampled uniformly from $A^s$, then 
    \begin{equation*}
        \mathbb{P}\bigl(g(x^\prime)=0\bigr)\leq \frac{D}{|A|}.
    \end{equation*}
\end{theorem}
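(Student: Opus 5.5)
The statement to prove is the Schwartz--Zippel lemma itself. I would proceed by induction on the number of variables $s$, with the claim $\mathbb{P}(g(x')=0)\le D/|A|$ for every non-zero $g$ of total degree at most $D$.

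\emph{Base case $s=1$.} A non-zero univariate polynomial of degree at most $D$ over the field $\mathbb{F}$ has at most $D$ roots. This follows from the factor theorem, since $\mathbb{F}[x]$ has no zero divisors. Hence at most $D$ elements of $A$ are roots, and a uniform draw from $A$ hits a root with probability at most $D/|A|$.

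\emph{Inductive step.} Assume the claim for $s-1$ variables. Write $g$ as a polynomial in $x_1$ with coefficients in the remaining variables:
\[
g(x_1,\dots,x_s)=\sum_{i=0}^{k} x_1^{i}\,g_i(x_2,\dots,x_s),
\]
where $k$ is the largest power with $g_k\not\equiv 0$. Such a $k$ exists because $g\neq 0$, and $\deg g_k\le D-k$ since total degree bounds the degree of each monomial.

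Let $F_1$ be the event $g_k(x_2',\dots,x_s')=0$. The coordinates are independent and uniform on $A$, so $(x_2',\dots,x_s')$ is uniform on $A^{s-1}$. The induction hypothesis then gives $\mathbb{P}(F_1)\le (D-k)/|A|$. If $k=0$, $g=g_0$ depends only on $x_2,\dots,x_s$ and this case already finishes the argument.

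On the complement of $F_1$, fix the realized $(x_2',\dots,x_s')$. Then $x_1\mapsto g(x_1,x_2',\dots,x_s')$ is a non-zero univariate polynomial of degree exactly $k$. By independence, $x_1'$ is still uniform on $A$ conditionally on the other coordinates. The base case therefore gives
\[
\mathbb{P}\bigl(g(x')=0\mid x_2',\dots,x_s'\bigr)\le \frac{k}{|A|}
\]
on that event. Combining,
\[
\mathbb{P}(g(x')=0)\le \mathbb{P}(F_1)+\mathbb{P}\bigl(g(x')=0,\ F_1^c\bigr)\le \frac{D-k}{|A|}+\frac{k}{|A|}=\frac{D}{|A|}.
\]

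The only delicate step is this conditioning argument. It requires exact independence and uniformity of the coordinates, so that conditioning on $(x_2',\dots,x_s')$ leaves $x_1'$ uniform on $A$. It also requires choosing $k$ as the top degree in $x_1$ with a non-vanishing coefficient, so that the degrees of the two pieces add to at most $D$. The remaining steps are routine counting.
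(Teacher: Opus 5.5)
Your proof is correct. It is the classical induction on the number of variables, as in Schwartz's original argument. Choosing $g_k$ as the leading coefficient in $x_1$ makes the two error contributions $(D-k)/|A|$ and $k/|A|$ sum to exactly $D/|A|$. The conditioning step is justified, because sampling $x'$ uniformly from $A^s$ makes the coordinates independent and each uniform on $A$. Your handling of $k=0$ is also fine: on $F_1^c$ the univariate polynomial is then a non-zero constant with no roots.

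The paper gives no proof of this lemma. It attributes the result to DeMillo and Lipton, Zippel, and Schwartz and uses it as a black box, so there is no competing argument to compare against.

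One thing your route delivers that the citation glosses over concerns how the paper applies the lemma. In its false-acceptance bound for the exact-arithmetic falsifier, the sampling sets differ by coordinate: intercepts and edge weights come from $\{-M,\dots,M\}\setminus\{0\}$, and variances from $\{1,\dots,2M\}$. The bound is then stated with denominator $\min_{1\le i\le d}|A_i|$, which the single-set statement does not literally cover. Your induction handles this case unchanged. Take $x_i'$ independent and uniform on $A_i$. The univariate step then gives at most $k/|A_1|$, and the induction hypothesis gives at most $(D-k)/\min_{i\ge 2}|A_i|$. So the total is at most $D/\min_{i}|A_i|$, which is exactly the form the paper needs.
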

Our implementation based on exact arithmetic therefore avoids
numerical issues that floating-point implementations would 
incur.
It does not, however, rely on a full symbolic decision
procedure for polynomial identity testing, which would be 
computationally expensive. Instead, we use a randomized procedure,
which replaces the absolutely continuous parameter sampling
in \zcref[S]{thm:as-correct-verifier} by finite-set sampling,
and therefore changes the  guarantee.
A non-zero polynomial may vanish at the sampled point,
despite exact arithmetic being used,
if the sampled point is exactly a zero of the polynomial.
With \zcref[S]{thm:schwartz-zippel}, we can bound the 
probability of this happening
and in turn of the proposed falsifier falsely accepting a non-identifying formula as valid.

For all $i\in\{1,\dots,d_1,\dots,d_1+d_2,\dots,d\}$, that is,
for each scalar parameter in $\theta$, 
let $A_i$ be a non-empty finite sampling set, 
and sample the corresponding parameter uniformly from $A_i$.
In our implementation,
intercepts $\alpha_j$ and edge coefficients $\beta_{ij}$
are sampled from $\{-M,\dots,M\}\setminus\{0\}$ and the
conditional variances are sampled from 
$\{1,\dots,2M\}$, with $M\in\mathbb{N}\setminus\{0\}$.
All subsequent operations are then evaluated
using exact rational arithmetic.

To apply \zcref[S]{thm:schwartz-zippel}, we need to bound 
the degree of the polynomial obtained from the difference
between the candidate and target mean or covariance entries,
after clearing denominators. This degree depends on the operations
used to evaluate the candidate formula, so we introduce the
notion of evaluation length.
We say that $\phi$ has evaluation length at most $c$ if,
for all $\theta\in\Theta$ and $\mathbf{t}\in\mathcal{X}_{\mathbf{T}}$,
each scalar entry of $\mu^\phi(\theta,\mathbf{t})$ and
$\Sigma^\phi(\theta)$ can be obtained from 
$\mu^\mathrm{full}(\theta)$ and $\Sigma^\mathrm{full}(\theta)$
using at most $c$ primitive operations.
Here, the evaluation length is not the number of terms appearing
in the displayed formula $\phi$.
Rather, the marginalizations, conditionals, products,
and quotients appearing in $\phi$ are translated into operations
on Gaussian means and covariances.
For each scalar entry of the resulting mean and covariance, the
evaluation length counts the operations needed to compute the entry.
Each scalar addition, subtraction, multiplication, or division
is counted as one primitive operation.
Even though Gaussian marginalization and conditioning act on 
vectors and matrices, we count them as primitive operations as well,
and account for their effect on the degrees of the resulting 
scalar rational expressions.
Primitive operations contribute separately
to the bound below, proven in \zcref[S]{app:proof-bound}.

\falseacceptancebound

By the non-identification premise,
there is at least one non-zero scalar entry in
$\mu^\phi(\theta^*,\mathbf{t}^*)-\mu^{\doop}(\theta^*,\mathbf{t}^*)$
or $\Sigma^\phi(\theta^*)-\Sigma^{\doop}(\theta^*)$.
There may be several such entries, and the implementation
compares all covariance entries and all mean entries at $|\mathbf{T}|+1$
intervention values.
However, for the probability bound, one non-zero discrepancy is enough.
Fix one such discrepancy.
If the falsifier accepts, then all checked discrepancies
vanish at the sampled parameter value; in particular,
this fixed discrepancy must also vanish at the sampled parameter value.
Therefore, the proof of \zcref[S]{thm:bound} uses that
the false-acceptance event is contained in the event that 
one non-zero polynomial, obtained from this fixed discrepancy after 
clearing denominators, evaluates to zero.

The following example shows how an observational formula
is translated into operations on the Gaussian mean and covariance,
and how these operations determine the resulting 
false-acceptance bound.

\begin{example}
[Computing the bound for the front-door formula]
\label{ex:front-door-bound}
Fix $\theta\in\Theta$ and $\mathbf t\in\mathcal X_{\mathbf T}$, and consider the front-door formula
\begin{equation*}
    \int p_{\theta, \mathbf{Z}\mid\mathbf{T}}(\mathbf z\mid\mathbf t)\int p_{\theta,\mathbf Y\mid \mathbf T,\mathbf Z}(\mathbf y\mid \mathbf t',\mathbf z) p_{\theta,\mathbf{T}}(\mathbf t')\,\mathrm d\mathbf t'\mathrm d\mathbf z,
\end{equation*}
where all distributions are Gaussian.
We count the primitive operations needed to compute one
scalar entry of the covariance matrix returned by this formula,
and then show that the same bound also controls the mean entries.
We use this count to compute the bound in \zcref[S]{eq:sharper-bound}.

We first translate the front-door formula into the corresponding
Gaussian distribution.
Let $\mathbf W\coloneqq(\mathbf T,\mathbf Z)$ and define
$\Gamma_{\mathbf Y\mathbf W}\coloneqq\Sigma_{\mathbf Y\mathbf W}\Sigma_{\mathbf W\mathbf W}^{-1}=[\,\Gamma_{\mathbf Y\mathbf T}\ \Gamma_{\mathbf Y\mathbf Z}\,],$
where $\Gamma_{\mathbf Y\mathbf T}$ and
$\Gamma_{\mathbf Y\mathbf Z}$ are the blocks of the regression
coefficient of $\mathbf Y$ on the joint vector
$(\mathbf T,\mathbf Z)$ corresponding to $\mathbf T$ and
$\mathbf Z$, respectively. Therefore, for all
$\mathbf t'\in\mathcal X_{\mathbf T}$ and
$\mathbf z\in\mathcal X_{\mathbf Z}$,
\begin{align*} 
    p_{\theta,\mathbf Y\mid \mathbf T,\mathbf Z}(\mathbf{y}\mid\mathbf{t}^\prime,\mathbf{z}) 
    &= \mathcal{N}(\mu_{\mathbf{Y}}+\Gamma_{\mathbf Y\mathbf W}(\mathbf{w}-\mu_{\mathbf{w}}),\Sigma_{\mathbf{Y}\mid\mathbf{W}})\\ 
    &= \mathcal{N}(\mu_{\mathbf{Y}}+\Gamma_{\mathbf{YT}}(\mathbf{t}^\prime-\mu_{\mathbf{T}})+\Gamma_{\mathbf{YZ}}(\mathbf{z}-\mu_{\mathbf{z}}),\Sigma_{\mathbf{Y}\mid\mathbf{T},\mathbf{Z}}). \end{align*}
We use the following Gaussian affine-integration identity. For all
$r,s\in\mathbb N$, all $a\in\mathbb R^r$,
$B\in\mathbb R^{r\times s}$, $m\in\mathbb R^s$, all positive
definite $S\in\mathbb R^{r\times r}$ and
$V\in\mathbb R^{s\times s}$, and all
$\mathbf y\in\mathbb R^r$,
\[
    \int_{\mathbb R^s}
    \mathcal N(\mathbf y; a+B\mathbf x,S)
    \mathcal N(\mathbf x;m,V)\,d\mathbf x
    =
    \mathcal N(\mathbf y; a+Bm,S+BVB^\top).
\]
Since
$p_{\theta,\mathbf{T}}(\mathbf t')=
\mathcal N(\mu_{\mathbf T},\Sigma_{\mathbf T\mathbf T})$,
applying this identity to the inner integral gives, for all
$\mathbf z\in\mathcal X_{\mathbf Z}$,
\[
    \int p_{\theta,\mathbf Y\mid \mathbf T,\mathbf Z}(\mathbf y\mid \mathbf t',\mathbf z)
    p_{\theta,\mathbf{T}}(\mathbf t')\,\mathrm d\mathbf t'
    =
    \mathcal N\!\left(
    \mu_{\mathbf Y}
    +
    \Gamma_{\mathbf Y\mathbf Z}(\mathbf z-\mu_{\mathbf Z}),
    \Sigma_{\mathbf Y\mid\mathbf T,\mathbf Z}
    +
    \Gamma_{\mathbf Y\mathbf T}
    \Sigma_{\mathbf T\mathbf T}
    \Gamma_{\mathbf Y\mathbf T}^{\top}
    \right).
\]
We now evaluate the outer integral. Since
\[
    p_{\theta, \mathbf{Z}\mid\mathbf{T}}(\mathbf z\mid\mathbf t)
    =
    \mathcal N\!\left(
    \mu_{\mathbf Z}
    +
    \Gamma_{\mathbf Z\mathbf T}(\mathbf t-\mu_{\mathbf T}),
    \Sigma_{\mathbf Z\mid\mathbf T}
    \right),
    \qquad
    \Gamma_{\mathbf Z\mathbf T}
    \coloneqq
    \Sigma_{\mathbf Z\mathbf T}\Sigma_{\mathbf T\mathbf T}^{-1},
\]
applying the same affine-integration identity to the outer
integral gives
\[
    \llbracket \phi_{\mathbf Z}^{\mathrm{fd}}\rrbracket
    (p_{\theta,\mathbf O},\mathbf t)
    =
    \mathcal N(\mu^{\mathrm{fd}}(\theta,\mathbf t),
    \Sigma^{\mathrm{fd}}(\theta)),
\]
where
\begin{align*}
    \mu^{\mathrm{fd}}(\theta,\mathbf t)
    &=
    \mu_{\mathbf Y}
    +
    \Gamma_{\mathbf Y\mathbf Z}
    \Gamma_{\mathbf Z\mathbf T}
    (\mathbf t-\mu_{\mathbf T}),\\
    \Sigma^{\mathrm{fd}}(\theta)
    &=
    \Sigma_{\mathbf Y\mid\mathbf T,\mathbf Z}
    +
    \Gamma_{\mathbf Y\mathbf T}
    \Sigma_{\mathbf T\mathbf T}
    \Gamma_{\mathbf Y\mathbf T}^{\top}
    +
    \Gamma_{\mathbf Y\mathbf Z}
    \Sigma_{\mathbf Z\mid\mathbf T}
    \Gamma_{\mathbf Y\mathbf Z}^{\top}.
\end{align*}

We now count the primitive operations needed to compute one scalar
entry of $\Sigma^{\mathrm{fd}}(\theta)$. Let
$\tau\coloneqq|\mathbf T|$ and $\zeta\coloneqq|\mathbf Z|$.
For all $i,j\in\{1,\dots,|\mathbf Y|\}$, let $a_i$ and $a_j$
be the $i$-th and $j$-th rows of
$\Gamma_{\mathbf Y\mathbf T}$, and let $b_i$ and $b_j$ be the
$i$-th and $j$-th rows of $\Gamma_{\mathbf Y\mathbf Z}$.
Then
\[
    \Sigma^{\mathrm{fd}}_{ij}(\theta)
    =
    (\Sigma_{\mathbf Y\mid\mathbf T,\mathbf Z})_{ij}
    +
    a_i^\top\Sigma_{\mathbf T\mathbf T}a_j
    +
    b_i^\top\Sigma_{\mathbf Z\mid\mathbf T}b_j.
\]

For $r\in\mathbb N$, a scalar bilinear form
$u^\top Mv$, with $u,v\in\mathbb R^r$ and
$M\in\mathbb R^{r\times r}$, can be computed by first computing
$Mv$ and then multiplying by $u^\top$. Computing $Mv$ requires
$r^2$ scalar multiplications and $r(r-1)$ scalar additions.
Multiplying the result by $u^\top$ requires $r$ scalar
multiplications and $r-1$ scalar additions. Hence one such
bilinear form requires
\[
    r^2+r(r-1)+r+(r-1)=2r^2+r-1
\]
scalar primitive operations.
Therefore, for all $i,j\in\{1,\dots,|\mathbf Y|\}$,
the term $a_i^\top\Sigma_{\mathbf T\mathbf T}a_j$ requires
$2\tau^2+\tau-1$ scalar primitive operations, and the term
$b_i^\top\Sigma_{\mathbf Z\mid\mathbf T}b_j$ requires
$2\zeta^2+\zeta-1$ scalar primitive operations. Adding the
three scalar terms in $\Sigma^{\mathrm{fd}}_{ij}(\theta)$
requires two additional scalar additions. Thus, after the
two Gaussian conditioning operations used to obtain
$\Sigma_{\mathbf{Y}\mid\mathbf{T},\mathbf{Z}}$ and $\Sigma_{\mathbf{Z}\mid\mathbf{T}}$, the number of
scalar primitive operations, also accounting for the final $2$ additions,
needed for one covariance entry is
\[
    (2\tau^2+\tau-1)+(2\zeta^2+\zeta-1)+2
    =
    2\tau^2+\tau+2\zeta^2+\zeta.
\]

We now turn to the primitive operations needed to compute one mean
entry. For all $i\in\{1,\dots,|\mathbf Y|\}$, let $b_i$ be the
$i$-th row of $\Gamma_{\mathbf Y\mathbf Z}$. Then
\[
    \mu^{\mathrm{fd}}_i(\theta,\mathbf t)
    =
    \mu_{\mathbf Y,i}
    +
    b_i^\top
    \Gamma_{\mathbf Z\mathbf T}
    (\mathbf t-\mu_{\mathbf T}).
\]
Computing $\mathbf t-\mu_{\mathbf T}$ requires $\tau$ scalar
subtractions. Multiplying this vector by
$\Gamma_{\mathbf Z\mathbf T}$ requires $\zeta\tau$ scalar
multiplications and $\zeta(\tau-1)$ scalar additions. Multiplying
the result by $b_i^\top$ requires $\zeta$ scalar multiplications
and $\zeta-1$ scalar additions, and adding
$\mu_{\mathbf Y,i}$ requires one additional scalar addition.
Hence one mean entry requires
\[
    \tau+\zeta\tau+\zeta(\tau-1)+\zeta+(\zeta-1)+1
    =
    \tau+2\zeta\tau+\zeta
\]
scalar primitive operations. Since
$\tau+2\zeta\tau+\zeta\le2\tau^2+\tau+2\zeta^2+\zeta$,
the covariance-entry count also upper bounds the number of scalar
operations needed to compute one mean entry.

By \zcref[S]{eq:gamma}, each primitive operation contributes
a multiplicative factor describing how much it can increase 
the current degree bound.
Gaussian marginalizations contribute a factor $1$, 
the two Gaussian conditioning operations
contribute factors $n+1$ each, and each scalar operation
contributes factor $2$. Thus, for both the covariance and mean
entries of the front-door formula,
\[
    \prod_{l=1}^c\gamma_l
    \le
    (n+1)^2
    2^{\,2\tau^2+\tau+2\zeta^2+\zeta}.
\]
If $\tau=\zeta=1$, $n=10$ and 
$\min_{1\le i\le d}|A_i|=2^{64}-1$, the upper 
bound in \zcref[S]{eq:sharper-bound} then is approximately 
$7.98\times 10^{-15}$. 
\end{example}

The bound in \zcref[S]{eq:sharper-bound} is valid for the 
Gaussian parametrization using mean vector and covariance matrix.
Other parametrizations can lead to different bounds, because 
the primitive operations may have different algebraic cost.
For example, in canonical form, one represents a Gaussian
by its precision matrix and information vector.
In this parametrization, conditioning is comparatively 
cheap, while marginalization is more expensive.

We can reduce the bound in \zcref[S]{eq:sharper-bound} by increasing 
the cardinalities of the sampling sets. 
This is possible with arbitrary precision integer sampling,
but it may increase the cost of exact
arithmetic, because larger sampled integers can lead to rational
computations with larger bit lengths. 
In our examples, we did not observe
a substantial slowdown, but we also provide a floating-point
implementation for cases in which exact evaluation becomes
computationally expensive.
The bound can also be reduced 
by repeating the test independently. Let $\delta$ be this bound, 
truncated at $1$. Then one run falsely accepts a
non-identifying formula with probability at most $\delta$, and $K$
independent repetitions falsely accept it in all runs 
with probability at most $\delta^K$.

\ifarxiv
    \appsection{The front-door criterion}\label{app:frontdoor}
\else
    \section{The front-door criterion}\label{app:frontdoor}
\fi

The \emph{front-door criterion} \citep[][Section~3.2]{Pearl1995}
gives graphical conditions under which a candidate set 
$\mathbf Z\subseteq\mathbf{O}\setminus(\mathbf{T}\sqcup\mathbf{Y})$
allows to identify $\mathbf{Y}\mid\doop(\mathbf{T})$
with the corresponding front-door formula:
\begin{equation}\label{eq:front-door-formula}
    \sum_{\mathbf{z}}p(\mathbf{z}\mid \mathbf t)\sum_{\mathbf t'}p(\mathbf y\mid \mathbf t', \mathbf{z})\,p(\mathbf t').
\end{equation}
In particular, a set of variables $\mathbf{Z}$ 
satisfies the front-door criterion if:
\begin{enumerate}[label=(\roman*)]
    \item $\mathbf{Z}$ intercepts all directed paths from $\mathbf T$ to $\mathbf Y$;
    \item there is no unblocked back-door path from $\mathbf T$ to $\mathbf{Z}$; 
    \item all back-door paths from $\mathbf{Z}$ to $\mathbf Y$ are blocked by $\mathbf T$.
\end{enumerate}
Whenever these conditions hold, the corresponding front-door formula 
is identifying for $\mathbf{Y}\mid\doop(\mathbf{T})$.

\subsection{Identification by a front-door formula with a set not satisfying the front-door criterion}\label{app:proof-front-door}

Consider the acyclic directed mixed graph in 
\zcref[S]{fig:front-door}. 
We show that, even though $\{M,A\}$ does not satisfy the 
front-door criterion, the corresponding front-door formula 
$$\phi_{\{M,A\}}^\mathrm{fd}=\sum_{m,a}p(m,a\mid t)\sum_{t^\prime}p(y\mid t^\prime,m,a)p(t^\prime)$$
is identifying for $Y\mid\doop(T)$
in the corresponding canonical directed acyclic graph,
obtained by replacing the bidirected edge with a node $U$ 
such that $T\leftarrow U\rightarrow Y$.
The same holds for $\{M,C\}$ and the proof is analogous.

By the truncated factorization formula 
\citep[][Section~1.3]{Pearl2009},
\begin{equation*}
    p(y\mid\doop(t))=\sum_{m,a,u,b,c}p(u)p(b)p(m\mid t)p(a\mid b)p(c\mid b)p(y\mid m,a,c,u).
\end{equation*}

Since $M$ and $A$ are $d$-separated by $T$ and $A$ and $T$ are $d$-separated by the empty set,
\begin{equation}
    p(m, a\mid t)=p(m\mid t)p(a\mid t)=p(m\mid t)p(a).
\end{equation}
Moreover,
\begin{align}
    \sum_{t^\prime}p(y\mid t^\prime,m,a)p(t^\prime)
    &=\sum_{t^\prime}\sum_{u,b,c}p(y,u,b,c\mid t^\prime,m,a)p(t^\prime)\\
    &=\sum_{t^\prime}\sum_{u,b,c}p(y\mid t^\prime,m,a,u,b,c)p(u,b,c\mid t^\prime,m,a)p(t^\prime)\\
    &\overset{\mathclap{(1)}}{=}\sum_{t^\prime}\sum_{u,b,c}p(y\mid m,a,u,c)\frac{p(u)p(b)p(t^\prime\mid u)p(m\mid t^\prime)p(a\mid b)p(c\mid b)}{p(t^\prime)p(m\mid t^\prime)p(a)}p(t^\prime)\\
    &=\sum_{t^\prime}\sum_{u,b,c}p(y\mid m,a,u,c)p(u)p(t^\prime\mid u)p(b)\frac{p(a\mid b)}{p(a)}p(c\mid b)\\
    &\overset{\mathclap{(2)}}{=}\sum_{u,b,c}p(y\mid m,a,u,c)p(u)p(b\mid a)p(c\mid b),
\end{align}
where in $(1)$ we used that $Y$ is $d$-separated from $T$ and $B$ 
given $\{M,A,U,C\}$, and in $(2)$ we used that, for all $u$, 
$\sum_{t^\prime}p(t^\prime\mid u)=1$ and $p(b)p(a\mid b)/p(a)=p(b\mid a)$.
Therefore,
\begin{align}
    \phi_{\{M,A\}}^\mathrm{fd}
    &=
    \sum_{m,a,u,b,c}p(m\mid t)p(a)p(y\mid m,a,u,c)p(u)p(b\mid a)p(c\mid b)\\
    &\overset{\mathclap{(3)}}{=}
    \sum_{m,a,u,b,c}p(m\mid t)p(u)p(b)p(a\mid b)p(c\mid b)p(y\mid m,a,u,c)\\
    &=p(y\mid\doop(t)),
\end{align}
where in $(3)$ we used that $p(a)p(b\mid a)=p(b)p(a\mid b)$.

\ifarxiv
    \appsection{Recovering all identifying formulas in a finite class}\label{app:sound-and-complete}
\else
    \section{Recovering all identifying formulas in a finite class}\label{app:sound-and-complete}
\fi

The gateway test described in
\ifsupplementary
    Section~5
\else
    \zcref[S]{sec:gateway} 
\fi
is sound and exhaustively complete relative to 
$\Phi_{\mathbf{Y},\mathbf{T}}^\mathrm{fd}$.
The underlying idea applies more generally to any finite
class of candidate observational formulas:
enumerate the formulas, verify each one,
and return exactly those that are verified to be identifying for 
$\mathbf{Y}\mid\doop(\mathbf{T})$.
More precisely, let 
$\widetilde\Phi_{\mathbf{Y},\mathbf{T}}\subseteq\Phi_{\mathbf{Y},\mathbf{T}}$
be a finite class of observational formulas.
Each $\phi\in\widetilde\Phi_{\mathbf{Y},\mathbf{T}}$ is verified
in turn, and passes the test if it is
verified as identifying.
With an exact verifier, this procedure returns all and 
only the identifying formulas in
$\widetilde\Phi_{\mathbf{Y},\mathbf{T}}$.
If one instead uses the falsifier from \zcref[S]{sec:hiprof},
then the procedure is almost-surely sound while it is exhaustively 
complete relative to $\widetilde\Phi_{\mathbf{Y},\mathbf{T}}$
for the conditional exponential family chosen in the falsification routine.

Graphical criteria that are sound and exhaustively complete relative
to $\widetilde\Phi_{\mathbf{Y},\mathbf{T}}$ may therefore be viewed
as algorithmic shortcuts to this exhaustive procedure.
However, this procedure remains applicable even when the 
graphical criterion is sound but not complete with respect
to the formula class
(as is the case for the front-door criterion with respect 
to the class of front-door formulas; see 
\ifsupplementary
    Section~5),
\else
    \zcref[S]{sec:gateway}),
\fi
or when no such graphical criterion is available.
For instance, as discussed in \zcref[S]{app:adjustment},
sound and exhaustively complete graphical criteria relative to
the class of adjustment formulas 
$\Phi_{\mathbf{Y},\mathbf{T}}^{\mathrm{adj}}$
defined in \zcref[S]{eq:adjustment-class} are 
available for directed  acyclic graphs, 
maximal ancestral graphs, or their equivalence classes. 
For acyclic directed mixed graphs, however, no graphical
criterion is currently known to be both sound and 
exhaustively complete relative to
$\Phi_{\mathbf{Y},\mathbf{T}}^{\mathrm{adj}}$. 
Verification enables us to fill this gap.

\end{document}